\documentclass{article}
\usepackage{graphicx} 
\usepackage{amsfonts} 
\usepackage{helvet}         
\usepackage{courier}        
\usepackage{amsmath,bm}
\usepackage{amsthm}
\usepackage{amssymb}
\usepackage{subfigure}
\usepackage{blindtext}
\usepackage{apacite}
\usepackage{comment}
\usepackage{natbib}
\usepackage{type1cm}        
\usepackage[toc,page]{appendix}
\usepackage{makeidx}         
\usepackage{multicol}        
\usepackage{graphicx}      
\usepackage[bottom]{footmisc}
\usepackage{colortbl,booktabs}
\usepackage{tablefootnote}
\usepackage[dvipsnames]{xcolor}

\usepackage{enumitem} 
\usepackage{algorithm}
\usepackage{algpseudocode}
\usepackage{chngcntr}

\usepackage{booktabs}
\usepackage{tabularx}
\usepackage{adjustbox}
\usepackage{array}
\usepackage{threeparttable}
\usepackage{pifont}

\usepackage{hyperref}         
\usepackage{glossaries}
\newacronym{dp}{DP}{Dirichlet process}
\newacronym{degmfr}{DEG-MFR}{Dynamic Extended Generalized Many-Facet Rasch}
\newacronym{gmfr}{G-MFR}{Generalized Many-Facet Rasch}
\newacronym{egmfr}{EG-MFR}{Extended Generalized Many-Facet Rasch}
\newacronym{irt}{IRT}{Item Response Theory}
\newacronym{mcmc}{MCMC}{Monte Carlo Markov Chain}
\newacronym{mfr}{MFR}{Many-Facet Rasch}
\newacronym{pyp}{PYP}{Pitman-Yor process}
\newacronym{emfr}{E-MFR}{Extended Many-Facet}
\usepackage{bbm}
\usepackage{authblk}
\usepackage{tikz}
\usetikzlibrary{arrows.meta}

\newcommand{\indep}{\perp \!\!\! \perp}
\newcolumntype{M}[1]{>{\centering\arraybackslash}m{#1}}

\newtheorem{prop}{Proposition}

\usepackage{setspace}
\usepackage[top=2.5cm, bottom=2.5cm, left=0.7in, right=0.7in, includefoot]{geometry}

\title{Dynamic Extended Generalized Many-Facet Rasch Model: A Unified Approach to Rater-Mediated Assessment}

\author[1,2]{Giuseppe Mignemi} 
\author[2]{Patr{\'i}cia Martinkov{\'a}} 
\author[3]{Sally Paganin}
\affil[1]{Department of Human and Social Sciences, Kore University}
\affil[2]{Institute of Computer Science, Czech Academy of Science}
\affil[3]{Department of Statistics, The Ohio State University}

\date{}

\usepackage{multibib}
\makeatletter
\def\@mb@citenamelist{cite,citep,citet,citealp,citealt,citeauthor,citeyear}
\makeatother
\newcites{App}{Appendix References}

\begin{document}

\maketitle

\begin{abstract}
Rating procedures are crucial in psychological and medical diagnosis, educational assessment, peer review, and grant evaluation. A major methodological challenge is disentangling multiple sources of variability in ratings, such as subject ability, rater behavior, and item characteristics. Item Response Theory-based models, most notably many-facet Rasch models and hierarchical rater models, have been widely adopted for jointly estimating subject, rater, and item effects, while retaining high interpretability. We introduce the Dynamic Extended Generalized Many-Facet Rasch model, a general class of Bayesian semiparametric models that integrate and extend existing approaches. Its measurement model allows location and discrimination parameters to depend jointly on item and rater characteristics and incorporates subject--item and rater--subject variation. The dynamic extension introduces a rater-specific cross-lagged effect capturing dependence on the previously rated subject, and a Bayesian P-spline capturing systematic changes in discrimination over the rating sequence, accounting for learning, fatigue, or adaptation effects. Subject heterogeneity is addressed through Bayesian nonparametric mixtures, flexibly learning the latent ability distribution without strict parametric assumptions and enabling model-based subject clustering. We also derive a closed-form, model-based inter-rater consensus coefficient and study its characteristics. We discuss theoretical properties, develop posterior computation strategies, and evaluate the model in simulation studies and two real-data applications.
\end{abstract}

\section{Introduction}
Rating procedures are crucial in several applied scientific fields, such as psychological and medical diagnosis \citep{Nelson2015}, educational assessment \citep{mignemi2025BNP, Casabianca_2015}, grant review processes \citep{Erosheva2021, Martinkova2023}, and personnel selection \citep{Martinkova2018}.
They involve an observer, commonly called a rater (e.g., a psychologist, doctor, teacher, or selection committee member), assessing a subject's latent attribute (e.g., patient disease severity, student proficiency, or applicant quality) on a continuous or ordered rating scale. In some contexts, raters are asked to independently score each subject on a single item \citep{mignemi2025BNP, Mignemi_Chen_Moustaki_2025, Martinkova2023, Erosheva2021, Martinkova2018}, which reduces rater burden, but limits the amount of information available for modeling rater effects and measurement error \citep{martinkovabook}. For this reason, evaluations comprise multiple items or rubric dimensions designed to capture distinct aspects of the same latent attribute \citep{Uto_24,patz1999applications}. The association between this attribute and the observed scores depends on the characteristics of both the item and the rater \citep{Verhelst2001}. The psychometric properties of the items and the raters' variability can pose reliability concerns, meaning that ratings may not be consistent or stable across different raters or occasions, leading to uncertainty about the quality of ratings \citep{Martinkova2023}. Accounting for these rater-related sources of bias is essential to ensure fair score interpretation and use \citep{AERA2014}.

Consequently, several statistical methods have been proposed to estimate the latent attributes of the subjects under examination; we refer the reader to \cite{wang2019} for a comprehensive review of these methods. They aim to assess and improve rating accuracy by jointly estimating subject and rater features (e.g., subject latent trait and rater severity) and item characteristics (e.g., item difficulty) when multiple items are involved in the rating procedure. \gls{irt} provides an interpretable framework for simultaneously accounting for these sources of variability in ratings. 

The \gls{mfr} model \citep{linacre1989many} is one of the most widely used \gls{irt} models for rater-mediated assessments. It extends the Rasch model by simultaneously accounting for item difficulty and rater severity \citep{lunz1998measurement}. This method has gained widespread popularity and has been widely applied to performance assessment \citep[e.g.][]{polat_2022_multifacet, peng_2019_multifacet}. However, it relies on strong assumptions regarding both the items and the raters. Under the MFR model, all raters are assumed to be equally consistent, and the items share the same discrimination parameter; both assumptions may be unrealistic in real rating contexts \citep{Uto_2016}. 
While \acrfull{emfr} models \citep[e.g.,][]{Eckes2017, An_Extended_MFRM} capture 
raters' severity shifts and response style, they are still assumed to be equally consistent across items and subjects.
\acrfull{gmfr} \cite[e.g.,][]{Uto_2016, Uto_2020, Uto_2022} relax most of these assumptions by allowing raters to differ in consistency and items to differ in discrimination. These models have often been formulated in a Bayesian framework, with corresponding methods developed for posterior computation. 

While these methods offer several advantages, three limitations motivate the present work. First, commonly used \gls{gmfr} specifications assume a standard normal distribution for the latent attribute, although empirical latent distributions may exhibit skewness, multimodality, or other forms of heterogeneity \citep{Uto_24, mignemi2025BNP, Mignemi2024}. Second, rater-specific severity is typically modeled as constant across subjects, which may be unrealistic in the presence of differential rater bias \citep{Molenaar2021}. Such local biases may compromise comparability among subjects and affect the accuracy of latent-trait estimates.  
Third, rater-specific severity and consistency are assumed to be independent of the sequence in which subjects are rated. Rater behavior may change over the course of an assessment, and ratings of a current subject may depend on previously evaluated subjects \citep{Huang2025rater, Halo_IRT_2025raters}. For example, the score given to a subject might be influenced by the attribute level of the immediately preceding subject, since raters may implicitly compare them \citep{tversky1974judgment}. In rater-mediated assessments, each rater may exhibit an anchoring tendency, similar to that observed with self-report scales, which can affect accuracy \citep{Bolt2022anchoring}.

\subsection{Our Contribution}

We introduce a flexible class of Bayesian semiparametric models to overcome the limitations of existing approaches and provide an interpretable statistical framework for rater-mediated assessments. Our formulation explicitly distinguishes between the measurement model (MM), which describes how observed ratings arise from subject, item, and rater characteristics and their interactions, and the structural model (SM), which characterizes the distribution of the subjects' latent trait. In the SM, we account for subject heterogeneity using Bayesian nonparametric mixture priors based on the \gls{dp} and the more general \gls{pyp}, allowing the shape of the latent-trait distribution to be learned from the data. 
In the MM, we model the ordinal ratings through a Generalized Partial Credit response function \citep{Muraki_1992} with parameters that capture subject-item, rater-item, and rater-subject variation, as well as item- and rater-specific category effects and discrimination. The resulting \gls{egmfr} model integrates and extends components of existing \gls{gmfr}, Multiple Rater IRT \citep{Verhelst2001} and hierarchical rater formulations \citep{Uto_24, Uto_2020, Patz2002, Molenaar2021}. 

We then extend the \gls{egmfr} model to account for sequential dependencies and changes in rater behavior over the rating process. The 
\gls{degmfr} model introduces a rater-specific cross-lagged effect through which the local bias for the current subject may depend on the latent trait of the immediately preceding subject. With multiple raters, these dependencies induce a network in which the vertices represent the subjects' latent traits and edges represent rater-specific sequential dependencies. In addition, discrimination is allowed to change smoothly with rating position through a Bayesian P-spline \citep{heinzl2012,BDA3}, accommodating systematic patterns such as learning, adaptation, recalibration, or fatigue. Unlike models of rater severity drift \citep{Uto_2022}, the proposed formulation distinguishes sequential dependence on the previously evaluated subject from smooth changes in discrimination across the rating sequence.

Together, these extensions have three methodological consequences. First, separating measurement and structural components provides a modular framework in which additional covariates or distributional assumptions can be incorporated at the appropriate level; an explanatory extension is described in Appendix~\ref{sec:explanatory}. Second, the subject-item component allows uncertainty in the subject's latent trait to reflect information contributed jointly by items and raters, thus addressing the information-accumulation issue identified by \cite{mariano2002critique}. Third, the model yields rater- and subject-specific posterior summaries that can be used to investigate heterogeneity in severity, discrimination, consistency, category use, and sequential effects. The framework further provides a model-based approach to inter-rater consensus. We derive a closed-form expression for the pairwise correlation between raters' bias-adjusted latent evaluations of the same subject (Section~\ref{Agreement}). Unlike conventional agreement measures applied directly to observed ratings, this quantity summarizes agreement at the latent level after accounting for item characteristics and rater-specific sources of variation.

Posterior inference is implemented in \textit{nimble} (\textsc{R} package \texttt{nimble}, \citealp{nimble-software2026}). Because the dynamic model induces strong posterior dependencies through its cross-lagged structure, we develop customized MCMC updates to improve sampling efficiency. 
We illustrate the framework in an olfactory assessment with many raters, few items, and explicitly recorded stimulus order. A simulation study evaluating parameter recovery and the contribution of the different model components, and a second application to a clinical performance assessment with few raters and many items, are reported in Appendices~\ref{sim} and~\ref{application}, respectively.

\subsection{Paper Outline}
The remainder of the paper is organized as follows: Sections \ref{E-GMFRM} and \ref{Dynamic} introduce the \gls{egmfr} and \gls{degmfr} models, respectively. Section~\ref{Agreement} introduces the latent inter-rater consensus coefficient and studies its properties. Section \ref{posterior} describes the posterior inference. Section \ref{Odor} presents an application to olfactory perception. Section ~\ref{Discussion} discusses the main findings, limitations, and directions for future research. The simulation study (Appendix~\ref{sim}), an application to health-care professional assessment (Appendix~\ref{application}), and additional methodological details, extensions, and empirical results are provided in the Appendix and Supplementary Materials, available online through the link: \url{https://osf.io/rt8pc/overview?view_only=66b70abd01b642778051f2fdace8ea0c}

\section{Extended Generalized Many-Facet Rasch Model}\label{E-GMFRM}
In this section, we present the \acrfull{egmfr}. The first level of the model comprises the measurement model, in which the probability of observing a score is determined by the characteristics of the item, the features of the rater, and the latent trait of the subject through a response function. The distribution of the subjects' latent traits is specified in the structural model. Additional remarks are given at the end of the section, while identifiability is discussed in Appendix~\ref{identifiability}. 

\subsection{Rating setting}
Consider $N$ subjects who are independently assessed on $I$ items by $R$ raters. Each subject is rated by a subset of raters $\mathcal{R}_j \subseteq \mathcal{R}$ on the same set of items $i=1,\dots, I$. We let $\mathcal{J}_r = \{ j: r \in \mathcal{R}_j\}$ be the set of subjects scored by rater $r \in \mathcal{R}$. We consider the set of subjects and the set of raters to be disjoint, that is $\mathcal{J} \; \cap \; \mathcal{R} = \emptyset$. This implies that subjects do not act as raters, and raters are not themselves rated. 
Let $Y_{jir}$, for $j=1,\dots, N$, $r \in \mathcal{R}_j$ and $i=1,\dots, I$, be a categorical random variable recording the score assigned to subject $j$ by rater $r$ on item $i$. We assume that $Y_{jir} \in \{0, 1, \dots, K\}$, so that all items comprise the same range of $K+1$ ordered categorical values. We further assume that the items measure a unidimensional construct modeled by a subject-specific latent trait $\theta_j \in \mathbb{R}$.

 \subsection{Measurement Model}\label{Measurement:1}
We let the probability that subject $j=1,\dots, N$ receives a score $Y_{jir}=k \in \{0,1,\dots, K\}$ on item $i=1, \dots, I$ from rater $r \in \mathcal{R}_j$ jointly depend on the subject's latent trait (e.g., subject's proficiency, disease severity) and the interaction between the characteristics of the item and those of the rater. We assume that, for each rater, the probability of choosing the categorical score $k$ over $k-1$ is governed by a logistic adjacent-category response formulation \citep{masters1982rasch}:
\begin{equation}\label{eq:adiacent_logit}
\mathbb{P}[Y_{jir}=k|Y_{jir} \in \{k,k-1\},\theta_{j}, \phi_{ir}, \delta_{jirk}]= 
\frac{exp\{\phi_{ir}(\theta_{j} - \delta_{jirk})\}}{1+exp\{\phi_{ir}(\theta_{j} - \delta_{jirk})\}}  
\end{equation}
for $k=1,\dots, K;$ $j=1,\dots, N;$ $i=1,\dots, I$ and $r \in \mathcal{R}_j$. As for the Generalized Partial Credit Model (GPCM; \citealt{Muraki_1992}), this formulation implies that the marginal probability $\pi_{jirk}$ that subject $j$ receives a score $Y_{jir}=k$ to the item $i$ by the rater $r$ is given by:
\begin{equation}\label{eq:marg}
\mathbb{P}[Y_{jir}=k|\theta_{j}, \phi_{ir}, \boldsymbol{\delta}_{jir}]=\pi_{jirk}= 
\frac{exp \{\sum_{v=0}^{k}\phi_{ir}(\theta_{j} - \delta_{jirv})\}}{\sum_{h=0}^{K} exp\{\sum_{v=0}^{h}\phi_{ir}(\theta_{j} - \delta_{jirv})\}},
\end{equation}
with $\boldsymbol{\delta}_{jir}=[\delta_{jir1},\dots, \delta_{jirK}]$. The parameter $\delta_{jirk} \in \mathbb{R}$ is the location on the support of the subject latent trait $\theta$ where $\pi_{jirk}=\pi_{jir,k-1}$. This implies that, under the assumption that $\phi_{ir}>0$, for subject $j$ satisfying $\theta_{j}>\delta_{jirk}$, rater $r$ chooses category $k$ with a larger probability than category $k-1$ to item $i$; that is $\pi_{jirk}>\pi_{jir,k-1}$. On the contrary, for $\theta_{j}<\delta_{jirk}$, rater $r$ chooses category $k$ with a smaller probability than category $k-1$ to item $i$; that is $\pi_{jirk}<\pi_{jir,k-1}$.  
Since $\delta_{jirk}$ jointly depends on item- and rater-specific characteristics, we refer to it as the \textit{joint difficulty step parameter}.\\
Here $\phi_{ir}>0$ is a slope parameter and indicates the degree to which rater categorical scoring varies along the subject's latent trait scale. Larger values of $\phi_{ir}$ imply a better discrimination between different levels of latent traits by the rater $r$ on item $i$. In contrast, smaller values indicate poorer discrimination between different levels of the latent trait, as scored by rater $r$ on item $i$. This parameter depends jointly on the item discrimination and rater consistency, and we refer to $\phi_{ir}$ as the \textit{joint discrimination parameter}. 
The joint difficulty parameter $\delta_{jirk}$ is assumed to be independent of the joint discrimination parameter $\phi_{ir}$, and both are assumed to be independent of the subject's latent trait $\theta_{j}$. As a consequence, any dependence between the subject $j$ and the parameters $\delta_{jirk}$ captures the systematic bias of the rater $r$ in scoring $j$ and the item-specific residual for subject $j$ when assessed on item $i$. 

\subsubsection{Joint difficulty step parameter}\label{JDP:1} For each subject $j=1,\dots, N$, we assume the following decomposition of the difficulty of the rater $r \in \mathcal{R}_j$ to score $j$ with $Y_{jir}=k$ on item $i=1\dots, I$: 
\begin{equation}\label{Measurement:difficulty1}
    \delta_{jirk} = \epsilon_{ij} + \beta_{ir}  + \eta_{rj} + \tau_{rk} + \gamma_{ik},  \quad k=0,\dots,K. 
\end{equation}

\paragraph{Item-related parameters $\epsilon_{ij}$, $\beta_{ir}$ and $\gamma_{ik}$.}
Here $\epsilon_{ij}\in \mathbb{R}$ is the item-specific residual term for subject $j$ when scored on item $i$. It can also be interpreted as the marginal subject-specific difficulty of item $i$ for subject $j$ across raters. Given two subjects $j \neq j'$, with equal latent trait $\theta_j=\theta_{j'}$ and satisfying $\epsilon_{ij}<\epsilon_{ij'}$, conditional on the other parameters, $j$ has a larger probability of being scored into higher categories on item $i$ than $j'$. For each item $i=1,\dots,I$, we assume that residuals $\epsilon_{ij}$, are independent and identically distributed (i.i.d.) across subjects $j\in \mathcal{J}$, and follow a normal distribution with zero mean and variance $\varphi_i^2$:
\begin{equation}\label{item-shift}
   \epsilon_{ij}|\varphi_i^2 \overset{\mathrm{iid}}{\sim} N(
   0, \varphi_i^2), \quad j \in \mathcal{J}, \;\; i=1,\dots, I.
\end{equation}
Here $\varphi_i^2$ captures the variability of subject-specific difficulties or residuals of item $i$ across subjects. Larger values of this parameter imply larger differences in item difficulty across subjects, whereas smaller values correspond to smaller differences in their individual difficulty on item $i$.
Given two items $i$ and $i'$, satisfying $\varphi^2_i>\varphi^2_{i'}$, item $i'$ exhibits less subject-specific variation in difficulty. 
\\
The parameter $\beta_{ir}\in \mathbb{R}$ is the systematic bias of the rater $r$ on item $i$, that is, the average severity of the rater when scoring subjects on this item.  
For any pair of different raters $r \neq r'$, such that $\beta_{ir}<\beta_{ir'}$,  conditional on the other parameters, $r$ assigns higher scores to subjects on item $i$ with a larger probability than rater $r'$. We refer to $\beta_{ir}$ as the \textit{rater local difficulty shift} parameter for item $i$. Here, $\gamma_{ik}\in \mathbb{R}$ is the item-specific step-difficulty parameter that represents the difficulty of transition from score $k-1$ to score $k$ for item $i$.

\paragraph{
Rater parameters $\eta_{rj}$ and $\tau_{rk}$.}
The parameter $\eta_{rj}$ is the systematic bias of the rater $r$ in scoring subject $j$. Given two subjects $j, j'\in \mathcal{J}_r$, with $j \neq j'$, satisfying $\eta_{rj}<\eta_{rj'}$, the rater $r$ will systematically score subject $j$ in higher categories than subject $j'$ across the items. The parameter $\eta_{rj}$ can be interpreted as the subject-specific severity or \textit{subject local bias} of rater $r$ for subject $j$. We assume that the rater $r$'s subject local biases $\eta_{rj}$, $j \in \mathcal{J}_r$, are i.i.d. from a normal distribution with zero mean and variance $\omega^2_r$: 
\begin{equation}\label{Measurement:difficulty2}
    \eta_{rj}|\omega^2_r \overset{\mathrm{iid}}{\sim} N(0,\omega^2_r), \quad j \in \mathcal{J}_r, \;\; r \in \mathcal{R}.
\end{equation}
The parameter $\omega^2_r$ is the variance of the local or subject-specific biases of rater $r$ and captures the \textit{consistency} of the rater across different subjects. Smaller values of this parameter imply higher consistency levels of the rater across subjects, whereas larger values indicate an inconsistent severity level across subjects. \\
The parameter $\tau_{rk} \in \mathbb{R}$ is the threshold step parameter of the category $k$ for rater $r$, and might be interpreted as the average relative difficulty of category $k$ over $k-1$ for rater $r$. 

\subsubsection{Joint discrimination parameter}\label{jdp}
For each rater $r\in \mathcal{R}$, and each item $i=1,\dots, I$, we assume the following decomposition of the logarithm of the joint discrimination $\phi_{ir}$: 
\begin{equation}\label{Measurement:discrimination1}
    log(\phi_{ir}) = \lambda_i + \xi_{r}.
\end{equation}
Here $\lambda_i \in \mathbb{R}$ is the item-specific component of discrimination and captures the degree to which categorical scoring varies along the subject's latent trait scale, on average, across different raters for item $i$. Larger values of this parameter imply a higher discriminating power of the item. In comparison, smaller values imply that the item, on average, discriminates poorly between different levels of the latent trait.\\
The parameter $\xi_{r} \in \mathbb{R}$ is the rater-specific component of discrimination across different subjects and items. Larger values of this parameter imply higher rater discrimination levels; that is, the rater $r$ assigns similar categorical scores to subjects with similar latent traits. On the contrary, smaller values of $\xi_{r}$ indicate lower rater discrimination levels; that is, the rater $r$ assigns different categorical scores on the items to subjects with similar latent traits. We refer to $\xi_{r}$ as the \textit{rater discrimination shift} parameter.

\subsection{Structural Model}\label{Structural_EG}

The structural model specifies the population-level distributions of subjects' latent traits and raters' characteristics. Both groups can, in principle, be modeled as draws from independent populations, allowing information to be shared within each group separately and supporting population-level inference for both subjects and raters. However, this requires sufficient sample sizes in each group. In practice, in many rater-mediated assessment settings, such as educational or clinical evaluations, a small panel of raters (e.g., $R = 5$) evaluates a moderate number of subjects (e.g., $N=30$ in the clinical application reported in Appendix~\ref{application}) on many items (\textit{small-$R$, large-$I$} regime). In these cases, population-level inference for raters is difficult, and we recommend placing independent priors on rater-specific parameters. In contrast, settings such as odor perception studies involve many raters and relatively few items (\textit{large-$R$, small-$I$} regime), making population-level inference about raters feasible, whenever raters themselves are the target of inference. 

In the remainder of this section, we focus on the first regime, which is most common in rater-mediated assessments \citep{wang2019}. We place independent priors on rater-specific parameters and model the subjects' latent trait distribution nonparametrically. Rather than assuming a standard normal distribution for $\theta_j$, we place a nonparametric mixture prior that accommodates heterogeneity and potential multimodality, letting the data determine the shape of the latent trait distribution. The extension to a joint population model for raters is presented in Appendix \ref{A-A}.

\subsubsection{Modelling subjects' latent trait}\label{Structural:subjects}
We assume that subjects' latent traits $\theta_j$, $j=1,\dots, N$, are i.i.d. from a random probability measure $G$:
\begin{equation}\label{Structural_1}
    \theta_j|G \overset{\mathrm{iid}}{\sim} G, \; j=1,\dots,N,  
\end{equation}
where $G$ is defined as: 
\begin{equation}\label{BNP_1}
    G = \int \mathcal{K}\left(\theta_j| \theta^* \right) F\left(d \theta^* \right), \quad F \sim Q.
\end{equation}
Here $\mathcal{K}(\cdot|\psi)$ is a suitable kernel indexed by the set of parameters $\theta^*$, and $Q$ is a discrete nonparametric prior; that is, its realizations are almost surely (a.s.) discrete random measures. As a result, $G$ is a mixture distribution and Eq.~(\ref{Structural_1}) might be represented through the following hierarchical formulation:
\begin{equation}\label{G:hierarchical}
    \theta_j|\theta^*_j \overset{\mathrm{ind}}{\sim} \mathcal{K}\left(\theta^*_j \right), \quad \theta^*_j|F \overset{\mathrm{iid}}{\sim}F, \quad F \sim Q, \quad \quad j=1,\dots, N.
\end{equation}
Since the mixing distribution $F$ is a.s. discrete, it implies a positive probability that any two random subjects might share the same values of the kernel's parameters, that is $\mathbb{P}(\theta^*_j=\theta^*_{j'})>0$, for $j \neq j'$. A random sample $(\theta^*_1,\dots, \theta_N^*)$ from $F$ features $1 \leq M_N \leq N$ different unique values or \textit{atoms} $(\theta^*_1,\dots, \theta_{M_N}^*)$, and leads to a random partition of subjects $\{1,\dots, N\}$ into $M_N$ blocks, such that $\theta_j^* \in (\theta^*_1,\dots, \theta_{M_N}^*)$, for $j=1,\dots, N$. This hierarchical mixture specification naturally allows us to estimate the density of the subject's latent trait and relax the common normality assumption over this parameter. 

\paragraph{Dirichlet Process Mixtures.}
We let $Q$ be a \acrfull{dp} prior $Q=DP(\alpha, F_0)$, with concentration parameter $\alpha>0$ and base measure $F_0$.
The latter is the prior expectation of the \gls{dp}, $\mathbf{E}[F]=F_0$, and represents our \textit{prior guess} at $F$. The parameter $\alpha$ controls the spread of the prior for $F$ about $F_0$. In the limit of $\alpha \rightarrow \infty$, the probability mass is spread out, and $F$ gets closer to $F_0$; on the contrary, as $\alpha \rightarrow 0$, $F$ is less close to $F_0$ and concentrates at a point mass. \\
When $Q$ is a \gls{dp}, it induces an infinite mixture distribution over the subject latent traits, and Eq.~(\ref{G:hierarchical}) might be represented as follows:
\begin{equation}\label{Infinite_Mixture}
    G=\sum_{c \geq 1} \pi_c \mathcal{K}\left(\theta^{**}_c \right), \quad \pi_c = V_c \prod_{l < c} (1-V_l), \quad V_c|\alpha \overset{\mathrm{iid}}{\sim} Beta(1, \alpha), \quad \theta_c^{**} \overset{\mathrm{iid}}{\sim} F_0.
\end{equation}
Here $\boldsymbol{\pi}=\{\pi_c\}_{c\geq 1}$ is the vector of the components' weights that takes values in the infinite simplex; $\boldsymbol{\theta}^{**}=\{\theta^{**}_c\}_{c\geq 1}$ is the vector of infinitely many atoms i.i.d. from the non-atomic distribution $F_0$. The construction of the \gls{dp} given in Eq.~(\ref{Infinite_Mixture}) is referred to as \textit{stick-breaking representation} since the weights are generated from the so-called \textit{stick-breaking process} \citep{Sethuraman}. We hereafter consider $\mathcal{K}(\cdot)$ to be a normal kernel indexed by the pair $\theta_c^{**}=(\mu_c,\sigma_c^2)$, such that $\mathcal{K}(\theta_c^{**})=N(\mu_c,\sigma_c^2)$, where $\mu_c$ and $\sigma_c^2$ are the location and scale parameters, respectively, implying an infinite Gaussian mixture.

\paragraph{Pitman-Yor Process Mixtures.} 
For $Q=DP(\alpha, F_0)$, the number $M_N$ of different unique values $\theta_j^*$ generated by $F$ increases asymptotically at a logarithmic rate, with $M_N \sim \alpha \; log(N)$ a.s. for $N \rightarrow \infty$. Alternative discrete nonparametric priors might be specified over $F$ to overcome this issue and allow for greater flexibility in the prior distribution over the number of clusters \citep{mignemi2025BNP}. We refer to \cite{DeBlasi2015} and \cite{ Ghosal_van_der_Vaart_2017} for a broader treatment of this class of priors and the respective modeling advantages. \\
A widely used extension of the \gls{dp} is the \glsfirst{pyp} \citep{Pitman-Yor}, $PYP(d, \alpha, P_0)$, indexed by a discount parameter $0<d<1$, a concentration parameter $\alpha>-d$, and a base measure $P_0$. We may let $Q=PYP(d, \alpha, P_0)$, which might be represented through the stick-breaking representation displayed in Eq.~(\ref{Infinite_Mixture}). The only exception is that, for the \gls{pyp}, the sequence $\{V_c\}_{c\geq 1}$ is generated as follows: 
\begin{equation}\label{PY_v}
    V_c|\alpha,d \overset{\mathrm{iid}}{\sim} Beta(1-d, \alpha+cd). 
\end{equation}
Under this specification, the prior distribution of the mixture weights $\boldsymbol{\pi}$ depends on the concentration parameter $\alpha$, the discount parameter $d$, and the order $c$ in the sequence. As might be noted by Eq.~(\ref{PY_v}), the Pitman-Yor process implies a power-law behavior for the mixture weights $\pi$, with  $\mathbb{E}[\pi_{c}]=O(c^{-1/d})$, for $d \in (0,1)$, which suggests that the decay of the cluster sizes is governed by a power law \citep{Hjort_Holmes_Muller_Walker_2010}. For this reason, the discount parameter $d$ controls the prevalence and relative frequency of small clusters in the induced partition. See \cite{mignemi2025BNP} for other details on the interpretation of this class of priors in psychometric modeling.  

\paragraph{Conditional Moments.}
Leveraging the stick-breaking construction of the \gls{dp} and the \gls{pyp}, we can express the conditional moments of the latent trait $\theta$ as: 
\begin{equation}\label{moments}
    E(\theta|G)=\mu_\theta = \sum_{c \geq 1} \pi_c \; \mu_c, \quad \quad \quad Var(\theta|G)=\sigma^2_\theta = \sum_{c \geq 1}\pi_c (\sigma^2_c + \mu_c^2) -\mu_\theta^2, 
\end{equation}
where $\mu_\theta$ and $\sigma^2_\theta$ are the mean and the variance of the infinite Gaussian mixture $G$ induced by $F$, respectively. Dirichlet and \gls{pyp} mixtures with Gaussian kernels lead to posterior consistency in weak topology and, under standard assumptions, to consistent density estimation \citep{Ghosal_van_der_Vaart_2017}. As a by-product, under our model specification, the posterior moments of $G$ also converge to the true moments, by uniform integrability of the posterior distribution \citep{vanderVaart1998}. Model identifiability constraints for the \gls{egmfr} model and the dynamic version detailed in the next Section are discussed in Appendix \ref{identifiability}. 

\subsection{Relationship to Existing Rater Models}
The \gls{gmfr}  model, introduced in \cite{Uto_24}, is a special case of the \gls{egmfr} model detailed above. Specifically, the \gls{gmfr} is recovered if $\epsilon_{ij}=0$, $\forall\; i=1,\dots,I$, and $\forall\; j \in \mathcal{J}$; and if $\eta_{rj}=0$, $\forall\; r \in \mathcal{R}$, and $\forall\; j \in \mathcal{J}$. Moreover, under the \gls{gmfr}, the subject latent trait $\theta_j$ is assumed to be i.i.d. across subjects from a standard normal distribution. In contrast, the \gls{egmfr} places a nonparametric mixture prior on the latent trait distribution, allowing its shape to be informed by the data. 
We note that, by defining $\theta_{ji}=\theta_j-\epsilon_{ji}$, 
and $ \widetilde{\delta}_{jirk}
= \beta_{ir}  + \eta_{rj} + \tau_{rk} + \gamma_{ik}$, Eq.~(\ref{eq:marg}) can be rewritten as:
\begin{equation*}
\mathbb{P}[Y_{jir}=k|\theta_{ji}, \phi_{ir}, \widetilde{\boldsymbol{\delta}}_{jir}]=\pi_{jirk}= 
\frac{exp \{\sum_{v=0}^{k}\phi_{ir}(\theta_{ji} - \widetilde{\delta}_{jirv})\}}{\sum_{h=0}^{K} exp\{\sum_{v=0}^{h}\phi_{ir}(\theta_{ji} - \widetilde{\delta}_{jirv})\}}.
\end{equation*}
The parameter $\theta_{ji}$ can be interpreted as an item-specific latent score or the quality of response \citep{Verhelst2001}, connecting the proposed formulation to Hierarchical Rater Models with continuous ideal scores \citep[e.g.,][]{Molenaar2021, decarlo2011hierarchical, Patz2002} and Multiple Rater IRT models \citep{Verhelst2001, Mariano2007covariates}. As a result, the measurement accuracy of $\theta_j$ jointly depends on the number of items and raters, addressing the information-accumulation issue of standard \gls{mfr} models \citep{mariano2002critique}. A more detailed comparison of our proposal with most of the existing rater-mediated assessment models is given in Table~\ref{TAB_comp} in Appendix \ref{Additional_fig_tab}.

\section{Dynamic Extended Generalized MFR Model}\label{Dynamic}
According to the joint difficulty specification in Eq.~(\ref{Measurement:difficulty2}), for each rater, the local biases $\eta_{rj}$ are assumed to be i.i.d. and the rater discrimination shift $\xi_r$ is constant across all subjects in the subset $\mathcal{J}_r$. This implies that the rater consistency and local biases are invariant under any permutation of the subjects in $\mathcal{J}_r$ scored by rater $r$. That is, the scoring order does not affect the rater's severity or consistency during the rating procedure. This may be a strong assumption when subjects are scored sequentially by design, and the comparison between the current and previously scored subject may affect ratings. Moreover, discrimination between subjects with different latent traits may change as the rater progresses through the scoring sequence. \\
In this section, we detail the dynamic extension of the \gls{egmfr} model, the \gls{degmfr} model. When the scoring order of the subjects is known, we can embed this information into the model and account for the sequential dependencies in the scores given by each rater. More specifically, we relax the assumption of order invariance in the rater's scoring behavior and consider the subjects scored by each rater as an ordered sequence. 
The measurement and structural models presented for the \gls{egmfr} model are extended to account for the sequential effects during the rating procedure.

Building on this extension, we also introduce the Dynamic \textit{Latent Inter-Rater Consensus} (LIC) coefficient, $LIC_{rr'}\in (-1,1)$, defined as the correlation between the bias-adjusted latent evaluations that two raters $r$ and $r'$ provide for the same subject. Because these latent evaluations might depend on the position of the subject within the respective rating sequences, $LIC_{rr'}$ can vary along the rating process, capturing how sequential rater effects shape the latent consensus (or dissensus) between raters. A closed-form expression for $LIC_{rr'}$ and a discussion of its properties are provided in Section~\ref{Agreement}.

\subsection{Sequential Rating Setting}
\label{Long_setting}
We let $\mathcal{J}_r = \{ j: r \in \mathcal{R}_j\}$ be the set of subjects scored by rater $r \in \mathcal{R}$ and $J_r=|\mathcal{J}_r|$ be its cardinality.
For each rater $r$, let $b_r(t): \{1,\dots, J_r\} \rightarrow \mathcal{J}_r$ be a known bijection mapping the rater-specific index $t$ to the global identity $j \in \mathcal{J}_r$ of the $t$-th subject scored by $r$; the observation $Y_{ir}^{b_r(t)}$ indicates the score given by rater $r$ to subject $b_r(t)$ on item $i$. We note that this bijection is needed for each rater, since the same subject might be scored by different raters and occupies a different position in the respective rater-specific scoring orders. 
Figure \ref{Example_net} illustrates an example in which subject $C$ is the second subject scored by both raters 1 and 2, indicated respectively by solid and dashed lines; this implies the following bijections between the rater-specific orders and the subject identity: $b_1(2)=b_2(2)=C$, that is, for both raters the subject $C$ is the second subject in their respective rating orders. We note that $C$ is the first subject scored by rater 3, indicated by the dotted line. In this case, the bijection between the rater-specific order and the subject identity is $b_3(1)=C$. 
Throughout the remainder of the paper, we use the following notational conventions. We alternatively replace $b_r(t)$ with $t$ for the measurement parameters and with $j$ for the latent trait within each rater-specific scoring order. That is, we define $Y_{ir}^t:=Y_{ir}^{b_r(t)}$ and $\theta_j:=\theta^{b_r(t)}=\theta^t$, for each $r \in \mathcal{R}$. Indeed, we note that, for a given rater $r \in \mathcal{R}$, $b_r(t)$ uniquely identifies the $t$-th subject $j$ scored by $r$. This definition allows us to use $j$ to indicate that a parameter only depends on the identity of the subject and is invariant under any permutation of $\mathcal{J}_r$ (e.g., the latent trait $\theta_j$); on the contrary, we use $t$ to indicate that a parameter is not invariant to the order of the subjects. \\

\begin{figure}[t]
\centering
\begin{tikzpicture}[
    node/.style={draw, circle, minimum size=1cm, font=\large},
    >=Stealth,
    rater1/.style={solid, line width=1.3pt, ->, color=okblue},
    rater2/.style={dashed, line width=1.3pt, ->, color=okorange},
    rater3/.style={dotted, line width=1.6pt, ->, color=okgreen}
    ]

\definecolor{okblue}{RGB}{0,114,178}
\definecolor{okorange}{RGB}{230,159,0}
\definecolor{okgreen}{RGB}{0,158,115}

\node[node] (A) at (-3, 1.5)  {A};
\node[node] (D) at (3, 1.5)   {D};
\node[node] (C) at (0, 0)     {C};
\node[node] (B) at (-3, -1.5) {B};
\node[node] (F) at (0, -2)    {F};
\node[node] (E) at (3, -1)    {E};

\draw[rater1] (A) to[bend right=12] (C);
\draw[rater1] (C) to[bend right=12] (F);

\draw[rater2] (B) to[bend left=12] (C);
\draw[rater2] (C) to[bend left=8] (E);

\draw[rater3] (C) to[bend left=12] (D);
\draw[rater3] (D) to[bend right=12] (E);

\begin{scope}[shift={(4.7,1.2)}]
  \draw[draw=black] (0,0) rectangle (2.7,-1.8);
  \draw[rater1] (0.15,-0.3) -- (0.85,-0.3);
  \node[anchor=west] at (0.95,-0.3) {Rater 1};
  \draw[rater2] (0.15,-0.9) -- (0.85,-0.9);
  \node[anchor=west] at (0.95,-0.9) {Rater 2};
  \draw[rater3] (0.15,-1.5) -- (0.85,-1.5);
  \node[anchor=west] at (0.95,-1.5) {Rater 3};
\end{scope}

\end{tikzpicture}
\caption{Graphical representation of the network implied by the order of examination of subjects A, B, C, D, E, and F by three different raters, indicated respectively by the solid, dashed, and dotted arrows. The arrows represent the rating order, connecting each subject to the subsequent one in the sequence.}
\label{Example_net}
\end{figure}

\subsection{Measurement Model}\label{measurement:dynamic}
For each rater $r\in \mathcal{R}$, we let the probability that subject $j \in \mathcal{J}_r$, who is the $t$-th rated subject by $r$, receives a score $Y_{ir}^{t}=k \in \{0,1,\dots, K\}$ on item $i=1, \dots, I$, jointly depend on the subject's latent trait and the interaction between the characteristics of the item and those of the rater. Consistent with the \gls{egmfr} model, we assume, for each rater, a Generalized Partial Credit response function to the items \citep{Muraki_1992} and define the marginal probability $\pi_{irk}^t$ that the subject $t$-th receives a score $Y_{ir}^t=k$ to the item $i$ by the rater $r$ as:
\begin{equation}\label{eq:marg_D}
\mathbb{P}[Y_{ir}^{t}=k|\theta_j, \phi_{ir}^{t}, \delta_{irk}^{t}]=\pi_{irk}^{t}= 
\frac{exp \{\sum_{v=0}^{k}\phi_{ir}^{t}(\theta_j - \delta_{irv}^{t})\}}{\sum_{h=0}^{K} exp\{\sum_{v=0}^{h}\phi_{ir}^{t}(\theta_j - \delta_{irv}^{t})\}}.
\end{equation}

The parameter $\delta_{irk}^{t} \in \mathbb{R}$ is the location on the support of the subject latent trait $\theta$ where $\pi_{irk}^{t}=\pi_{ir,k-1}^{t}$. This implies that, under the assumption that $\phi_{ir}^{t}>0$, for subject $j$ satisfying $\theta_{j}>\delta_{irk}^{t}$, rater $r$ chooses category $k$ with a larger probability than category $k-1$ on item $i$, given $j$ is the $t$-th subject examined by $r$; that is $\pi_{irk}^{t}>\pi_{ir,k-1}^{t}$. On the contrary, for $\theta_{j}<\delta_{irk}^{t}$, rater $r$ chooses category $k$ with a smaller probability than category $k-1$ on item $i$; that is $\pi_{irk}^{t}<\pi_{ir,k-1}^{t}$. 
We refer to $\delta_{irk}^{t}$ as the \textit{dynamic joint difficulty step parameter}.\\
Here $\phi_{ir}^{t}>0$ is a slope parameter and indicates the degree to which rater categorical scoring varies along the subject's latent trait scale. Larger values of $\phi_{ir}^{t}$ imply better discrimination between different levels of latent traits by the rater $r$ for item $i$ when rating the $t$-th subject. On the contrary, smaller values of this parameter indicate poorer discrimination between different levels of the latent trait for the $t$-th subject, as scored by rater $r$ on item $i$. We refer to $\phi_{ir}^{t}$ as the \textit{dynamic joint discrimination parameter}. 
The dynamic joint difficulty parameter $\delta_{irk}^{t}$ is assumed to be independent of the dynamic joint discrimination parameter $\phi_{ir}^{t}$, and both are assumed to be independent of the subject's latent trait $\theta_j$. As a consequence, any dependence between the identity and the order of subject $j$ and the measurement parameters $\delta_{irk}^{t}$ and $\phi_{ir}^{t}$ captures characteristics related to the subject $j$ which are independent of the latent trait $\theta_j$. Design features such as the order of the subjects in the rating procedure or the item-specific residual term might represent some of these dependencies. 

\subsubsection{Dynamic joint difficulty parameter}
 For each rater $r \in \mathcal{R}$, we assume the following decomposition of the difficulty of scoring the $t$-th subject with $Y_{ir}^{t}=k$ on item $i=1\dots, I$:
\begin{equation}\label{Measurement:Dynamic_difficulty1}
    \delta_{irk}^{t} = \epsilon_{ij} + \beta_{ir} + \eta_{r}^t + \tau_{rk} + \gamma_{ik},  \quad k=0,\dots,K. 
\end{equation}
Here $\eta_{r}^{t} \in \mathbb{R}$ indicates the systematic bias of the rater $r$ in scoring the $t$-th subject in the rater's scoring order. It jointly captures the subject-rater interaction---as the local biases in the \gls{egmfr} detailed in Section \ref{E-GMFRM}---and the sequential dependency across subjects. We refer to $\eta_{r}^{t}$ as the \textit{rater dynamic local bias} parameter. The distribution and the interpretation of the other parameters (i.e., $\epsilon_{ij}$, $\beta_{ir}$, $\tau_{rk}$, and $\gamma_{ik}$) are equivalent to those detailed for the \gls{egmfr} in Section \ref{JDP:1}.

\paragraph{Rater's Anchoring Effect.} 
For the first subject $t=1$ scored by rater $r\in \mathcal{R}$, we assume that the dynamic local bias $\eta_{r}^{t}$ follows a normal distribution with zero mean and variance $\omega_r^2$ as in Eq.~(\ref{Measurement:difficulty2}).
From the second to the last subject scored by rater $r$, that is $t =2,\dots, J_r$, we assume a dependence between the severity level of the rater for the current subject $t$ under examination and the latent trait of the immediately preceding scored subject $t-1$:
\begin{equation}\label{CL_path}
   \eta_{r}^{t}|\rho_r,\theta^{t-1},r \overset{\mathrm{ind}}{\sim} N(\rho_r\theta^{t-1},\omega^2_r),
\end{equation}
for $t = 2,\dots, J_r$, and $r \in \mathcal{R}$; where $\omega_r^2$ is the variance of the local or subject-specific biases of rater $r$
and $\theta^{t-1}$ is the latent trait of the subject $t-1$ rated immediately before the current subject by $r$. 
Here $\rho_r \in \mathbb{R}$ is the rater-specific cross-lagged effect of the previous subject's latent trait on the current local bias. 
It represents a rater-specific \textit{anchoring effect} in the scoring procedure, that is, the influence of the latent trait of the previous subject during the scoring of the current one. 
Positive values of $\rho_r$ imply that, on average, higher levels of the latent trait of the previously scored subject are associated with higher levels of rater severity for the next one. On the contrary, negative values of this parameter indicate that, on average, larger values of the latent trait of the previously rated subject are associated with lower rater severity levels for the current subject. In both instances, the raters' severity level is systematically associated with the previous subject's latent trait. We note that the dynamic latent regression in Eq.~(\ref{CL_path}) is not an autoregressive model, that is $\theta_j \indep \theta^{t-1}$ since $t$ indexes the rater-specific order of different subjects, rather than repeated time points for the same subject. 

\subsubsection{Dynamic joint discrimination parameter}
For each rater $r \in \mathcal{R}$ and each item $i=1,\dots, I$, we assume the following decomposition of the logarithm of the joint discrimination $\phi_{ir}^{t}$: 
\begin{equation}\label{Measurement:Dynamic_discrimination1}
    log(\phi_{ir}^{t}) = \lambda_i + \xi_{r} + \nu^{t}.
\end{equation}
The parameter $\nu^t \in \mathbb{R}$ represents the average change in log-discrimination associated with rating position $t$, across raters and items. 
Larger values of $\nu^t$ imply higher average discrimination at rating position $t$, while smaller values imply lower average discrimination (see Figure \ref{Example_discrimination}). The interpretation of the other parameters (i.e., $\lambda_i$ and $\xi_{r}$) is equivalent to those detailed for the \gls{egmfr} in Section \ref{jdp}.

\definecolor{okorange}{RGB}{230,159,0}
\definecolor{okgreen}{RGB}{0,158,115}
\definecolor{okskyblue}{RGB}{86,180,233}
\definecolor{okpurple}{RGB}{204,121,167}

\makeatletter
\pgfmathdeclarefunction{catp}{3}{%
  \begingroup
    \pgfmathsetmacro\sA{#2*(#1+1.5)}%
    \pgfmathsetmacro\sB{\sA+#2*(#1)}%
    \pgfmathsetmacro\sC{\sB+#2*(#1-1.5)}%
    \pgfmathsetmacro\mM{max(0,\sA,\sB,\sC)}%
    \pgfmathsetmacro\eZ{exp(max(-12,-\mM))}%
    \pgfmathsetmacro\eA{exp(max(-12,\sA-\mM))}%
    \pgfmathsetmacro\eB{exp(max(-12,\sB-\mM))}%
    \pgfmathsetmacro\eC{exp(max(-12,\sC-\mM))}%
    \pgfmathsetmacro\num{(#3==0)*\eZ+(#3==1)*\eA+(#3==2)*\eB+(#3==3)*\eC}%
    \pgfmathparse{\num/(\eZ+\eA+\eB+\eC)}%
    \pgfmath@smuggleone\pgfmathresult
  \endgroup}
\makeatother

\begin{figure}[t]
\centering
\begin{tikzpicture}[>=Stealth, font=\small]
\def\xs{0.38}   
\def\ph{3.2}    
\def\gap{4.3}   
\foreach \pos/\slope/\ord in {1/0.6/1st, 2/1.4/2nd, 3/2.4/3rd, 4/1.0/4th}{
  \begin{scope}[xshift={(\pos-1)*\gap cm}]
    \foreach \d in {-1.5,0,1.5}
      \draw[gray!50, densely dashed] ({(\d+4)*\xs},0) -- ({(\d+4)*\xs},\ph);
    \draw[okorange,  line width=1.2pt] plot[domain=-4:4, samples=101, smooth] ({(\x+4)*\xs}, {\ph*catp(\x,\slope,0)});
    \draw[okgreen,   line width=1.2pt] plot[domain=-4:4, samples=101, smooth] ({(\x+4)*\xs}, {\ph*catp(\x,\slope,1)});
    \draw[okskyblue, line width=1.2pt] plot[domain=-4:4, samples=101, smooth] ({(\x+4)*\xs}, {\ph*catp(\x,\slope,2)});
    \draw[okpurple,  line width=1.2pt] plot[domain=-4:4, samples=101, smooth] ({(\x+4)*\xs}, {\ph*catp(\x,\slope,3)});
    \draw[gray!70] (0,0) rectangle (8*\xs,\ph);
    \foreach \v in {-4,-2,0,2,4}
      \draw[gray!70] ({(\v+4)*\xs},0) -- ++(0,-0.07)
        node[below, font=\scriptsize, text=black] {$\v$};
    \node at (4*\xs,-0.65) {$\theta$};
    \node[above, align=center] at (4*\xs,\ph) {Subject $b_r(\pos)$\\[-3pt]{\footnotesize(\ord\ rated by $r$)}};
    \ifnum\pos=1
      \foreach \yy in {0,0.5,1}
        \draw[gray!70] (0,\yy*\ph) -- ++(-0.07,0)
          node[left, font=\scriptsize, text=black] {$\yy$};
    \fi
    \ifnum\pos<4
      \draw[->, very thick, gray!70] ({8*\xs+0.25},0.5*\ph) -- ({\gap-0.25},0.5*\ph);
    \fi
  \end{scope}
}
\end{tikzpicture}
\caption{\small Example of Dynamic Item-Rater Characteristic Curves for the same item $i$ and the same rater $r$ across the first four subjects $b_r(1),\dots,b_r(4)$ in the rating order of $r$ (arrows), as functions of the latent trait $\theta$. The orange, green, light blue, and pink lines display the category probabilities $\pi^{t}_{irk}$ in Eq.~(\ref{eq:marg_D}) for the four ordered categories $k=0,1,2,3$, respectively. The joint difficulty step parameters are held fixed (dashed vertical lines, where adjacent curves intersect), while the dynamic joint discrimination $\phi^{t}_{ir}$ increases from the first to the third subject (e.g., warm-up) and then decreases from the third to the fourth (e.g., fatigue) along the rating sequence, making the curves more peaked and then flatter.}
\label{Example_discrimination}
\end{figure}

\paragraph{Rater's Practice Effect.}
For each rating position, we let $\nu^{t}$ capture the latent trend in average log-discrimination along the scoring procedure. We refer to this latent trajectory as the average \textit{rater practice effect}, specified as $f(t)$: 
\begin{equation}\label{practice_effect}
    \nu^t = f(t), \quad t=1,\dots, T,
\end{equation}
where $T=\max \limits_{r \in \mathcal{R}}J_r$ is the largest cardinality among the sets $\{\mathcal{J}_r\}_{r \in \mathcal{R}}$. We note that, in our proposal, the unbalanced design is treated as the general case, allowing each rater to score a different number of subjects. We model the latent trend $f(t)$ through a P-spline: 
\begin{equation}\label{P-Spline}
    f(t) = \sum_{d=1}^{D} a_{d} B_{d}^l(t),
\end{equation}
where $B_{d}^l(t)$ is the $d$-th B-spline basis function of degree $l$ defined for a grid of $m$ inner knots and $a_d \in \mathbb{R}$, $d=1,\dots,D$ with $D = m + l + 1$, are the corresponding coefficients. By regularizing the curvature of $f(t)$, this Penalized B-spline specification allows a flexible basis representation while avoiding overfitting. From an interpretative perspective, this specification allows us to identify different phases of the rating process. For instance, an initial warm-up phase might be associated with improving discrimination between subjects with different latent traits, followed by a period of stable engagement and, potentially, a final phase characterized by fatigue. Other trajectories are also possible, motivating the use of P-splines as a semiparametric smoothing technique. We use a cubic B-spline with $m = 6$ equidistant inner knots, yielding $D = m + 4 = 10$ basis functions. This choice reflects the minimum number of time points in the datasets considered ($T = 10$), so that the number of spline coefficients does not exceed the number of time points, while the smoothness is regulated by the random-walk prior. The specification can be tailored to the dataset under consideration. 
An example of a path diagram is displayed in Figure \ref{Example_path} in Appendix \ref{Additional_fig_tab}.

\section{Latent Inter-Rater Consensus}\label{Agreement}
Rater-specific features such as consistency, discrimination, response style, or anchoring tendency are jointly estimated under the \gls{degmfr} model, and raters can be characterized along these dimensions. This allows us to improve the accuracy of subjects' latent-trait estimates and to compare raters' individual latent patterns. Nevertheless, a summary measure quantifying the overall consensus or agreement between pairs of raters may also be useful, for example, to 
support the design of studies \citep{glas2024reducing} and inform guidelines for training programs \citep{Reliable_Teacher2020}.\\
In this section, we propose a pairwise latent inter-rater consensus measure that quantifies the degree of association between the global latent scores provided by two 
raters for the same subjects. More specifically, we measure the agreement between their model-implied, bias-adjusted evaluations of the same subject across multiple items. The resulting coefficient is therefore conceptually distinct from conventional inter-rater reliability measures computed directly from observed ratings. 

Inter-rater reliability is commonly formulated in terms of the relative contributions of between-subject and measurement-error variation to observed ratings. More flexible formulations allow these variance components, and consequently reliability itself, to vary across raters or contextual characteristics \citep{Martinkova2023}. Such reliability differences may have practical consequences when ratings are subsequently used for classification or selection decisions \citep{Bartos24}. The measure proposed here has a related signal-to-noise interpretation but addresses a different target: agreement between model-implied, bias-adjusted latent evaluations after accounting for item and rater characteristics.

\subsection{Global Latent Score Correlation}
Each rater $r \in \mathcal{R}$ provides an average or global evaluation of subject $j \in \mathcal{J}_r$, captured by $\theta_j - \eta_{rj}$, under the \gls{egmfr} model, or equivalently, $\theta_j - \eta_r^t$ under the \gls{degmfr} model. Given a pair of raters $r$ and $r'$ scoring the same random subject $j$, we define latent inter-rater consensus as the correlation between their respective global latent scores. We present the corresponding expressions under the \gls{egmfr} and \gls{degmfr} models. \\  
Under the general \gls{degmfr} model, we define  $\zeta_{jr}=\theta_j - \eta_{r}^t$ as the \textit{global latent score} for subject $j \in \mathcal{J}_r$ associated with rater $r \in \mathcal{R}$. As detailed in Section \ref{Long_setting}, 
$b_r(t)$ maps the rater-specific position $t$ to the global subject identity $j$. 
%
For a subject $j \in \mathcal{J}_r$, let $t_r(j) = b_r^{-1}(j)$ denote the position at which subject $j$ is evaluated by rater $r$. We can then write $\zeta_{jr}=\theta_j - \eta_{r}^{t_r(j)}$. Therefore, for two distinct raters $r \neq r' \in \mathcal{R}$ who evaluate the same subject $j \in \mathcal{J}_r \cap\mathcal{J}_{r'}$, 
$\zeta_{jr}=\theta_j - \eta_{r}^{t_r(j)}$, and $ \zeta_{jr'}=\theta_j - \eta_{r'}^{t_{r'}(j)}.$

\begin{prop}\label{prop_1}
Given two distinct raters $r \neq r' \in \mathcal{R}$ scoring the same subject $j \in \mathcal{J}$, the conditional correlation between the respective global latent scores $\zeta_{jr}$ and $\zeta_{jr'}$, under the \gls{degmfr}, is
\begin{equation}\label{ICC}
LIC_{rr'} = \operatorname{Corr}\left(\zeta_{jr},\zeta_{jr'} \mid G,\boldsymbol{\rho},\boldsymbol{\omega}^2\right) = \frac{\sigma^2_\theta\left(1  + \tilde\rho_r \, \tilde\rho_{r'}\, s_{rr'}(j)\right)}{\sqrt{\sigma^2_\theta(1+\tilde\rho^{\,2}_r)+\omega_r^2}\,\sqrt{\sigma^2_\theta(1+\tilde\rho^{\,2}_{r'})+\omega_{r'}^2}},
\end{equation}
where $\boldsymbol{\rho}=(\rho_r,\rho_{r'})$, $\boldsymbol{\omega}^2=(\omega^2_{r},\omega^2_{r'})$, $\tilde\rho_r = \rho_r\,\mathbbm{1}\{t_r(j)>1\}$, and $s_{rr'}(j)=\mathbbm{1}\{b_r(t_r(j)-1)=b_{r'}(t_{r'}(j)-1)\}$ indicates whether the two raters rated the same subject immediately before $j$.
\end{prop}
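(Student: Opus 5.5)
We write each global latent score as a linear combination of independent Gaussian and $G$-distributed quantities, compute the variances and the covariance by bilinearity, and divide. Conditional on $G,\boldsymbol{\rho},\boldsymbol{\omega}^2$, the latent traits $\theta_j$ are i.i.d.\ with variance $\sigma^2_\theta$ given by Eq.~(\ref{moments}).

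For the first step, we reparametrize the dynamic local bias from Eq.~(\ref{CL_path}). When $t_r(j)>1$ we write
$$\eta_r^{t_r(j)}=\rho_r\,\theta_{p_r}+u_{rj},$$
where $p_r=b_r(t_r(j)-1)$ is the subject rated by $r$ immediately before $j$, and $u_{rj}\sim N(0,\omega_r^2)$. When $t_r(j)=1$, we have $\eta_r^{1}=u_{rj}$. Both cases are captured by
$$\eta_r^{t_r(j)}=\tilde\rho_r\,\theta_{p_r}+u_{rj},$$
with the convention that the first term vanishes when $\tilde\rho_r=0$. The same decomposition holds for $r'$.

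We take the following independence structure as the model assumptions:
\begin{itemize}
\item the innovations $u_{rj}$ and $u_{r'j}$ are independent of each other;
\item the innovations are independent of all latent traits;
\item the latent traits of distinct subjects are independent given $G$;
\item $p_r\neq j$, since the positions in a rater's order are distinct.
\end{itemize}
The last point gives $\theta_j\indep\theta_{p_r}$.

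For the second step, we compute the moments. The variance is
$$\operatorname{Var}(\zeta_{jr})=\operatorname{Var}(\theta_j)+\tilde\rho_r^2\operatorname{Var}(\theta_{p_r})+\omega_r^2=\sigma^2_\theta(1+\tilde\rho_r^2)+\omega_r^2,$$
because all cross terms vanish. The covariance of $\zeta_{jr}=\theta_j-\tilde\rho_r\theta_{p_r}-u_{rj}$ and $\zeta_{jr'}=\theta_j-\tilde\rho_{r'}\theta_{p_{r'}}-u_{r'j}$ expands by bilinearity:
$$\operatorname{Cov}(\zeta_{jr},\zeta_{jr'})=\operatorname{Var}(\theta_j)+\tilde\rho_r\tilde\rho_{r'}\operatorname{Cov}(\theta_{p_r},\theta_{p_{r'}}).$$
The mixed terms such as $\operatorname{Cov}(\theta_j,\theta_{p_{r'}})$ are zero because $p_{r'}\neq j$. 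The remaining covariance equals $\sigma^2_\theta$ if $p_r=p_{r'}$ and $0$ otherwise, which is exactly $\sigma^2_\theta\,s_{rr'}(j)$.

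When either rater sees $j$ first, the predecessor is undefined. In that case the corresponding $\tilde\rho$ is zero, so the value assigned to $s_{rr'}(j)$ does not matter. Dividing the covariance by the product of the square roots of the variances gives Eq.~(\ref{ICC}).

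The main obstacle is justifying the independence claims carefully, because the network of cross-lagged dependencies couples many subjects. In particular, the formula would change if the predecessor of $j$ for one rater could be correlated with $j$ through another rater's chain. This is not the case here: we condition only on $G$, so the $\theta$'s are i.i.d. The chains act only on the $\eta$'s and never feed back into the $\theta$'s, since Eq.~(\ref{CL_path}) is not autoregressive in $\theta$. Hence no such induced correlation arises, and the covariance reduces to the two terms above.
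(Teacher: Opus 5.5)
Your proposal is correct and follows essentially the same route as the paper's proof. The paper obtains $\operatorname{Var}(\eta_r^{t_r(j)}\mid G,\omega_r^2,\rho_r)=\omega_r^2+\tilde\rho_r^{\,2}\sigma^2_\theta$ via the law of total variance, whereas you use the innovation decomposition $\eta_r^{t_r(j)}=\tilde\rho_r\theta_{p_r}+u_{rj}$. From there both arguments use the same independence facts (between $\theta_j$ and the predecessors' traits, and between the residuals) to get the covariance $\sigma^2_\theta(1+\tilde\rho_r\tilde\rho_{r'}s_{rr'}(j))$ and then divide; you are somewhat more explicit than the paper in justifying these independences and in noting that $s_{rr'}(j)$ is irrelevant when $j$ opens a rating sequence.
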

The proof is reported in Appendix \ref{A_P}. The \textit{Latent Inter-Rater Consensus} coefficient $LIC_{rr'}$ provides a model-implied measure of latent consensus between raters $r$ and $r'$ along the respective rating sequence. Under the \gls{egmfr} model, $\rho_r=0$ for all raters and Eq.~\eqref{ICC} reduces to the corresponding expression without sequential effects. 

When the two raters have different predecessors ($s_{rr'}(j)=0$), $LIC_{rr'}$ is non-increasing in $\tilde\rho^{\,2}_r$ and $\tilde\rho^{\,2}_{r'}$: it is largest when $j$ opens both rating sequences ($t_r(j)=t_{r'}(j)=1$) and smallest when $t_r(j),t_{r'}(j)>1$, the latter providing a conservative lower bound across rating positions. When they share the predecessor ($s_{rr'}(j)=1$), consensus is also affected by the covariance $\sigma^2_\theta\rho_r\rho_{r'}$ induced by the common anchor. Unlike reliability coefficients based on the intra-class correlation, which are bounded in $[0,1]$ by construction and therefore cannot represent systematic disagreement \citep[e.g.][]{mignemi2025BNP, Martinkova2023, tenHove2021, Joreskog_74}, $LIC_{rr'}$ takes values in $(-1,1)$. A negative value can arise only when the two raters share the preceding subject and have opposite anchoring tendencies strong enough to outweigh the common signal $\theta_j$. The index thus distinguishes latent consensus ($LIC_{rr'}>0$), absence of linear association ($LIC_{rr'}\approx 0$), and sequence-induced latent dissensus ($LIC_{rr'}<0$). 
Because $LIC_{rr'}$ is defined conditionally on $G$ and the rater-specific parameters, it can also be computed for raters with no subjects in common ($\mathcal{J}_r\cap\mathcal{J}_{r'}=\emptyset$), in which case it is a model-implied prediction rather than an estimate from overlapping ratings.

The signal-to-noise form of 
Eq.~\eqref{ICC} for $s_{rr'}=0$ relates the $LIC_{rr'}$ to 
multilevel reliability measures for latent scores within the \gls{irt} 
\citep{ChoShenNaveiras2019} and to variance-component formulations of inter-rater reliability 
\citep[e.g.,][]{mignemi2025BNP,lin2025reliability,Martinkova2023,tenHove2021,Joreskog_74}. 
Unlike conventional agreement and reliability coefficients, which provide only group-level information \citep{Eckes2017}, $LIC_{rr'}$ is informative at the level of individual rater pairs. However, $LIC_{rr'}$ should be interpreted specifically as latent inter-rater consensus under the fitted model rather than as the reliability of the observed ratings.

\section{Posterior Inference}\label{posterior}

Posterior inference is performed via Markov chain Monte Carlo (MCMC) with a custom algorithm implemented in NIMBLE \citep{nimble-software2026}. Priors and samplers for all parameter blocks are summarized in Table~\ref{tab:samplers}; the full prior specification and posterior computation strategy are detailed in Appendix~\ref{Posterior_App}, and the complete code is shared in the Supplementary Materials. Normal distributions are parameterized by mean and variance, Gamma by shape and rate, and inverse-Gamma by shape and scale. The P-spline in Eq.~(\ref{P-Spline}) uses cubic B-splines with $m=6$ equidistant inner knots, and $\boldsymbol{\Omega}=\boldsymbol{D}^T\boldsymbol{D}$ is the second-order difference penalty matrix \citep{heinzl2012}. The discount parameter of the \gls{pyp} is fixed at $d\in\{0.3,0.6\}$, with $d=0$ corresponding to the \gls{dp}.

The mixture prior is approximated by a finite stick-breaking construction truncated at $C=25$ components, with $V_C=1$ \citep{Ishwaran2001}. Conjugacy is exploited for the component parameters $(\mu_c,\sigma^2_c)$; labels are drawn from their exact categorical full conditionals, stick-breaking variables are updated by Beta-categorical Gibbs steps, and the remaining blocks use adaptive random-walk Metropolis--Hastings (RWMH), with variance parameters updated on the log scale (Table~\ref{tab:samplers}). Because the cross-lagged links (Figure~\ref{Example_net}) induce strong posterior dependence among subjects rated consecutively, the latent traits $\theta_j$ are updated with multivariate Barker proposals \citep{LivingstoneZanella2022} on \textit{network-informed} blocks, complemented by a second layer of randomly generated blocks to improve global exploration (Appendix~\ref{sec:posterior_computation}). To resolve the location indeterminacy among $\lambda_i$, $\xi_r$ and $\nu^t$ in Eq.~(\ref{Measurement:Dynamic_discrimination1}), we adopt a parameter-expanded weighted-average centering of the P-spline (Appendix~\ref{sec:posterior_computation}).

For all simulated and real-data analyses, each chain is run for 55,000 iterations, the first 5,000 are discarded as burn-in, and the remaining draws are thinned by a factor of 10, yielding 5,000 retained draws. Convergence and mixing are assessed by trace plots, effective sample sizes from the \textsc{R} package \texttt{coda} \citep{CODA}, and multiple runs from different starting values (Section~4 of the Supplementary Materials). Parameters are summarized by their posterior means (EAP) and 95\% quantile-based credible intervals; $LIC_{rr'}$ (Eq.~(\ref{ICC})) is evaluated at each retained iteration, yielding its full posterior. The latent trait density is monitored over a grid of equally spaced points. Subject clustering is summarized by the partition minimizing the posterior expected variation of information loss \citep{Meila_2007,Wade&Ghahramani_2018}, computed from the posterior similarity matrix with the package \texttt{salso} \citep{Dahl_salso}. Average run times for the \gls{gmfr}, \gls{egmfr}, and \gls{degmfr} models were 25, 35, and 43 minutes, respectively, on a 12th Gen Intel Core i9-12900H CPU.

\begin{table}[t]
\centering
\small
\begin{adjustbox}{max width=\textwidth}
\begin{tabular}{l p{4.3cm} l l}
\hline
Parameter & Description & Prior & Sampler \\
\hline
$\theta_j$ & Subject latent trait & DP/PYP mixture & Barker block (network-informed + random) \\
$z_j$ & Cluster membership indicator & Categorical$(\boldsymbol{\pi})$ & Exact categorical \\
$v_c$ & Stick-breaking variable & Beta$(1-d,\,\alpha+cd)$ & Beta-Categorical \\
$\alpha$ & Concentration parameter & Ga$(2,4)$ & Adaptive RWMH \\
$\mu_c$ & Mixture component mean & $N(0,2)$ & Conjugate normal \\
$\sigma^2_c$ & Mixture component variance & IGa$(2.01,1.01)$ & Conjugate inverse-Gamma \\
$\boldsymbol{a}$, $b$ & P-spline basis coefficients; smoothing precision & $N(\mathbf{0},(b\boldsymbol{\Omega})^{-1})$, Ga$(2,0.05)$ & Adaptive RWMH \\
$\lambda_i$, $\xi_r$ & Item log-discrimination; rater discrimination shift & $N(0,2)$ & Adaptive RWMH \\
$\omega_r^2$ & Rater consistency (variance of subject local biases) & $\log\omega_r^2 \sim N(-0.5,2)$ & Adaptive RWMH (log scale) \\
$\varphi_i^2$ & Item residual variance & $\log\varphi_i^2 \sim N(-0.5,2)$ & Adaptive RWMH (log scale) \\
$\beta_{ir}$, $\rho_r$ & Rater local difficulty shift; rater anchoring effect & $N(0,1)$ & Adaptive RWMH \\
$\eta_r^t$ & Rater dynamic local bias & $N(\tilde\rho_r\theta^{t-1},\omega_r^2)$ & Adaptive RWMH \\
$\epsilon_{ij}$ & Item-specific residual & $N(0,\varphi_i^2)$ & Adaptive RWMH \\
$\gamma_{ik}$ & Item step difficulty & $N(0,1.5)$ & Adaptive RWMH \\
$\tau_{rk}$ & Rater threshold step & $N(0,0.5)$ & Adaptive RWMH \\
\hline
\end{tabular}
\end{adjustbox}
\caption{\small Prior and sampler for each block of the joint posterior
(Eq.~(\ref{joint_posterior}) in Appendix~\ref{Posterior_App}); RWMH: random-walk Metropolis--Hastings. Normal distributions are parameterized by mean and variance; for $\eta_r^t$, $\tilde\rho_r=\rho_r\mathbbm{1}\{t>1\}$, so the prior mean is zero for the first subject rated by $r$.}
\label{tab:samplers}
\end{table}

\section{Olfactory Mono-Molecular Odors Assessment}\label{Odor}
Human olfactory perception presents a distinctive setting for rater-mediated assessment: unlike educational or clinical evaluations, odor ratings are highly subjective, vary substantially across individuals, and are susceptible to contextual and sequential effects that depend on the order in which stimuli are presented \citep{tversky1974judgment}. These features make it a natural setting for the \gls{degmfr} framework, which explicitly accounts for rater heterogeneity, item-specific difficulty, and dynamic rater behavior.

The analysis uses the dataset from \cite{bierling2025dataset}, who collected perceptual ratings of 74 mono-molecular odor compounds from a large sample of lay participants. Each participant was instructed to smell 10 odors and evaluate them on eight perceptual dimensions: pleasantness, intensity, familiarity, warmth, coldness, irritation, edibility, and disgust. The study followed an incomplete block design with $2$ anchors; within each odor set, four pseudo-randomized presentation orders (groups A, B, C, D) were used to balance position effects, and two anchor odors were shared across sets to connect the design 
across groups.\\
The original study collected data under two protocols: participants either evaluated the odors remotely at home (\textit{@home}) or in a laboratory setting (\textit{@lab}). We consider the \textit{@home} subset of the study, which provides the largest sample (8,495 observed ratings from 863 participants on 58 odors, after removing missing ratings)  and for which the randomization procedure is fully documented. 
For computational tractability, we further restrict the analysis to a random sample of $R = 100$ raters. 
We model the attractiveness of each odor as a shared underlying olfactory construct, integrating four key perceptual items: \textit{pleasant}, \textit{warm}, \textit{familiar}, and \textit{edible}. These items represent salient perceptual aspects of odor experience \citep{berglund1976dimensionality}. Following \citet{secundo2015individual}, attractiveness is defined as a composite construct combining hedonic (\textit{pleasantness}), cognitive (\textit{familiarity}), biological (\textit{edibility}), and trigeminal (\textit{warm}) facets of odor perception. The resulting dataset comprises $R = 100$ raters, $I = 4$ items, $J = 58$ odors for a total of 1000 rater--odor pairs. Responses were originally recorded on continuous visual analog scales ranging from 0 to 100. For illustrative purposes and to align with our ordinal response model, we coarsened all dimensions into $K+1=5$ ordered categories using common thresholds (0--20, 20--40, 40--60, 60--80, 80--100) applied uniformly across ratings. 
 

We compare six nested models using the Watanabe--Akaike information criterion (WAIC): the \gls{gmfr}, the \gls{gmfr} with a \gls{dp} mixture prior over the latent trait distribution (G-MFR-DP), the \gls{egmfr}, the \gls{degmfr}, and the DEG-MFR models with a \gls{pyp} mixture over the latent trait distribution with discount parameter equal, respectively, to $d=0.3$ and $d=0.6$. WAIC provides a fully Bayesian approach to estimating the out-of-sample expectation, relying on the computed log-pointwise posterior predictive density and a penalty term on the effective number of parameters to prevent overfitting \citep{Gelman2014}. The same comparison for the clinical assessment data is reported in Appendix~\ref{application}. Based on graphical inspections of the MCMCs and diagnostics from the \texttt{coda} package \citep{CODA}, we found no convergence or mixing issues; further details and trace-plot examples are in Section 4 of the Supplementary Materials. Table~\ref {Tab_WAIC_bis} reports the WAIC indices for all fitted models. The DEG-MFR DP model achieves the lowest WAIC, indicating superior out-of-sample predictive performance relative to all competing specifications. Based on this model comparison, we focus on results from the DEG-MFR DP. 

We report Expected A Posteriori (EAP) estimates and 95\% credible intervals for a subset of subject latent traits $\theta_j$, rater-specific parameters ($\xi_r$, $\omega_r^2$, $\rho_r$, $\{\tau_{rk}\}_{k=1}^{K}$), and item discrimination parameters $\lambda_i$ in Table~\ref{Tab_posteriors_Odor} of Appendix~\ref{Additional_fig_tab}. Some raters show EAP estimates of $\rho_r$ larger than 1 in absolute value (see Figure~\ref{Wright_map_odor}), suggesting a strong anchoring effect; however, their 95\% credible intervals still include zero, and values close to zero cannot be excluded.

Figure~\ref{Density_P-Spline_Odor} displays the posterior estimates of the latent trait density, the posterior point estimate of subjects' clustering structure, and of the rater's practice effect curve $\text{exp}(\nu^{t*})$. The estimated latent trait distribution is unimodal and much more concentrated than the standard normal, with a sharper, taller peak slightly below zero, suggesting that odors are fairly similar to one another and tend to receive, on average, lower category scores. The posterior point estimate of the number of clusters in the sample suggested by the VI loss function is $\hat{M_N}=2$ (see Figure~\ref{PSM_real_data_set} in Appendix \ref{Additional_fig_tab}). Subject $j=43$ is assigned to a different cluster than the other subjects, suggesting this odor is systematically more appealing than the others.

The practice effect curve increases from about $0.86$ for the first odor rated to a peak of about $1.10$ between positions $t=8$ and $t=9$, and then slightly decreases to about $1.05$ between $t=9$ and $t=10$. The 95\% credible interval at $t=1$ does not overlap with those at $t=7$ and $t=8$, and from $t=3$ and $t=4$ onward the 95\% credible intervals no longer contain the EAP estimate of $\text{exp}(\nu^{t*})$ at $t=1$. This suggests that raters progressively improve their discrimination ability as they rate more odors.

Figures~\ref{ICC_Odor_Dynamic_1} and~\ref{ICC_Odor_Dynamic_2} in Appendix \ref{Additional_fig_tab} show, respectively, through the Dynamic ICCs of raters $r=38$ and $r=68$, how their specific ICCs change across the respective rating sequences for the same items (\textit{familiar} and \textit{warm}, respectively). The larger values of the anchoring effect $\rho_{38}$ of rater $r=38$ are reflected in the more prominent shifts in the relative ICCs across the rated odors, suggesting a poor, inconsistent use of the response categories for \textit{familiar} across different odors.Figure~\ref{ICC_Odor_2} compares the ICCs of the subset of raters $\mathcal{R}_j$ who score odor $j=50$ on \textit{warm}. For the same odor and the same item, raters imply markedly different ICCs, and thus different probabilities of assigning each category.

Table~\ref{post_summary_LIC_odors} in Appendix~\ref{Additional_fig_tab} reports the posterior estimates of the pairwise latent inter-rater consensus (LIC; Eq.~(\ref{ICC}) in Section~\ref{Agreement}) between rater 1 and other raters under different rating configurations. EAP estimates are small to moderate ($0.280$--$0.410$). For raters 1 and 3, with opposite anchoring tendencies ($\rho_1\rho_3<0$), consensus is highest when both rate the odor first ($0.370$) and lowest when both have just rated the same odor ($0.280$), consistent with Proposition~\ref{prop_1}; however, credible intervals largely overlap.

The Wright map in Figure~\ref{Wright_map_odor} displays the rater local difficulty shifts for \textit{warm} on a subset of raters. The estimated shifts span a range comparable to the latent trait distribution. For instance, the 95\% credible intervals of $\beta_{2,83}$ and $\beta_{2,70}$ do not overlap, indicating that raters 83 and 70 differ credibly in how they locate the second threshold of the \textit{warm} scale.


\begin{table}
\centering
\begin{tabular}{l c c}
  \toprule
  Fitted Model & $WAIC$ & $\Delta WAIC$ \\
  \midrule
  DEG-MFR DP Model            & 9646.0  & $-$ \\
  DEG-MFR PYP Model $(d=0.3)$ & 9656.1  & -10.1 \\
  EG-MFR DP Model             & 9659.8  & -13.8 \\
  DEG-MFR PYP Model $(d=0.6)$ & 9667.6  & -21.6 \\
  G-MFR DP Model              & 11158.7 & -1512.6 \\
  G-MFR Model                 & 11166.3 & -1520.3 \\
        \bottomrule
        \end{tabular}
\caption[]{\small The Watanabe--Akaike information criterion (WAIC) is reported for each of the fitted models on Odors Assessment data: Generalized Many-Facet Rasch model (G-MFR), the Generalized Many-Facet Rasch model with a \gls{dp} mixture prior over the latent trait distribution (G-MFRM DP), the Extended Generalized Many-Facet Rasch model (EG-MFR DP), the Dynamic Extended Generalized Many-Facet Rasch model (DEG-MFR DP), and the DEG-MFR models with a \gls{pyp} mixture over the latent trait distribution (DEG-MFR PYP) Model with discount parameter equal, respectively, to $d=0.3$ and $d=0.6$. The pairwise WAIC difference ($\Delta WAIC$) between the model with the best fit and the others is reported.}
\label{Tab_WAIC_bis}
\end{table}

\begin{figure}
    \centering
    \includegraphics[scale=0.75]{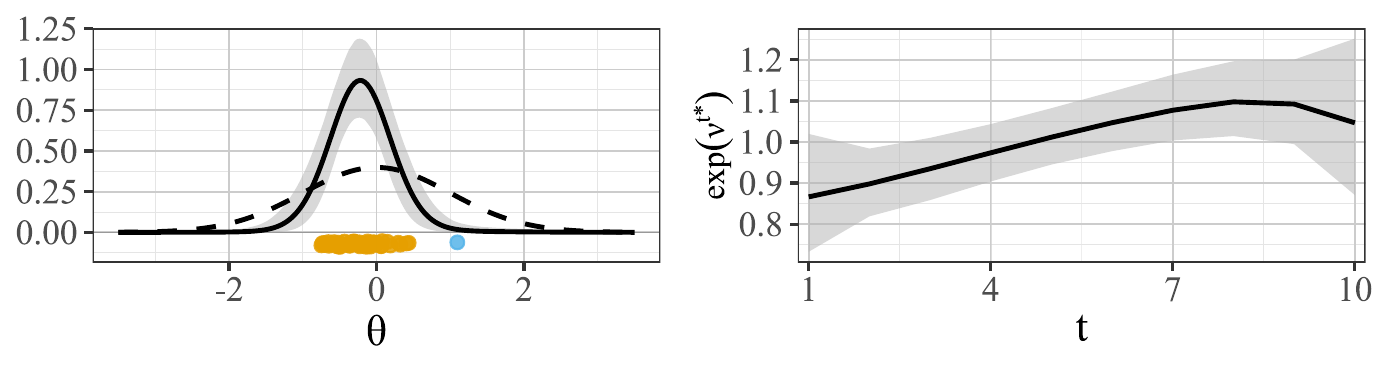}
    \caption{ Left panel: graphical representation of the estimated latent trait density; the black solid line indicates the Expected A Posteriori, the shaded grey area indicates the 95\% quantile-based credible interval, and the black dashed line indicates the standard normal reference density. Right panel: graphical representation of the estimated rater practice effect $\text{exp}(\nu^{t*})$; the black solid line and the shaded grey area indicate, respectively, the  Expected A Posteriori and the 95\% quantile-based credible interval of the P-Spline.}
    \label{Density_P-Spline_Odor}
\end{figure}

\begin{figure}
    \centering
     \includegraphics[scale= 0.75]{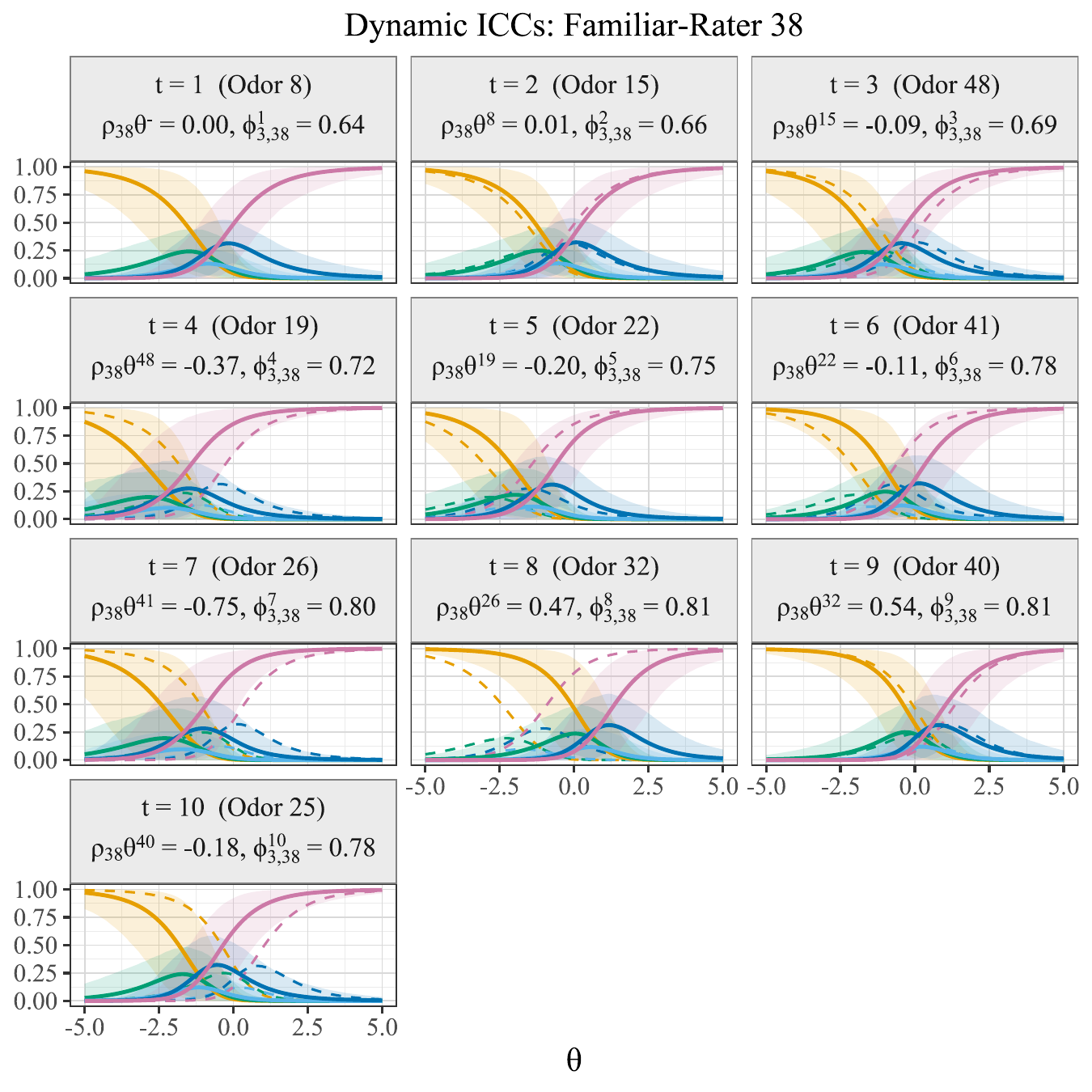}
 \caption{\small Dynamic Item-Rater Characteristic Curve for \textit{familiar} and rater 38 across the sequence of odors assessed, from Olfactory Assessment data. The orange, green, aquamarine, blue, and violet lines represent the five ordered response categories; the corresponding solid lines and shaded areas indicate the Expected A Posteriori and the 95\% quantile-based credible interval for the category response curves, respectively. Dashed lines reproduce the Expected A Posteriori curves of the odor rated at the previous position $t-1$, so that each panel shows how location and shape of the curves change from one odor to the next. Each panel title reports the posterior mean of the anchoring effect $\rho_r\theta^{t-1}$ and of the dynamic joint discrimination $\phi^t_{ir}$. Curves are computed controlling for the odor-specific item residual $\epsilon_{ij}$, so that differences across panels reflect only rater-related effects. }
    \label{ICC_Odor_Dynamic_1}
\end{figure}

\begin{figure}
    \centering
    \includegraphics[scale=0.6]{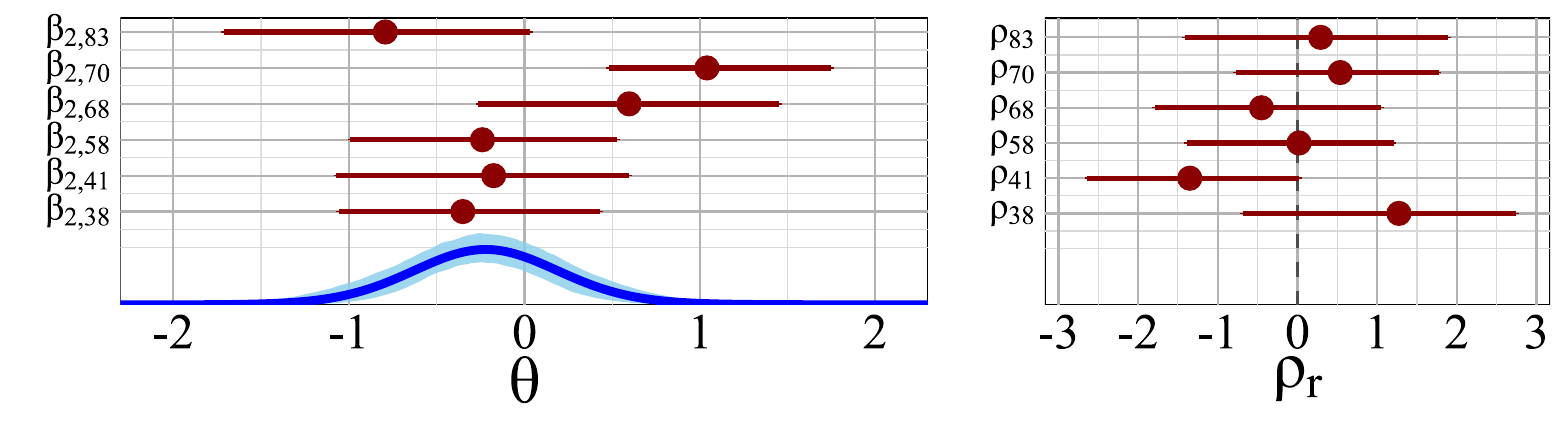}
    \caption{\small Left panel: Wright map of raters' local difficulty shifts $\beta_{ir}$ for \textit{warm} on a subset of raters from Odors Assessment data. Right panel: posterior anchoring effects $\rho_r$ for the same raters; the dashed vertical line marks $\rho_r=0$. For each parameter, the Expected A Posteriori value and the 95\% quantile-based credible interval are indicated by the red point and the bar, respectively. The Expected A Posteriori value and the 95\% quantile-based credible interval of the latent trait density are indicated, respectively, by the blue solid line and the blue shaded area.}
    \label{Wright_map_odor}
\end{figure}

\begin{figure}
    \centering
    \includegraphics[scale= 0.9]{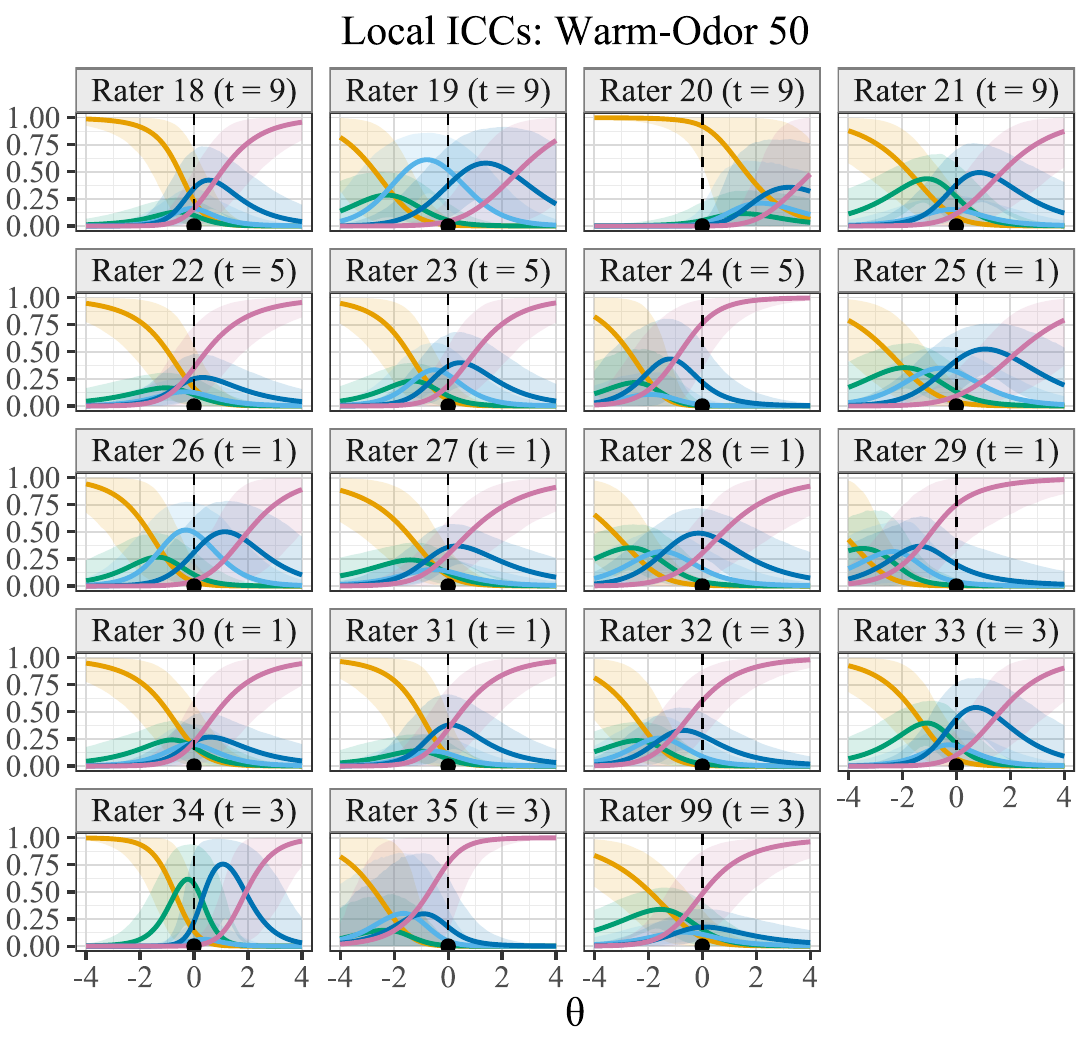}

    \caption{\small Example of Item-Odor Characteristic Curves for \textit{warm} and odor $j=50$ from Olfactory Assessment data. The orange, green, aquamarine, blue, and violet lines represent, respectively, the five ordered response categories; the corresponding solid lines and shaded areas indicate, respectively, the Expected A Posteriori and the 95\% quantile-based credible interval for the category response curves; the black point and the vertical dashed line indicate the Expected A Posteriori latent trait of odor 50.}
    \label{ICC_Odor_2}
\end{figure}

\newpage
\section{Conclusion}\label{Discussion}
We introduced the \acrfull{degmfr} model, a flexible Bayesian semiparametric framework for rater-mediated assessment that aims to combine the modeling strategies of the two dominant research traditions in \gls{irt}-based multi-rater modeling developed over the past three decades. The first tradition, originating with \cite{linacre1989many} and extended through the family of \gls{gmfr} models \citep{Uto_2016, Uto_2020, Uto_24}, focuses on modeling the joint effect of item characteristics and rater-specific features on the probability of observed scores. The second, rooted in the Hierarchical Rater Model \citep{Patz2002} and its extensions \citep{Molenaar2021, Casabianca_2015}, emphasizes the distinction between the subject's \textit{ideal rating} and the raters' rating. The \gls{degmfr} model bridges these two lines by simultaneously accounting for multiple dyad-specific effects (i.e., item-subject difficulty, dynamic rater-subject severity, and item-category and rater-category difficulty), within a single, coherent statistical framework. This formulation addresses the \textit{information accumulation} problem identified by \cite{mariano2002critique} and enables richer inferences about both subjects and raters than either tradition alone affords. 

A distinctive practical contribution of the framework is its ability to produce \textit{personalized} posterior summaries for each rater and subject. The rater-specific parameters---local difficulty shifts $\beta_{ir}$, consistency $\omega_r^2$, discrimination $\xi_r$, and anchoring bias $\rho_r$---together with the average practice effect $\nu^{t*}$, characterize different aspects of rating behavior and can be used to investigate systematic biases, identify potentially inconsistent evaluators, and provide information for formative feedback. Heterogeneity in these parameters translates directly into heterogeneity in rating category probabilities: as illustrated in Figures~\ref{ICC_Odor_2} and~\ref{ICC_H} in Appendix \ref{Additional_fig_tab}, different raters produce markedly different Item Characteristic Curves across and within items, reflecting the combined influence of item-specific difficulty $\epsilon_{ij}$, rater-specific severity $\beta_{ir}$, and the dynamic effects captured by the practice effect $\nu^{t*}$, the anchoring bias $\rho_r$, and the dynamic local bias $\eta_r^t$. Figure~\ref{ICC_Odor_Dynamic_1} further illustrates how rater's category response curves can change across the rating sequence, reflecting the combined effects of sequential changes in location and discrimination. 
The Wright map of rater local difficulty shifts (Figure~\ref{Wright_map_odor}) and the posterior pairwise LIC estimates (Table~\ref{post_summary_LIC_odors}) provide complementary tools for auditing rater quality at both the individual and population level. From the subject's perspective, the individual latent trait estimate $\theta_j$ and its credible interval yield a model-adjusted estimate of ability that accounts for the particular raters and items encountered. These features align naturally with the goals of \textit{precision education} \citep{Cook_2018_precision}, where individualized diagnosis and targeted intervention require measurement models that disentangle learner ability from the idiosyncratic characteristics of the assessment instrument and its evaluators.\\
The local independence assumption that typically characterizes subjects within Many-Facet Rasch Models \citep{Uto_2020,linacre1989many} is relaxed at the measurement level through the rater anchoring effect, which introduces sequential dependence between ratings of different subjects. At the structural level, the nonparametric mixture independently relaxes the assumption of a homogeneous parametric latent-trait distribution, allowing latent heterogeneity and clustering among subjects.
\\
Beyond individual profiling, the framework supports a principled analysis of \textit{latent inter-rater consensus}. 
We proposed a pairwise Latent Inter-Rater Consensus index, $LIC_{rr'}$ (Section~\ref{Agreement}), defined as the correlation between the bias-adjusted latent evaluations of two raters for the same subject. 

Methodologically, $LIC_{rr'}$ admits a more general characterization than the intraclass and reliability coefficients typically derived in multilevel \gls{irt} \citep{ChoShenNaveiras2019}. Unlike these measures, it takes values in $(-1,1)$, so negative values can signal sequence-induced latent dissensus, and, being pairwise, it can inform rating designs, for instance in how raters are paired \citep{glas2024reducing}.
\\
The simulation results in Appendix~\ref{sim} show that the \gls{egmfr} and \gls{degmfr} models generally improve parameter recovery relative to the \gls{gmfr} model, with the \gls{degmfr} model providing more accurate latent-trait estimates under the dynamic generative scenarios. In the clinical OSCE application (Appendix~\ref{application}), the rater-specific posterior summaries identified a rater who was both less discriminating and less consistent across subjects than the rest of the panel, and who correspondingly showed the lowest latent consensus with the other raters, as well as item-specific differences in rater severity. 
These properties may support future applications in automated or computer-assisted scoring systems, in which posterior estimates could inform item calibration, rater monitoring, and score reporting. 
\\ 
An increasingly relevant application domain is the comparison of human and AI-generated ratings \citep{Uto_2021_AI}. The \gls{degmfr} framework could be useful in this setting because it allows multiple sources of rater heterogeneity and sequential dependence to be estimated and compared across different types of evaluators. This may provide a basis for studying hybrid scoring systems combining human and AI-generated ratings.

Several limitations of the present work point to directions for future research. The model assumes a unidimensional latent trait, which may be overly restrictive in settings where subjects are evaluated on conceptually distinct dimensions; a multidimensional extension incorporating a multivariate nonparametric prior over a vector of latent traits would broaden the scope of the framework considerably. 
Under the current formulation, the effect of rating position on log-discrimination, $\nu^t$, is a trend shared across raters. This may be restrictive in the presence of high rater heterogeneity, and more flexible model specifications could allow rater-specific deviations from this average latent trend. However, based on our preliminary simulation results, a large number of items and subjects are needed to estimate $\nu^t$ and rater-specific deviations jointly.
While the P-spline captures smooth rating-position trends in rater discrimination, more abrupt change-points arising from fatigue or recalibration may require nonparametric change-point models. Extending the framework to longitudinal settings, in which the same subjects are rated at multiple time points, would enable joint modeling of ability growth and rater dynamics \citep{Uto_2022}. Under the proposed framework, we assume that subjects and raters belong to disjoint sets, which makes it unsuitable for peer grading settings in which the same individual is both assessed and evaluates others \citep{Mignemi_Chen_Moustaki_2025}. These directions are left for future investigation.

\section*{Acknowledgement}
The study was partly funded by the project ``Research of Excellence on Digital Technologies and Wellbeing CZ.02.01.01/\allowbreak 00/\allowbreak 22\_008/\allowbreak 0004583'' which is co-financed by the European Union.\\

\bibliographystyle{apacite}

\bibliography{reference} 

\begin{thebibliography}{}

\bibitem [\protect \citeauthoryear {%
Dahl%
, Johnson%
\BCBL {}\ \BBA {} M{\"u}ller%
}{%
Dahl%
\ \protect \BOthers {.}}{%
{\protect \APACyear {2022}}%
}]{%
Dahl_salso}
\APACinsertmetastar {%
Dahl_salso}%
\begin{APACrefauthors}%
Dahl, D\BPBI B.%
, Johnson, D\BPBI J.%
\BCBL {}\ \BBA {} M{\"u}ller, P.%
\end{APACrefauthors}%
\unskip\
\newblock
\APACrefYearMonthDay{2022}{}{}.
\newblock
{\BBOQ}\APACrefatitle {Search Algorithms and Loss Functions for {Bayesian}
  Clustering} {Search algorithms and loss functions for {Bayesian}
  clustering}.{\BBCQ}
\newblock
\APACjournalVolNumPages{Journal of Computational and Graphical
  Statistics}{31}{4}{1189--1201}.
\PrintBackRefs{\CurrentBib}

\bibitem [\protect \citeauthoryear {%
{de Valpine}%
\ \protect \BOthers {.}}{%
{de Valpine}%
\ \protect \BOthers {.}}{%
{\protect \APACyear {2026}}%
}]{%
nimble-software2026}
\APACinsertmetastar {%
nimble-software2026}%
\begin{APACrefauthors}%
{de Valpine}, P.%
, Paciorek, C.%
, Turek, D.%
, Michaud, N.%
, Anderson-Bergman, C.%
, Obermeyer, F.%
\BDBL {}Paganin, S.%
\end{APACrefauthors}%
\unskip\
\newblock
\APACrefYearMonthDay{2026}{}{}.
\newblock
{\BBOQ}\APACrefatitle {{NIMBLE}: {MCMC}, Particle Filtering, and Programmable
  Hierarchical Modeling} {{NIMBLE}: {MCMC}, particle filtering, and
  programmable hierarchical modeling}{\BBCQ}\ [\bibcomputersoftwaremanual].
\newblock
\begin{APACrefURL} \url{https://cran.r-project.org/package=nimble}
  \end{APACrefURL}
\newblock
\APACrefnote{{R} package version 1.4.2}
\PrintBackRefs{\CurrentBib}

\bibitem [\protect \citeauthoryear {%
De~Blasi%
\ \protect \BOthers {.}}{%
De~Blasi%
\ \protect \BOthers {.}}{%
{\protect \APACyear {2015}}%
}]{%
DeBlasi2015}
\APACinsertmetastar {%
DeBlasi2015}%
\begin{APACrefauthors}%
De~Blasi, P.%
, Favaro, S.%
, Lijoi, A.%
, Mena, R\BPBI H.%
, Pr{\"u}nster, I.%
\BCBL {}\ \BBA {} Ruggiero, M.%
\end{APACrefauthors}%
\unskip\
\newblock
\APACrefYearMonthDay{2015}{}{}.
\newblock
{\BBOQ}\APACrefatitle {Are Gibbs-Type Priors the Most Natural Generalization of
  the {D}irichlet Process?} {Are gibbs-type priors the most natural
  generalization of the {D}irichlet process?}{\BBCQ}
\newblock
\APACjournalVolNumPages{IEEE Transactions on pattern analysis and machine
  intelligence}{37}{}{}.
\PrintBackRefs{\CurrentBib}

\bibitem [\protect \citeauthoryear {%
De~Boeck%
\ \BBA {} Wilson%
}{%
De~Boeck%
\ \BBA {} Wilson%
}{%
{\protect \APACyear {2004}}%
}]{%
deboeck2004explanatory}
\APACinsertmetastar {%
deboeck2004explanatory}%
\begin{APACrefauthors}%
De~Boeck, P.%
\BCBT {}\ \BBA {} Wilson, M.%
\end{APACrefauthors}%
\unskip\
\newblock
\APACrefYear{2004}.
\newblock
\APACrefbtitle {Explanatory Item Response Models: A Generalized Linear and
  Nonlinear Approach} {Explanatory item response models: A generalized linear
  and nonlinear approach}.
\newblock
\APACaddressPublisher{New York}{Springer}.
\PrintBackRefs{\CurrentBib}

\bibitem [\protect \citeauthoryear {%
Eckes%
\ \BBA {} Jin%
}{%
Eckes%
\ \BBA {} Jin%
}{%
{\protect \APACyear {2024}}%
}]{%
Eckes_2024}
\APACinsertmetastar {%
Eckes_2024}%
\begin{APACrefauthors}%
Eckes, T.%
\BCBT {}\ \BBA {} Jin, K\BPBI Y.%
\end{APACrefauthors}%
\unskip\
\newblock
\APACrefYearMonthDay{2024}{}{}.
\newblock
{\BBOQ}\APACrefatitle {Human ratings take time: {A} hierarchical facets model
  for the joint analysis of ratings and rating times} {Human ratings take time:
  {A} hierarchical facets model for the joint analysis of ratings and rating
  times}.{\BBCQ}
\newblock
\APACjournalVolNumPages{Behavior Research Methods}{56}{}{3535--3547}.
\PrintBackRefs{\CurrentBib}

\bibitem [\protect \citeauthoryear {%
Eilers%
\ \BBA {} Marx%
}{%
Eilers%
\ \BBA {} Marx%
}{%
{\protect \APACyear {1996}}%
}]{%
EilersMarx1996}
\APACinsertmetastar {%
EilersMarx1996}%
\begin{APACrefauthors}%
Eilers, P\BPBI H\BPBI C.%
\BCBT {}\ \BBA {} Marx, B\BPBI D.%
\end{APACrefauthors}%
\unskip\
\newblock
\APACrefYearMonthDay{1996}{}{}.
\newblock
{\BBOQ}\APACrefatitle {Flexible smoothing with {B}-splines and penalties}
  {Flexible smoothing with {B}-splines and penalties}.{\BBCQ}
\newblock
\APACjournalVolNumPages{Statistical Science}{11}{2}{89--102}.
\PrintBackRefs{\CurrentBib}

\bibitem [\protect \citeauthoryear {%
Escobar%
\ \BBA {} West%
}{%
Escobar%
\ \BBA {} West%
}{%
{\protect \APACyear {1994}}%
}]{%
Escobar_95}
\APACinsertmetastar {%
Escobar_95}%
\begin{APACrefauthors}%
Escobar, M.%
\BCBT {}\ \BBA {} West, M.%
\end{APACrefauthors}%
\unskip\
\newblock
\APACrefYearMonthDay{1994}{}{}.
\newblock
{\BBOQ}\APACrefatitle {Bayesian Density Estimation and Inference Using
  Mixtures} {Bayesian density estimation and inference using mixtures}.{\BBCQ}
\newblock
\APACjournalVolNumPages{Journal of the American Statistical
  Association}{90}{}{577--588}.
\PrintBackRefs{\CurrentBib}

\bibitem [\protect \citeauthoryear {%
Gelman%
\ \protect \BOthers {.}}{%
Gelman%
\ \protect \BOthers {.}}{%
{\protect \APACyear {2013}}%
}]{%
BDA3}
\APACinsertmetastar {%
BDA3}%
\begin{APACrefauthors}%
Gelman, A.%
, Carlin, J.%
, Stern, H.%
, Dunson, D.%
, Vehtari, A.%
\BCBL {}\ \BBA {} Rubin, D.%
\end{APACrefauthors}%
\unskip\
\newblock
\APACrefYear{2013}.
\newblock
\APACrefbtitle {Bayesian Data Analysis, Third Edition} {Bayesian data analysis,
  third edition}.
\newblock
\APACaddressPublisher{}{Taylor \& Francis}.
\PrintBackRefs{\CurrentBib}

\bibitem [\protect \citeauthoryear {%
Gelman%
, Hwang%
\BCBL {}\ \BBA {} Vehtari%
}{%
Gelman%
\ \protect \BOthers {.}}{%
{\protect \APACyear {2014}}%
}]{%
Gelman2014}
\APACinsertmetastar {%
Gelman2014}%
\begin{APACrefauthors}%
Gelman, A.%
, Hwang, J.%
\BCBL {}\ \BBA {} Vehtari, A.%
\end{APACrefauthors}%
\unskip\
\newblock
\APACrefYearMonthDay{2014}{}{}.
\newblock
{\BBOQ}\APACrefatitle {Understanding predictive information criteria for
  {B}ayesian models} {Understanding predictive information criteria for
  {B}ayesian models}.{\BBCQ}
\newblock
\APACjournalVolNumPages{Stat Comput}{24}{}{997-1016}.
\PrintBackRefs{\CurrentBib}

\bibitem [\protect \citeauthoryear {%
Ghosal%
, Ghosh%
\BCBL {}\ \BBA {} Ramamoorthi%
}{%
Ghosal%
\ \protect \BOthers {.}}{%
{\protect \APACyear {1999}}%
}]{%
Ghosal1999}
\APACinsertmetastar {%
Ghosal1999}%
\begin{APACrefauthors}%
Ghosal, S.%
, Ghosh, J\BPBI K.%
\BCBL {}\ \BBA {} Ramamoorthi, R\BPBI V.%
\end{APACrefauthors}%
\unskip\
\newblock
\APACrefYearMonthDay{1999}{}{}.
\newblock
{\BBOQ}\APACrefatitle {Posterior consistency of {D}irichlet mixtures in density
  estimation} {Posterior consistency of {D}irichlet mixtures in density
  estimation}.{\BBCQ}
\newblock
\APACjournalVolNumPages{Annals of Statistics}{27}{}{}.
\PrintBackRefs{\CurrentBib}

\bibitem [\protect \citeauthoryear {%
Ghosal%
\ \BBA {} van~der Vaart%
}{%
Ghosal%
\ \BBA {} van~der Vaart%
}{%
{\protect \APACyear {2017}}%
}]{%
Ghosal_van_der_Vaart_2017}
\APACinsertmetastar {%
Ghosal_van_der_Vaart_2017}%
\begin{APACrefauthors}%
Ghosal, S.%
\BCBT {}\ \BBA {} van~der Vaart, A.%
\end{APACrefauthors}%
\unskip\
\newblock
\APACrefYear{2017}.
\newblock
\APACrefbtitle {Fundamentals of Nonparametric {B}ayesian Inference}
  {Fundamentals of nonparametric {B}ayesian inference}.
\newblock
\APACaddressPublisher{}{Cambridge University Press}.
\PrintBackRefs{\CurrentBib}

\bibitem [\protect \citeauthoryear {%
Heinzl%
, Fahrmeir%
\BCBL {}\ \BBA {} Kneib%
}{%
Heinzl%
\ \protect \BOthers {.}}{%
{\protect \APACyear {2012}}%
}]{%
heinzl2012}
\APACinsertmetastar {%
heinzl2012}%
\begin{APACrefauthors}%
Heinzl, F.%
, Fahrmeir, L.%
\BCBL {}\ \BBA {} Kneib, T.%
\end{APACrefauthors}%
\unskip\
\newblock
\APACrefYearMonthDay{2012}{}{}.
\newblock
{\BBOQ}\APACrefatitle {Additive mixed models with {D}irichlet process mixture
  and P-spline priors} {Additive mixed models with {D}irichlet process mixture
  and p-spline priors}.{\BBCQ}
\newblock
\APACjournalVolNumPages{Asta advances in statistical analysis}{96}{}{47--68}.
\PrintBackRefs{\CurrentBib}

\bibitem [\protect \citeauthoryear {%
Hjort%
, Holmes%
, M{\"u}ller%
\BCBL {}\ \BBA {} Walker%
}{%
Hjort%
\ \protect \BOthers {.}}{%
{\protect \APACyear {2010}}%
}]{%
Hjort_Holmes_Muller_Walker_2010}
\APACinsertmetastar {%
Hjort_Holmes_Muller_Walker_2010}%
\begin{APACrefauthors}%
Hjort, N.%
, Holmes, C.%
, M{\"u}ller, P.%
\BCBL {}\ \BBA {} Walker, S.%
\end{APACrefauthors}%
\unskip\
\newblock
\APACrefYear{2010}.
\newblock
\APACrefbtitle {Bayesian Nonparametrics} {Bayesian nonparametrics}.
\newblock
\APACaddressPublisher{}{Cambridge University Press}.
\PrintBackRefs{\CurrentBib}

\bibitem [\protect \citeauthoryear {%
Ishwaran%
\ \BBA {} James%
}{%
Ishwaran%
\ \BBA {} James%
}{%
{\protect \APACyear {2001}}%
}]{%
Ishwaran2001}
\APACinsertmetastar {%
Ishwaran2001}%
\begin{APACrefauthors}%
Ishwaran, H.%
\BCBT {}\ \BBA {} James, L.%
\end{APACrefauthors}%
\unskip\
\newblock
\APACrefYearMonthDay{2001}{}{}.
\newblock
{\BBOQ}\APACrefatitle {Gibbs Sampling Methods for Stick-Breaking Priors} {Gibbs
  sampling methods for stick-breaking priors}.{\BBCQ}
\newblock
\APACjournalVolNumPages{Journal of the American Statistical
  Association}{96}{}{161-173}.
\PrintBackRefs{\CurrentBib}

\bibitem [\protect \citeauthoryear {%
Livingstone%
\ \BBA {} Zanella%
}{%
Livingstone%
\ \BBA {} Zanella%
}{%
{\protect \APACyear {2022}}%
}]{%
LivingstoneZanella2022}
\APACinsertmetastar {%
LivingstoneZanella2022}%
\begin{APACrefauthors}%
Livingstone, S.%
\BCBT {}\ \BBA {} Zanella, G.%
\end{APACrefauthors}%
\unskip\
\newblock
\APACrefYearMonthDay{2022}{}{}.
\newblock
{\BBOQ}\APACrefatitle {The Barker proposal: combining robustness and efficiency
  in gradient-based {MCMC}} {The barker proposal: combining robustness and
  efficiency in gradient-based {MCMC}}.{\BBCQ}
\newblock
\APACjournalVolNumPages{Journal of the Royal Statistical Society Series
  B}{84}{2}{496--523}.
\PrintBackRefs{\CurrentBib}

\bibitem [\protect \citeauthoryear {%
Mariano%
}{%
Mariano%
}{%
{\protect \APACyear {2002}}%
}]{%
mariano2002critique}
\APACinsertmetastar {%
mariano2002critique}%
\begin{APACrefauthors}%
Mariano, L\BPBI T.%
\end{APACrefauthors}%
\unskip\
\newblock
\APACrefYear{2002}.
\unskip\
\newblock
\APACrefbtitle {{Information Accumulation and the Many-Facet {Rasch} Model: A
  Critical Examination}} {{Information Accumulation and the Many-Facet {Rasch}
  Model: A Critical Examination}}\ \APACtypeAddressSchool {\BUPhD}{}{}.
\unskip\
\newblock
\APACaddressSchool {}{RAND Graduate School}.
\unskip\
\newblock
\APACrefnote{Doctoral dissertation discussing limitations of conventional MFRM
  (e.g., information accumulation)}
\PrintBackRefs{\CurrentBib}

\bibitem [\protect \citeauthoryear {%
Mariano%
\ \BBA {} Junker%
}{%
Mariano%
\ \BBA {} Junker%
}{%
{\protect \APACyear {2007}}%
}]{%
Mariano2007covariates}
\APACinsertmetastar {%
Mariano2007covariates}%
\begin{APACrefauthors}%
Mariano, L\BPBI T.%
\BCBT {}\ \BBA {} Junker, B\BPBI W.%
\end{APACrefauthors}%
\unskip\
\newblock
\APACrefYearMonthDay{2007}{}{}.
\newblock
{\BBOQ}\APACrefatitle {Covariates of the Rating Process in Hierarchical Models
  for Multiple Ratings of Test Items} {Covariates of the rating process in
  hierarchical models for multiple ratings of test items}.{\BBCQ}
\newblock
\APACjournalVolNumPages{Journal of Educational and Behavioral
  Statistics}{32}{3}{287--314}.
\PrintBackRefs{\CurrentBib}

\bibitem [\protect \citeauthoryear {%
Meil{\u{a}}%
}{%
Meil{\u{a}}%
}{%
{\protect \APACyear {2007}}%
}]{%
Meila_2007}
\APACinsertmetastar {%
Meila_2007}%
\begin{APACrefauthors}%
Meil{\u{a}}, M.%
\end{APACrefauthors}%
\unskip\
\newblock
\APACrefYearMonthDay{2007}{}{}.
\newblock
{\BBOQ}\APACrefatitle {Comparing clusterings---an information based distance}
  {Comparing clusterings---an information based distance}.{\BBCQ}
\newblock
\APACjournalVolNumPages{Journal of Multivariate Analysis}{98}{5}{873-895}.
\PrintBackRefs{\CurrentBib}

\bibitem [\protect \citeauthoryear {%
Mignemi%
, Chen%
\BCBL {}\ \BBA {} Moustaki%
}{%
Mignemi%
\ \protect \BOthers {.}}{%
{\protect \APACyear {2025}}%
}]{%
Mignemi_Chen_Moustaki_2025}
\APACinsertmetastar {%
Mignemi_Chen_Moustaki_2025}%
\begin{APACrefauthors}%
Mignemi, G.%
, Chen, Y.%
\BCBL {}\ \BBA {} Moustaki, I.%
\end{APACrefauthors}%
\unskip\
\newblock
\APACrefYearMonthDay{2025}{}{}.
\newblock
{\BBOQ}\APACrefatitle {Unfolding the Network of Peer Grades: A Latent Variable
  Approach} {Unfolding the network of peer grades: A latent variable
  approach}.{\BBCQ}
\newblock
\APACjournalVolNumPages{Psychometrika}{}{}{1--22}.
\PrintBackRefs{\CurrentBib}

\bibitem [\protect \citeauthoryear {%
Mignemi%
\ \BBA {} Manolopoulou%
}{%
Mignemi%
\ \BBA {} Manolopoulou%
}{%
{\protect \APACyear {2025}}%
}]{%
mignemi2025BNP}
\APACinsertmetastar {%
mignemi2025BNP}%
\begin{APACrefauthors}%
Mignemi, G.%
\BCBT {}\ \BBA {} Manolopoulou, I.%
\end{APACrefauthors}%
\unskip\
\newblock
\APACrefYearMonthDay{2025}{}{}.
\newblock
{\BBOQ}\APACrefatitle {Bayesian Nonparametric Models for Multiple Raters: A
  General Statistical Framework} {Bayesian nonparametric models for multiple
  raters: A general statistical framework}.{\BBCQ}
\newblock
\APACjournalVolNumPages{Psychometrika}{90}{4}{}.
\PrintBackRefs{\CurrentBib}

\bibitem [\protect \citeauthoryear {%
Paganin%
\ \protect \BOthers {.}}{%
Paganin%
\ \protect \BOthers {.}}{%
{\protect \APACyear {2023}}%
}]{%
Paganin2023}
\APACinsertmetastar {%
Paganin2023}%
\begin{APACrefauthors}%
Paganin, S.%
, Paciorek, C\BPBI J.%
, Wehrhahn, C.%
, Rodr{\'i}guez, A.%
, Rabe-Hesketh, S.%
\BCBL {}\ \BBA {} de Valpine, P.%
\end{APACrefauthors}%
\unskip\
\newblock
\APACrefYearMonthDay{2023}{}{}.
\newblock
{\BBOQ}\APACrefatitle {Computational Strategies and Estimation Performance With
  {Bayesian} Semiparametric Item Response Theory Models} {Computational
  strategies and estimation performance with {Bayesian} semiparametric item
  response theory models}.{\BBCQ}
\newblock
\APACjournalVolNumPages{Journal of Educational and Behavioral
  Statistics}{48}{2}{147-188}.
\PrintBackRefs{\CurrentBib}

\bibitem [\protect \citeauthoryear {%
Patz%
, Junker%
, Johnson%
\BCBL {}\ \BBA {} Mariano%
}{%
Patz%
\ \protect \BOthers {.}}{%
{\protect \APACyear {2002}}%
}]{%
Patz2002}
\APACinsertmetastar {%
Patz2002}%
\begin{APACrefauthors}%
Patz, R\BPBI J.%
, Junker, B\BPBI W.%
, Johnson, M\BPBI S.%
\BCBL {}\ \BBA {} Mariano, L\BPBI T.%
\end{APACrefauthors}%
\unskip\
\newblock
\APACrefYearMonthDay{2002}{}{}.
\newblock
{\BBOQ}\APACrefatitle {The Hierarchical Rater Model for Rated Test Items and
  its Application to Large-Scale Educational Assessment Data} {The hierarchical
  rater model for rated test items and its application to large-scale
  educational assessment data}.{\BBCQ}
\newblock
\APACjournalVolNumPages{{Journal of Educational and Behavioral
  Statistics}}{27}{4}{341-384}.
\PrintBackRefs{\CurrentBib}

\bibitem [\protect \citeauthoryear {%
Plummer%
, Best%
, Cowles%
\BCBL {}\ \BBA {} Vines%
}{%
Plummer%
\ \protect \BOthers {.}}{%
{\protect \APACyear {2006}}%
}]{%
CODA}
\APACinsertmetastar {%
CODA}%
\begin{APACrefauthors}%
Plummer, M.%
, Best, N.%
, Cowles, K.%
\BCBL {}\ \BBA {} Vines, K.%
\end{APACrefauthors}%
\unskip\
\newblock
\APACrefYearMonthDay{2006}{}{}.
\newblock
{\BBOQ}\APACrefatitle {{CODA}: Convergence Diagnosis and Output Analysis for
  {MCMC}} {{CODA}: Convergence diagnosis and output analysis for
  {MCMC}}.{\BBCQ}
\newblock
\APACjournalVolNumPages{R News}{6}{1}{7--11}.
\PrintBackRefs{\CurrentBib}

\bibitem [\protect \citeauthoryear {%
Roberts%
\ \BBA {} Sahu%
}{%
Roberts%
\ \BBA {} Sahu%
}{%
{\protect \APACyear {1997}}%
}]{%
RobertsSahu1997}
\APACinsertmetastar {%
RobertsSahu1997}%
\begin{APACrefauthors}%
Roberts, G.%
\BCBT {}\ \BBA {} Sahu, S.%
\end{APACrefauthors}%
\unskip\
\newblock
\APACrefYearMonthDay{1997}{}{}.
\newblock
{\BBOQ}\APACrefatitle {Updating schemes, correlation structure, blocking and
  parameterization for the Gibbs sampler} {Updating schemes, correlation
  structure, blocking and parameterization for the gibbs sampler}.{\BBCQ}
\newblock
\APACjournalVolNumPages{Journal of the Royal Statistical Society Series
  B}{59}{2}{291--317}.
\PrintBackRefs{\CurrentBib}

\bibitem [\protect \citeauthoryear {%
San~Mart{\'i}n%
, Jara%
, Rolin%
\BCBL {}\ \BBA {} Mouchart%
}{%
San~Mart{\'i}n%
\ \protect \BOthers {.}}{%
{\protect \APACyear {2011}}%
}]{%
SanMartin2011}
\APACinsertmetastar {%
SanMartin2011}%
\begin{APACrefauthors}%
San~Mart{\'i}n, E.%
, Jara, A.%
, Rolin, J\BPBI M.%
\BCBL {}\ \BBA {} Mouchart, M.%
\end{APACrefauthors}%
\unskip\
\newblock
\APACrefYearMonthDay{2011}{}{}.
\newblock
{\BBOQ}\APACrefatitle {On the Bayesian Nonparametric Generalization of
  {IRT}-Type Models} {On the bayesian nonparametric generalization of
  {IRT}-type models}.{\BBCQ}
\newblock
\APACjournalVolNumPages{Psychometrika}{76}{}{385-409}.
\PrintBackRefs{\CurrentBib}

\bibitem [\protect \citeauthoryear {%
Uto%
, Tsuruta%
, Araki%
\BCBL {}\ \BBA {} Ueno%
}{%
Uto%
\ \protect \BOthers {.}}{%
{\protect \APACyear {2024}}%
}]{%
Uto_24}
\APACinsertmetastar {%
Uto_24}%
\begin{APACrefauthors}%
Uto, M.%
, Tsuruta, J.%
, Araki, K.%
\BCBL {}\ \BBA {} Ueno, M.%
\end{APACrefauthors}%
\unskip\
\newblock
\APACrefYearMonthDay{2024}{}{}.
\newblock
{\BBOQ}\APACrefatitle {Item response theory model highlighting rating scale of
  a rubric and rater--rubric interaction in objective structured clinical
  examination} {Item response theory model highlighting rating scale of a
  rubric and rater--rubric interaction in objective structured clinical
  examination}.{\BBCQ}
\newblock
\APACjournalVolNumPages{PLOS ONE}{19}{9}{1-23}.
\PrintBackRefs{\CurrentBib}

\bibitem [\protect \citeauthoryear {%
Uto%
\ \BBA {} Ueno%
}{%
Uto%
\ \BBA {} Ueno%
}{%
{\protect \APACyear {2016}}%
}]{%
Uto_2016}
\APACinsertmetastar {%
Uto_2016}%
\begin{APACrefauthors}%
Uto, M.%
\BCBT {}\ \BBA {} Ueno, M.%
\end{APACrefauthors}%
\unskip\
\newblock
\APACrefYearMonthDay{2016}{}{}.
\newblock
{\BBOQ}\APACrefatitle {Item Response Theory for Peer Assessment} {Item response
  theory for peer assessment}.{\BBCQ}
\newblock
\APACjournalVolNumPages{IEEE Transactions on Learning
  Technologies}{9}{}{157-170}.
\PrintBackRefs{\CurrentBib}

\bibitem [\protect \citeauthoryear {%
Uto%
\ \BBA {} Ueno%
}{%
Uto%
\ \BBA {} Ueno%
}{%
{\protect \APACyear {2020}}%
}]{%
Uto_2020}
\APACinsertmetastar {%
Uto_2020}%
\begin{APACrefauthors}%
Uto, M.%
\BCBT {}\ \BBA {} Ueno, M.%
\end{APACrefauthors}%
\unskip\
\newblock
\APACrefYearMonthDay{2020}{}{}.
\newblock
{\BBOQ}\APACrefatitle {A generalized many-facet {Rasch} model and its
  {Bayesian} estimation using {Hamiltonian Monte Carlo}} {A generalized
  many-facet {Rasch} model and its {Bayesian} estimation using {Hamiltonian
  Monte Carlo}}.{\BBCQ}
\newblock
\APACjournalVolNumPages{Behaviormetrika}{47}{}{1-28}.
\PrintBackRefs{\CurrentBib}

\bibitem [\protect \citeauthoryear {%
Wade%
\ \BBA {} Ghahramani%
}{%
Wade%
\ \BBA {} Ghahramani%
}{%
{\protect \APACyear {2018}}%
}]{%
Wade&Ghahramani_2018}
\APACinsertmetastar {%
Wade&Ghahramani_2018}%
\begin{APACrefauthors}%
Wade, S.%
\BCBT {}\ \BBA {} Ghahramani, Z.%
\end{APACrefauthors}%
\unskip\
\newblock
\APACrefYearMonthDay{2018}{}{}.
\newblock
{\BBOQ}\APACrefatitle {Bayesian Cluster Analysis: Point Estimation and Credible
  Balls (with Discussion)} {Bayesian cluster analysis: Point estimation and
  credible balls (with discussion)}.{\BBCQ}
\newblock
\APACjournalVolNumPages{Bayesian Analysis}{13}{2}{559 -- 626}.
\PrintBackRefs{\CurrentBib}

\bibitem [\protect \citeauthoryear {%
Wood%
}{%
Wood%
}{%
{\protect \APACyear {2017}}%
}]{%
wood2017}
\APACinsertmetastar {%
wood2017}%
\begin{APACrefauthors}%
Wood, S\BPBI N.%
\end{APACrefauthors}%
\unskip\
\newblock
\APACrefYear{2017}.
\newblock
\APACrefbtitle {Generalized Additive Models: An Introduction with {R}}
  {Generalized additive models: An introduction with {R}}\ (\PrintOrdinal{2}\
  \BEd).
\newblock
\APACaddressPublisher{Boca Raton, FL}{Chapman and Hall/CRC}.
\PrintBackRefs{\CurrentBib}

\end{thebibliography}


\begin{thebibliography}{}

\bibitem [\protect \citeauthoryear {%
{AERA}%
, {APA}%
\BCBL {}\ \BBA {} {NCME}%
}{%
{AERA}%
\ \protect \BOthers {.}}{%
{\protect \APACyear {2014}}%
}]{%
AERA2014}
\APACinsertmetastar {%
AERA2014}%
\begin{APACrefauthors}%
{AERA}%
, {APA}%
\BCBL {}\ \BBA {} {NCME}.%
\end{APACrefauthors}%
\unskip\
\newblock
\APACrefYear{2014}.
\newblock
\APACrefbtitle {Standards for Educational and Psychological Testing} {Standards
  for educational and psychological testing}.
\newblock
\APACaddressPublisher{Washington, DC}{American Educational Research
  Association}.
\PrintBackRefs{\CurrentBib}

\bibitem [\protect \citeauthoryear {%
Barto{\v{s}}%
\ \BBA {} Martinkov{\'a}%
}{%
Barto{\v{s}}%
\ \BBA {} Martinkov{\'a}%
}{%
{\protect \APACyear {2024}}%
}]{%
Bartos24}
\APACinsertmetastar {%
Bartos24}%
\begin{APACrefauthors}%
Barto{\v{s}}, F.%
\BCBT {}\ \BBA {} Martinkov{\'a}, P.%
\end{APACrefauthors}%
\unskip\
\newblock
\APACrefYearMonthDay{2024}{}{}.
\newblock
{\BBOQ}\APACrefatitle {Assessing quality of selection procedures: Lower bound
  of false positive rate as a function of inter-rater reliability} {Assessing
  quality of selection procedures: Lower bound of false positive rate as a
  function of inter-rater reliability}.{\BBCQ}
\newblock
\APACjournalVolNumPages{British Journal of Mathematical and Statistical
  Psychology}{00}{}{1-21}.
\PrintBackRefs{\CurrentBib}

\bibitem [\protect \citeauthoryear {%
Berglund%
, Berglund%
\BCBL {}\ \BBA {} Lindvall%
}{%
Berglund%
\ \protect \BOthers {.}}{%
{\protect \APACyear {1976}}%
}]{%
berglund1976dimensionality}
\APACinsertmetastar {%
berglund1976dimensionality}%
\begin{APACrefauthors}%
Berglund, B.%
, Berglund, U.%
\BCBL {}\ \BBA {} Lindvall, T.%
\end{APACrefauthors}%
\unskip\
\newblock
\APACrefYearMonthDay{1976}{}{}.
\newblock
{\BBOQ}\APACrefatitle {Psychological processing of odor mixtures}
  {Psychological processing of odor mixtures}.{\BBCQ}
\newblock
\APACjournalVolNumPages{Psychological Review}{83}{6}{432--441}.
\PrintBackRefs{\CurrentBib}

\bibitem [\protect \citeauthoryear {%
Bierling%
}{%
Bierling%
}{%
{\protect \APACyear {2025}}%
}]{%
bierling2025dataset}
\APACinsertmetastar {%
bierling2025dataset}%
\begin{APACrefauthors}%
Bierling, A\BPBI L\BPBI e\BPBI a.%
\end{APACrefauthors}%
\unskip\
\newblock
\APACrefYearMonthDay{2025}{}{}.
\newblock
{\BBOQ}\APACrefatitle {A dataset of laymen olfactory perception for 74
  mono-molecular odors} {A dataset of laymen olfactory perception for 74
  mono-molecular odors}.{\BBCQ}
\newblock
\APACjournalVolNumPages{Scientific Data}{}{}{}.
\PrintBackRefs{\CurrentBib}

\bibitem [\protect \citeauthoryear {%
Casabianca%
, Lockwood%
\BCBL {}\ \BBA {} Mccaffrey%
}{%
Casabianca%
\ \protect \BOthers {.}}{%
{\protect \APACyear {2015}}%
}]{%
Casabianca_2015}
\APACinsertmetastar {%
Casabianca_2015}%
\begin{APACrefauthors}%
Casabianca, J\BPBI M.%
, Lockwood, J\BPBI R.%
\BCBL {}\ \BBA {} Mccaffrey, D\BPBI F.%
\end{APACrefauthors}%
\unskip\
\newblock
\APACrefYearMonthDay{2015}{}{}.
\newblock
{\BBOQ}\APACrefatitle {Trends in classroom observation scores} {Trends in
  classroom observation scores}.{\BBCQ}
\newblock
\APACjournalVolNumPages{Educational and Psychological
  Measurement}{75}{2}{311-337}.
\PrintBackRefs{\CurrentBib}

\bibitem [\protect \citeauthoryear {%
Cho%
, Shen%
\BCBL {}\ \BBA {} Naveiras%
}{%
Cho%
\ \protect \BOthers {.}}{%
{\protect \APACyear {2019}}%
}]{%
ChoShenNaveiras2019}
\APACinsertmetastar {%
ChoShenNaveiras2019}%
\begin{APACrefauthors}%
Cho, S\BHBI J.%
, Shen, J.%
\BCBL {}\ \BBA {} Naveiras, M.%
\end{APACrefauthors}%
\unskip\
\newblock
\APACrefYearMonthDay{2019}{}{}.
\newblock
{\BBOQ}\APACrefatitle {Multilevel Reliability Measures of Latent Scores within
  an Item Response Theory Framework} {Multilevel reliability measures of latent
  scores within an item response theory framework}.{\BBCQ}
\newblock
\APACjournalVolNumPages{Multivariate Behavioral Research}{54}{6}{856--881}.
\PrintBackRefs{\CurrentBib}

\bibitem [\protect \citeauthoryear {%
Cook%
, Kilgus%
\BCBL {}\ \BBA {} Burns%
}{%
Cook%
\ \protect \BOthers {.}}{%
{\protect \APACyear {2018}}%
}]{%
Cook_2018_precision}
\APACinsertmetastar {%
Cook_2018_precision}%
\begin{APACrefauthors}%
Cook, C\BPBI R.%
, Kilgus, S\BPBI P.%
\BCBL {}\ \BBA {} Burns, M\BPBI K.%
\end{APACrefauthors}%
\unskip\
\newblock
\APACrefYearMonthDay{2018}{}{}.
\newblock
{\BBOQ}\APACrefatitle {Advancing the science and practice of precision
  education to enhance student outcomes} {Advancing the science and practice of
  precision education to enhance student outcomes}.{\BBCQ}
\newblock
\APACjournalVolNumPages{Journal of School Psychology}{66}{}{4-10}.
\PrintBackRefs{\CurrentBib}

\bibitem [\protect \citeauthoryear {%
Dahl%
, Johnson%
\BCBL {}\ \BBA {} M{\"u}ller%
}{%
Dahl%
\ \protect \BOthers {.}}{%
{\protect \APACyear {2022}}%
}]{%
Dahl_salso}
\APACinsertmetastar {%
Dahl_salso}%
\begin{APACrefauthors}%
Dahl, D\BPBI B.%
, Johnson, D\BPBI J.%
\BCBL {}\ \BBA {} M{\"u}ller, P.%
\end{APACrefauthors}%
\unskip\
\newblock
\APACrefYearMonthDay{2022}{}{}.
\newblock
{\BBOQ}\APACrefatitle {Search Algorithms and Loss Functions for {Bayesian}
  Clustering} {Search algorithms and loss functions for {Bayesian}
  clustering}.{\BBCQ}
\newblock
\APACjournalVolNumPages{Journal of Computational and Graphical
  Statistics}{31}{4}{1189--1201}.
\PrintBackRefs{\CurrentBib}

\bibitem [\protect \citeauthoryear {%
{de Valpine}%
\ \protect \BOthers {.}}{%
{de Valpine}%
\ \protect \BOthers {.}}{%
{\protect \APACyear {2026}}%
}]{%
nimble-software2026}
\APACinsertmetastar {%
nimble-software2026}%
\begin{APACrefauthors}%
{de Valpine}, P.%
, Paciorek, C.%
, Turek, D.%
, Michaud, N.%
, Anderson-Bergman, C.%
, Obermeyer, F.%
\BDBL {}Paganin, S.%
\end{APACrefauthors}%
\unskip\
\newblock
\APACrefYearMonthDay{2026}{}{}.
\newblock
{\BBOQ}\APACrefatitle {{NIMBLE}: {MCMC}, Particle Filtering, and Programmable
  Hierarchical Modeling} {{NIMBLE}: {MCMC}, particle filtering, and
  programmable hierarchical modeling}{\BBCQ}\ [\bibcomputersoftwaremanual].
\newblock
\begin{APACrefURL} \url{https://cran.r-project.org/package=nimble}
  \end{APACrefURL}
\newblock
\APACrefnote{{R} package version 1.4.2}
\PrintBackRefs{\CurrentBib}

\bibitem [\protect \citeauthoryear {%
De~Blasi%
\ \protect \BOthers {.}}{%
De~Blasi%
\ \protect \BOthers {.}}{%
{\protect \APACyear {2015}}%
}]{%
DeBlasi2015}
\APACinsertmetastar {%
DeBlasi2015}%
\begin{APACrefauthors}%
De~Blasi, P.%
, Favaro, S.%
, Lijoi, A.%
, Mena, R\BPBI H.%
, Pr{\"u}nster, I.%
\BCBL {}\ \BBA {} Ruggiero, M.%
\end{APACrefauthors}%
\unskip\
\newblock
\APACrefYearMonthDay{2015}{}{}.
\newblock
{\BBOQ}\APACrefatitle {Are Gibbs-Type Priors the Most Natural Generalization of
  the {D}irichlet Process?} {Are gibbs-type priors the most natural
  generalization of the {D}irichlet process?}{\BBCQ}
\newblock
\APACjournalVolNumPages{IEEE Transactions on pattern analysis and machine
  intelligence}{37}{}{}.
\PrintBackRefs{\CurrentBib}

\bibitem [\protect \citeauthoryear {%
DeCarlo%
, Kim%
\BCBL {}\ \BBA {} Johnson%
}{%
DeCarlo%
\ \protect \BOthers {.}}{%
{\protect \APACyear {2011}}%
}]{%
decarlo2011hierarchical}
\APACinsertmetastar {%
decarlo2011hierarchical}%
\begin{APACrefauthors}%
DeCarlo, L\BPBI T.%
, Kim, Y.%
\BCBL {}\ \BBA {} Johnson, M\BPBI S.%
\end{APACrefauthors}%
\unskip\
\newblock
\APACrefYearMonthDay{2011}{}{}.
\newblock
{\BBOQ}\APACrefatitle {A hierarchical rater model for constructed responses,
  with a signal detection rater model} {A hierarchical rater model for
  constructed responses, with a signal detection rater model}.{\BBCQ}
\newblock
\APACjournalVolNumPages{Journal of Educational Measurement}{48}{3}{333--356}.
\PrintBackRefs{\CurrentBib}

\bibitem [\protect \citeauthoryear {%
Eckes%
}{%
Eckes%
}{%
{\protect \APACyear {2017}}%
}]{%
Eckes2017}
\APACinsertmetastar {%
Eckes2017}%
\begin{APACrefauthors}%
Eckes, T.%
\end{APACrefauthors}%
\unskip\
\newblock
\APACrefYearMonthDay{2017}{}{}.
\newblock
{\BBOQ}\APACrefatitle {Rater effects: Advances in item response modeling of
  human ratings -- {Part I}} {Rater effects: Advances in item response modeling
  of human ratings -- {Part I}}.{\BBCQ}
\newblock
\APACjournalVolNumPages{Psychological Test and Assessment
  Modeling}{59}{4}{443--452}.
\PrintBackRefs{\CurrentBib}

\bibitem [\protect \citeauthoryear {%
Erosheva%
, Martinkov{\'a}%
\BCBL {}\ \BBA {} Lee%
}{%
Erosheva%
\ \protect \BOthers {.}}{%
{\protect \APACyear {2021}}%
}]{%
Erosheva2021}
\APACinsertmetastar {%
Erosheva2021}%
\begin{APACrefauthors}%
Erosheva, E\BPBI A.%
, Martinkov{\'a}, P.%
\BCBL {}\ \BBA {} Lee, C\BPBI J.%
\end{APACrefauthors}%
\unskip\
\newblock
\APACrefYearMonthDay{2021}{}{}.
\newblock
{\BBOQ}\APACrefatitle {When Zero May Not Be Zero: A Cautionary Note on the Use
  of Inter-Rater Reliability in Evaluating Grant Peer Review} {When zero may
  not be zero: A cautionary note on the use of inter-rater reliability in
  evaluating grant peer review}.{\BBCQ}
\newblock
\APACjournalVolNumPages{Journal of the Royal Statistical Society Series A:
  Statistics in Society}{184}{3}{904-919}.
\PrintBackRefs{\CurrentBib}

\bibitem [\protect \citeauthoryear {%
Gelman%
\ \protect \BOthers {.}}{%
Gelman%
\ \protect \BOthers {.}}{%
{\protect \APACyear {2013}}%
}]{%
BDA3}
\APACinsertmetastar {%
BDA3}%
\begin{APACrefauthors}%
Gelman, A.%
, Carlin, J.%
, Stern, H.%
, Dunson, D.%
, Vehtari, A.%
\BCBL {}\ \BBA {} Rubin, D.%
\end{APACrefauthors}%
\unskip\
\newblock
\APACrefYear{2013}.
\newblock
\APACrefbtitle {Bayesian Data Analysis, Third Edition} {Bayesian data analysis,
  third edition}.
\newblock
\APACaddressPublisher{}{Taylor \& Francis}.
\PrintBackRefs{\CurrentBib}

\bibitem [\protect \citeauthoryear {%
Gelman%
, Hwang%
\BCBL {}\ \BBA {} Vehtari%
}{%
Gelman%
\ \protect \BOthers {.}}{%
{\protect \APACyear {2014}}%
}]{%
Gelman2014}
\APACinsertmetastar {%
Gelman2014}%
\begin{APACrefauthors}%
Gelman, A.%
, Hwang, J.%
\BCBL {}\ \BBA {} Vehtari, A.%
\end{APACrefauthors}%
\unskip\
\newblock
\APACrefYearMonthDay{2014}{}{}.
\newblock
{\BBOQ}\APACrefatitle {Understanding predictive information criteria for
  {B}ayesian models} {Understanding predictive information criteria for
  {B}ayesian models}.{\BBCQ}
\newblock
\APACjournalVolNumPages{Stat Comput}{24}{}{997-1016}.
\PrintBackRefs{\CurrentBib}

\bibitem [\protect \citeauthoryear {%
Ghosal%
\ \BBA {} van~der Vaart%
}{%
Ghosal%
\ \BBA {} van~der Vaart%
}{%
{\protect \APACyear {2017}}%
}]{%
Ghosal_van_der_Vaart_2017}
\APACinsertmetastar {%
Ghosal_van_der_Vaart_2017}%
\begin{APACrefauthors}%
Ghosal, S.%
\BCBT {}\ \BBA {} van~der Vaart, A.%
\end{APACrefauthors}%
\unskip\
\newblock
\APACrefYear{2017}.
\newblock
\APACrefbtitle {Fundamentals of Nonparametric {B}ayesian Inference}
  {Fundamentals of nonparametric {B}ayesian inference}.
\newblock
\APACaddressPublisher{}{Cambridge University Press}.
\PrintBackRefs{\CurrentBib}

\bibitem [\protect \citeauthoryear {%
Glas%
, Jorgensen%
\BCBL {}\ \BBA {} Ten~Hove%
}{%
Glas%
\ \protect \BOthers {.}}{%
{\protect \APACyear {2024}}%
}]{%
glas2024reducing}
\APACinsertmetastar {%
glas2024reducing}%
\begin{APACrefauthors}%
Glas, C\BPBI A.%
, Jorgensen, T\BPBI D.%
\BCBL {}\ \BBA {} Ten~Hove, D.%
\end{APACrefauthors}%
\unskip\
\newblock
\APACrefYearMonthDay{2024}{}{}.
\newblock
{\BBOQ}\APACrefatitle {Reducing Attenuation Bias in Regression Analyses
  Involving Rating Scale Data via Psychometric Modeling} {Reducing attenuation
  bias in regression analyses involving rating scale data via psychometric
  modeling}.{\BBCQ}
\newblock
\APACjournalVolNumPages{Psychometrika}{89}{1}{42--63}.
\PrintBackRefs{\CurrentBib}

\bibitem [\protect \citeauthoryear {%
Heinzl%
, Fahrmeir%
\BCBL {}\ \BBA {} Kneib%
}{%
Heinzl%
\ \protect \BOthers {.}}{%
{\protect \APACyear {2012}}%
}]{%
heinzl2012}
\APACinsertmetastar {%
heinzl2012}%
\begin{APACrefauthors}%
Heinzl, F.%
, Fahrmeir, L.%
\BCBL {}\ \BBA {} Kneib, T.%
\end{APACrefauthors}%
\unskip\
\newblock
\APACrefYearMonthDay{2012}{}{}.
\newblock
{\BBOQ}\APACrefatitle {Additive mixed models with {D}irichlet process mixture
  and P-spline priors} {Additive mixed models with {D}irichlet process mixture
  and p-spline priors}.{\BBCQ}
\newblock
\APACjournalVolNumPages{Asta advances in statistical analysis}{96}{}{47--68}.
\PrintBackRefs{\CurrentBib}

\bibitem [\protect \citeauthoryear {%
Hjort%
, Holmes%
, M{\"u}ller%
\BCBL {}\ \BBA {} Walker%
}{%
Hjort%
\ \protect \BOthers {.}}{%
{\protect \APACyear {2010}}%
}]{%
Hjort_Holmes_Muller_Walker_2010}
\APACinsertmetastar {%
Hjort_Holmes_Muller_Walker_2010}%
\begin{APACrefauthors}%
Hjort, N.%
, Holmes, C.%
, M{\"u}ller, P.%
\BCBL {}\ \BBA {} Walker, S.%
\end{APACrefauthors}%
\unskip\
\newblock
\APACrefYear{2010}.
\newblock
\APACrefbtitle {Bayesian Nonparametrics} {Bayesian nonparametrics}.
\newblock
\APACaddressPublisher{}{Cambridge University Press}.
\PrintBackRefs{\CurrentBib}

\bibitem [\protect \citeauthoryear {%
Humphry%
\ \BBA {} Heldsinger%
}{%
Humphry%
\ \BBA {} Heldsinger%
}{%
{\protect \APACyear {2020}}%
}]{%
Reliable_Teacher2020}
\APACinsertmetastar {%
Reliable_Teacher2020}%
\begin{APACrefauthors}%
Humphry, S\BPBI M.%
\BCBT {}\ \BBA {} Heldsinger, S.%
\end{APACrefauthors}%
\unskip\
\newblock
\APACrefYearMonthDay{2020}{}{}.
\newblock
{\BBOQ}\APACrefatitle {A Two-Stage Method for Obtaining Reliable Teacher
  Assessments of Writing} {A two-stage method for obtaining reliable teacher
  assessments of writing}.{\BBCQ}
\newblock
\APACjournalVolNumPages{Frontiers in Education}{5}{}{6}.
\PrintBackRefs{\CurrentBib}

\bibitem [\protect \citeauthoryear {%
Ishwaran%
\ \BBA {} James%
}{%
Ishwaran%
\ \BBA {} James%
}{%
{\protect \APACyear {2001}}%
}]{%
Ishwaran2001}
\APACinsertmetastar {%
Ishwaran2001}%
\begin{APACrefauthors}%
Ishwaran, H.%
\BCBT {}\ \BBA {} James, L.%
\end{APACrefauthors}%
\unskip\
\newblock
\APACrefYearMonthDay{2001}{}{}.
\newblock
{\BBOQ}\APACrefatitle {Gibbs Sampling Methods for Stick-Breaking Priors} {Gibbs
  sampling methods for stick-breaking priors}.{\BBCQ}
\newblock
\APACjournalVolNumPages{Journal of the American Statistical
  Association}{96}{}{161-173}.
\PrintBackRefs{\CurrentBib}

\bibitem [\protect \citeauthoryear {%
Jin%
\ \BBA {} Eckes%
}{%
Jin%
\ \BBA {} Eckes%
}{%
{\protect \APACyear {2025}}%
}]{%
Halo_IRT_2025raters}
\APACinsertmetastar {%
Halo_IRT_2025raters}%
\begin{APACrefauthors}%
Jin, K\BHBI Y.%
\BCBT {}\ \BBA {} Eckes, T.%
\end{APACrefauthors}%
\unskip\
\newblock
\APACrefYearMonthDay{2025}{}{}.
\newblock
{\BBOQ}\APACrefatitle {When raters generalize: Examining sources of halo
  effects with mixture {R}asch facets models} {When raters generalize:
  Examining sources of halo effects with mixture {R}asch facets models}.{\BBCQ}
\newblock
\APACjournalVolNumPages{Behavior Research Methods}{57}{5}{149}.
\PrintBackRefs{\CurrentBib}

\bibitem [\protect \citeauthoryear {%
Jin%
\ \BBA {} Eckes%
}{%
Jin%
\ \BBA {} Eckes%
}{%
{\protect \APACyear {2026}}%
}]{%
An_Extended_MFRM}
\APACinsertmetastar {%
An_Extended_MFRM}%
\begin{APACrefauthors}%
Jin, K\BHBI Y.%
\BCBT {}\ \BBA {} Eckes, T.%
\end{APACrefauthors}%
\unskip\
\newblock
\APACrefYearMonthDay{2026}{}{}.
\newblock
{\BBOQ}\APACrefatitle {When Cut Scores Impact Human Ratings: An Extended
  Many-Facet Rasch Model for Category-Specific Rater Severity Shifts} {When cut
  scores impact human ratings: An extended many-facet rasch model for
  category-specific rater severity shifts}.{\BBCQ}
\newblock
\APACjournalVolNumPages{Journal of Educational Measurement}{63}{3}{e70054}.
\PrintBackRefs{\CurrentBib}

\bibitem [\protect \citeauthoryear {%
Lin%
, Tuerlinckx%
\BCBL {}\ \BBA {} Vanbelle%
}{%
Lin%
\ \protect \BOthers {.}}{%
{\protect \APACyear {2025}}%
}]{%
lin2025reliability}
\APACinsertmetastar {%
lin2025reliability}%
\begin{APACrefauthors}%
Lin, T\BHBI Y.%
, Tuerlinckx, F.%
\BCBL {}\ \BBA {} Vanbelle, S.%
\end{APACrefauthors}%
\unskip\
\newblock
\APACrefYearMonthDay{2025}{}{}.
\newblock
{\BBOQ}\APACrefatitle {Reliability for multilevel data: A correlation
  approach.} {Reliability for multilevel data: A correlation approach.}{\BBCQ}
\newblock
\APACjournalVolNumPages{Psychological Methods}{}{}{}.
\PrintBackRefs{\CurrentBib}

\bibitem [\protect \citeauthoryear {%
Linacre%
}{%
Linacre%
}{%
{\protect \APACyear {1989}}%
}]{%
linacre1989many}
\APACinsertmetastar {%
linacre1989many}%
\begin{APACrefauthors}%
Linacre, J\BPBI M.%
\end{APACrefauthors}%
\unskip\
\newblock
\APACrefYear{1989}.
\unskip\
\newblock
\APACrefbtitle {Many-faceted {Rasch} measurement} {Many-faceted {Rasch}
  measurement}\ \APACtypeAddressSchool {\BUPhD}{}{}.
\unskip\
\newblock
\APACaddressSchool {}{The University of Chicago}.
\PrintBackRefs{\CurrentBib}

\bibitem [\protect \citeauthoryear {%
Livingstone%
\ \BBA {} Zanella%
}{%
Livingstone%
\ \BBA {} Zanella%
}{%
{\protect \APACyear {2022}}%
}]{%
LivingstoneZanella2022}
\APACinsertmetastar {%
LivingstoneZanella2022}%
\begin{APACrefauthors}%
Livingstone, S.%
\BCBT {}\ \BBA {} Zanella, G.%
\end{APACrefauthors}%
\unskip\
\newblock
\APACrefYearMonthDay{2022}{}{}.
\newblock
{\BBOQ}\APACrefatitle {The Barker proposal: combining robustness and efficiency
  in gradient-based {MCMC}} {The barker proposal: combining robustness and
  efficiency in gradient-based {MCMC}}.{\BBCQ}
\newblock
\APACjournalVolNumPages{Journal of the Royal Statistical Society Series
  B}{84}{2}{496--523}.
\PrintBackRefs{\CurrentBib}

\bibitem [\protect \citeauthoryear {%
Lunz%
\ \BBA {} Linacre%
}{%
Lunz%
\ \BBA {} Linacre%
}{%
{\protect \APACyear {1998}}%
}]{%
lunz1998measurement}
\APACinsertmetastar {%
lunz1998measurement}%
\begin{APACrefauthors}%
Lunz, M\BPBI E.%
\BCBT {}\ \BBA {} Linacre, J\BPBI M.%
\end{APACrefauthors}%
\unskip\
\newblock
\APACrefYearMonthDay{1998}{}{}.
\newblock
{\BBOQ}\APACrefatitle {Measurement designs using multifacet {Rasch} modeling}
  {Measurement designs using multifacet {Rasch} modeling}.{\BBCQ}
\newblock
\BIn{} \APACrefbtitle {Modern methods for business research} {Modern methods
  for business research}\ (\BPGS\ 47--77).
\PrintBackRefs{\CurrentBib}

\bibitem [\protect \citeauthoryear {%
Lyu%
\ \BBA {} Bolt%
}{%
Lyu%
\ \BBA {} Bolt%
}{%
{\protect \APACyear {2022}}%
}]{%
Bolt2022anchoring}
\APACinsertmetastar {%
Bolt2022anchoring}%
\begin{APACrefauthors}%
Lyu, W.%
\BCBT {}\ \BBA {} Bolt, D\BPBI M.%
\end{APACrefauthors}%
\unskip\
\newblock
\APACrefYearMonthDay{2022}{}{}.
\newblock
{\BBOQ}\APACrefatitle {A psychometric model for respondent-level anchoring on
  self-report rating scale instruments} {A psychometric model for
  respondent-level anchoring on self-report rating scale instruments}.{\BBCQ}
\newblock
\APACjournalVolNumPages{The British Journal of Mathematical and Statistical
  Psychology}{75}{1}{116--135}.
\PrintBackRefs{\CurrentBib}

\bibitem [\protect \citeauthoryear {%
Mariano%
}{%
Mariano%
}{%
{\protect \APACyear {2002}}%
}]{%
mariano2002critique}
\APACinsertmetastar {%
mariano2002critique}%
\begin{APACrefauthors}%
Mariano, L\BPBI T.%
\end{APACrefauthors}%
\unskip\
\newblock
\APACrefYear{2002}.
\unskip\
\newblock
\APACrefbtitle {{Information Accumulation and the Many-Facet {Rasch} Model: A
  Critical Examination}} {{Information Accumulation and the Many-Facet {Rasch}
  Model: A Critical Examination}}\ \APACtypeAddressSchool {\BUPhD}{}{}.
\unskip\
\newblock
\APACaddressSchool {}{RAND Graduate School}.
\unskip\
\newblock
\APACrefnote{Doctoral dissertation discussing limitations of conventional MFRM
  (e.g., information accumulation)}
\PrintBackRefs{\CurrentBib}

\bibitem [\protect \citeauthoryear {%
Mariano%
\ \BBA {} Junker%
}{%
Mariano%
\ \BBA {} Junker%
}{%
{\protect \APACyear {2007}}%
}]{%
Mariano2007covariates}
\APACinsertmetastar {%
Mariano2007covariates}%
\begin{APACrefauthors}%
Mariano, L\BPBI T.%
\BCBT {}\ \BBA {} Junker, B\BPBI W.%
\end{APACrefauthors}%
\unskip\
\newblock
\APACrefYearMonthDay{2007}{}{}.
\newblock
{\BBOQ}\APACrefatitle {Covariates of the Rating Process in Hierarchical Models
  for Multiple Ratings of Test Items} {Covariates of the rating process in
  hierarchical models for multiple ratings of test items}.{\BBCQ}
\newblock
\APACjournalVolNumPages{Journal of Educational and Behavioral
  Statistics}{32}{3}{287--314}.
\PrintBackRefs{\CurrentBib}

\bibitem [\protect \citeauthoryear {%
Martinkov{\'a}%
, Barto{\v{s}}%
\BCBL {}\ \BBA {} Brabec%
}{%
Martinkov{\'a}%
\ \protect \BOthers {.}}{%
{\protect \APACyear {2023}}%
}]{%
Martinkova2023}
\APACinsertmetastar {%
Martinkova2023}%
\begin{APACrefauthors}%
Martinkov{\'a}, P.%
, Barto{\v{s}}, F.%
\BCBL {}\ \BBA {} Brabec, M.%
\end{APACrefauthors}%
\unskip\
\newblock
\APACrefYearMonthDay{2023}{}{}.
\newblock
{\BBOQ}\APACrefatitle {Assessing inter-rater reliability with heterogeneous
  variance components models: Flexible approach accounting for contextual
  variables} {Assessing inter-rater reliability with heterogeneous variance
  components models: Flexible approach accounting for contextual
  variables}.{\BBCQ}
\newblock
\APACjournalVolNumPages{Journal of Educational and Behavioral
  Statistics}{48}{3}{349--383}.
\PrintBackRefs{\CurrentBib}

\bibitem [\protect \citeauthoryear {%
Martinkova%
, Goldhaber%
\BCBL {}\ \BBA {} Erosheva%
}{%
Martinkova%
\ \protect \BOthers {.}}{%
{\protect \APACyear {2018}}%
}]{%
Martinkova2018}
\APACinsertmetastar {%
Martinkova2018}%
\begin{APACrefauthors}%
Martinkova, P.%
, Goldhaber, D.%
\BCBL {}\ \BBA {} Erosheva, E.%
\end{APACrefauthors}%
\unskip\
\newblock
\APACrefYearMonthDay{2018}{}{}.
\newblock
{\BBOQ}\APACrefatitle {Disparities in ratings of internal and external
  applicants: A case for model-based inter-rater reliability} {Disparities in
  ratings of internal and external applicants: A case for model-based
  inter-rater reliability}.{\BBCQ}
\newblock
\APACjournalVolNumPages{PloS One}{13}{}{e0203002}.
\PrintBackRefs{\CurrentBib}

\bibitem [\protect \citeauthoryear {%
Martinkov{\'a}%
\ \BBA {} Hladk{\'a}%
}{%
Martinkov{\'a}%
\ \BBA {} Hladk{\'a}%
}{%
{\protect \APACyear {2023}}%
}]{%
martinkovabook}
\APACinsertmetastar {%
martinkovabook}%
\begin{APACrefauthors}%
Martinkov{\'a}, P.%
\BCBT {}\ \BBA {} Hladk{\'a}, A.%
\end{APACrefauthors}%
\unskip\
\newblock
\APACrefYear{2023}.
\newblock
\APACrefbtitle {Computational aspects of psychometric methods: With {R}}
  {Computational aspects of psychometric methods: With {R}}.
\newblock
\APACaddressPublisher{}{Chapman and Hall/CRC}.
\PrintBackRefs{\CurrentBib}

\bibitem [\protect \citeauthoryear {%
Masters%
}{%
Masters%
}{%
{\protect \APACyear {1982}}%
}]{%
masters1982rasch}
\APACinsertmetastar {%
masters1982rasch}%
\begin{APACrefauthors}%
Masters, G\BPBI N.%
\end{APACrefauthors}%
\unskip\
\newblock
\APACrefYearMonthDay{1982}{}{}.
\newblock
{\BBOQ}\APACrefatitle {A {Rasch} model for partial credit scoring} {A {Rasch}
  model for partial credit scoring}.{\BBCQ}
\newblock
\APACjournalVolNumPages{Psychometrika}{47}{2}{149--174}.
\PrintBackRefs{\CurrentBib}

\bibitem [\protect \citeauthoryear {%
Meil{\u{a}}%
}{%
Meil{\u{a}}%
}{%
{\protect \APACyear {2007}}%
}]{%
Meila_2007}
\APACinsertmetastar {%
Meila_2007}%
\begin{APACrefauthors}%
Meil{\u{a}}, M.%
\end{APACrefauthors}%
\unskip\
\newblock
\APACrefYearMonthDay{2007}{}{}.
\newblock
{\BBOQ}\APACrefatitle {Comparing clusterings---an information based distance}
  {Comparing clusterings---an information based distance}.{\BBCQ}
\newblock
\APACjournalVolNumPages{Journal of Multivariate Analysis}{98}{5}{873-895}.
\PrintBackRefs{\CurrentBib}

\bibitem [\protect \citeauthoryear {%
Mignemi%
, Calcagn{\`i}%
, Spoto%
\BCBL {}\ \BBA {} Manolopoulou%
}{%
Mignemi%
\ \protect \BOthers {.}}{%
{\protect \APACyear {2024}}%
}]{%
Mignemi2024}
\APACinsertmetastar {%
Mignemi2024}%
\begin{APACrefauthors}%
Mignemi, G.%
, Calcagn{\`i}, A.%
, Spoto, A.%
\BCBL {}\ \BBA {} Manolopoulou, I.%
\end{APACrefauthors}%
\unskip\
\newblock
\APACrefYearMonthDay{2024}{}{}.
\newblock
{\BBOQ}\APACrefatitle {Mixture polarization in inter-rater agreement analysis:
  A {B}ayesian nonparametric index} {Mixture polarization in inter-rater
  agreement analysis: A {B}ayesian nonparametric index}.{\BBCQ}
\newblock
\APACjournalVolNumPages{Statistical Methods \& Applications}{33}{}{325-355}.
\PrintBackRefs{\CurrentBib}

\bibitem [\protect \citeauthoryear {%
Mignemi%
, Chen%
\BCBL {}\ \BBA {} Moustaki%
}{%
Mignemi%
\ \protect \BOthers {.}}{%
{\protect \APACyear {2025}}%
}]{%
Mignemi_Chen_Moustaki_2025}
\APACinsertmetastar {%
Mignemi_Chen_Moustaki_2025}%
\begin{APACrefauthors}%
Mignemi, G.%
, Chen, Y.%
\BCBL {}\ \BBA {} Moustaki, I.%
\end{APACrefauthors}%
\unskip\
\newblock
\APACrefYearMonthDay{2025}{}{}.
\newblock
{\BBOQ}\APACrefatitle {Unfolding the Network of Peer Grades: A Latent Variable
  Approach} {Unfolding the network of peer grades: A latent variable
  approach}.{\BBCQ}
\newblock
\APACjournalVolNumPages{Psychometrika}{}{}{1--22}.
\PrintBackRefs{\CurrentBib}

\bibitem [\protect \citeauthoryear {%
Mignemi%
\ \BBA {} Manolopoulou%
}{%
Mignemi%
\ \BBA {} Manolopoulou%
}{%
{\protect \APACyear {2025}}%
}]{%
mignemi2025BNP}
\APACinsertmetastar {%
mignemi2025BNP}%
\begin{APACrefauthors}%
Mignemi, G.%
\BCBT {}\ \BBA {} Manolopoulou, I.%
\end{APACrefauthors}%
\unskip\
\newblock
\APACrefYearMonthDay{2025}{}{}.
\newblock
{\BBOQ}\APACrefatitle {Bayesian Nonparametric Models for Multiple Raters: A
  General Statistical Framework} {Bayesian nonparametric models for multiple
  raters: A general statistical framework}.{\BBCQ}
\newblock
\APACjournalVolNumPages{Psychometrika}{90}{4}{}.
\PrintBackRefs{\CurrentBib}

\bibitem [\protect \citeauthoryear {%
Molenaar%
, Uluman%
, Tav{\c{s}}anc{\i}l%
\BCBL {}\ \BBA {} De~Boeck%
}{%
Molenaar%
\ \protect \BOthers {.}}{%
{\protect \APACyear {2021}}%
}]{%
Molenaar2021}
\APACinsertmetastar {%
Molenaar2021}%
\begin{APACrefauthors}%
Molenaar, D.%
, Uluman, M.%
, Tav{\c{s}}anc{\i}l, E.%
\BCBL {}\ \BBA {} De~Boeck, P.%
\end{APACrefauthors}%
\unskip\
\newblock
\APACrefYearMonthDay{2021}{01}{}.
\newblock
{\BBOQ}\APACrefatitle {The Hierarchical Rater Thresholds Model for Multiple
  Raters and Multiple Items} {The hierarchical rater thresholds model for
  multiple raters and multiple items}.{\BBCQ}
\newblock
\APACjournalVolNumPages{Open Education Studies}{3}{}{33-48}.
\PrintBackRefs{\CurrentBib}

\bibitem [\protect \citeauthoryear {%
Muraki%
}{%
Muraki%
}{%
{\protect \APACyear {1992}}%
}]{%
Muraki_1992}
\APACinsertmetastar {%
Muraki_1992}%
\begin{APACrefauthors}%
Muraki, E.%
\end{APACrefauthors}%
\unskip\
\newblock
\APACrefYearMonthDay{1992}{}{}.
\newblock
{\BBOQ}\APACrefatitle {A Generalized Partial Credit Model: Application of an EM
  Algorithm} {A generalized partial credit model: Application of an em
  algorithm}.{\BBCQ}
\newblock
\APACjournalVolNumPages{Applied Psychological Measurement}{16}{2}{159-176}.
\PrintBackRefs{\CurrentBib}

\bibitem [\protect \citeauthoryear {%
Nelson%
\ \BBA {} Edwards%
}{%
Nelson%
\ \BBA {} Edwards%
}{%
{\protect \APACyear {2015}}%
}]{%
Nelson2015}
\APACinsertmetastar {%
Nelson2015}%
\begin{APACrefauthors}%
Nelson, K.%
\BCBT {}\ \BBA {} Edwards, D.%
\end{APACrefauthors}%
\unskip\
\newblock
\APACrefYearMonthDay{2015}{}{}.
\newblock
{\BBOQ}\APACrefatitle {Measures of Agreement Between Many Raters for Ordinal
  Classifications} {Measures of agreement between many raters for ordinal
  classifications}.{\BBCQ}
\newblock
\APACjournalVolNumPages{Statistics in medicine}{34}{}{3116--3132}.
\PrintBackRefs{\CurrentBib}

\bibitem [\protect \citeauthoryear {%
Patz%
\ \BBA {} Junker%
}{%
Patz%
\ \BBA {} Junker%
}{%
{\protect \APACyear {1999}}%
}]{%
patz1999applications}
\APACinsertmetastar {%
patz1999applications}%
\begin{APACrefauthors}%
Patz, R\BPBI J.%
\BCBT {}\ \BBA {} Junker, B\BPBI W.%
\end{APACrefauthors}%
\unskip\
\newblock
\APACrefYearMonthDay{1999}{}{}.
\newblock
{\BBOQ}\APACrefatitle {Applications and extensions of {MCMC} in {IRT}: Multiple
  item types, missing data, and rated responses} {Applications and extensions
  of {MCMC} in {IRT}: Multiple item types, missing data, and rated
  responses}.{\BBCQ}
\newblock
\APACjournalVolNumPages{Journal of Educational and Behavioral
  Statistics}{24}{4}{342--366}.
\PrintBackRefs{\CurrentBib}

\bibitem [\protect \citeauthoryear {%
Patz%
, Junker%
, Johnson%
\BCBL {}\ \BBA {} Mariano%
}{%
Patz%
\ \protect \BOthers {.}}{%
{\protect \APACyear {2002}}%
}]{%
Patz2002}
\APACinsertmetastar {%
Patz2002}%
\begin{APACrefauthors}%
Patz, R\BPBI J.%
, Junker, B\BPBI W.%
, Johnson, M\BPBI S.%
\BCBL {}\ \BBA {} Mariano, L\BPBI T.%
\end{APACrefauthors}%
\unskip\
\newblock
\APACrefYearMonthDay{2002}{}{}.
\newblock
{\BBOQ}\APACrefatitle {The Hierarchical Rater Model for Rated Test Items and
  its Application to Large-Scale Educational Assessment Data} {The hierarchical
  rater model for rated test items and its application to large-scale
  educational assessment data}.{\BBCQ}
\newblock
\APACjournalVolNumPages{{Journal of Educational and Behavioral
  Statistics}}{27}{4}{341-384}.
\PrintBackRefs{\CurrentBib}

\bibitem [\protect \citeauthoryear {%
Peng%
, Cui%
, Chung%
\BCBL {}\ \BBA {} Li%
}{%
Peng%
\ \protect \BOthers {.}}{%
{\protect \APACyear {2019}}%
}]{%
peng_2019_multifacet}
\APACinsertmetastar {%
peng_2019_multifacet}%
\begin{APACrefauthors}%
Peng, L.%
, Cui, G.%
, Chung, Y.%
\BCBL {}\ \BBA {} Li, C.%
\end{APACrefauthors}%
\unskip\
\newblock
\APACrefYearMonthDay{2019}{}{}.
\newblock
{\BBOQ}\APACrefatitle {A multi-facet item response theory approach to improve
  customer satisfaction using online product ratings} {A multi-facet item
  response theory approach to improve customer satisfaction using online
  product ratings}.{\BBCQ}
\newblock
\APACjournalVolNumPages{Journal of the Academy of Marketing
  Science}{47}{}{960--976}.
\PrintBackRefs{\CurrentBib}

\bibitem [\protect \citeauthoryear {%
Pitman%
\ \BBA {} Yor%
}{%
Pitman%
\ \BBA {} Yor%
}{%
{\protect \APACyear {1997}}%
}]{%
Pitman-Yor}
\APACinsertmetastar {%
Pitman-Yor}%
\begin{APACrefauthors}%
Pitman, J.%
\BCBT {}\ \BBA {} Yor, M.%
\end{APACrefauthors}%
\unskip\
\newblock
\APACrefYearMonthDay{1997}{}{}.
\newblock
{\BBOQ}\APACrefatitle {{The two-parameter Poisson-{D}irichlet distribution
  derived from a stable subordinator}} {{The two-parameter Poisson-{D}irichlet
  distribution derived from a stable subordinator}}.{\BBCQ}
\newblock
\APACjournalVolNumPages{The Annals of Probability}{25}{2}{855 -- 900}.
\PrintBackRefs{\CurrentBib}

\bibitem [\protect \citeauthoryear {%
Plummer%
, Best%
, Cowles%
\BCBL {}\ \BBA {} Vines%
}{%
Plummer%
\ \protect \BOthers {.}}{%
{\protect \APACyear {2006}}%
}]{%
CODA}
\APACinsertmetastar {%
CODA}%
\begin{APACrefauthors}%
Plummer, M.%
, Best, N.%
, Cowles, K.%
\BCBL {}\ \BBA {} Vines, K.%
\end{APACrefauthors}%
\unskip\
\newblock
\APACrefYearMonthDay{2006}{}{}.
\newblock
{\BBOQ}\APACrefatitle {{CODA}: Convergence Diagnosis and Output Analysis for
  {MCMC}} {{CODA}: Convergence diagnosis and output analysis for
  {MCMC}}.{\BBCQ}
\newblock
\APACjournalVolNumPages{R News}{6}{1}{7--11}.
\PrintBackRefs{\CurrentBib}

\bibitem [\protect \citeauthoryear {%
Polat%
, Turhan%
\BCBL {}\ \BBA {} Toraman%
}{%
Polat%
\ \protect \BOthers {.}}{%
{\protect \APACyear {2022}}%
}]{%
polat_2022_multifacet}
\APACinsertmetastar {%
polat_2022_multifacet}%
\begin{APACrefauthors}%
Polat, M.%
, Turhan, N\BPBI S.%
\BCBL {}\ \BBA {} Toraman, {\c{C}}.%
\end{APACrefauthors}%
\unskip\
\newblock
\APACrefYearMonthDay{2022}{}{}.
\newblock
{\BBOQ}\APACrefatitle {Comparison of classical test theory vs. multi-facet
  {Rasch} theory in writing assessment} {Comparison of classical test theory
  vs. multi-facet {Rasch} theory in writing assessment}.{\BBCQ}
\newblock
\APACjournalVolNumPages{Pegem Journal of Education and
  Instruction}{12}{2}{213--225}.
\PrintBackRefs{\CurrentBib}

\bibitem [\protect \citeauthoryear {%
Secundo%
\ \protect \BOthers {.}}{%
Secundo%
\ \protect \BOthers {.}}{%
{\protect \APACyear {2015}}%
}]{%
secundo2015individual}
\APACinsertmetastar {%
secundo2015individual}%
\begin{APACrefauthors}%
Secundo, L.%
, Snitz, K.%
, Weiss, K.%
, Weiss, L.%
, Ben-Ami, R.%
, Beidhaoui, S.%
\BDBL {}Sobel, N.%
\end{APACrefauthors}%
\unskip\
\newblock
\APACrefYearMonthDay{2015}{}{}.
\newblock
{\BBOQ}\APACrefatitle {Individual olfactory perception reveals meaningful
  variation in histology and genomics} {Individual olfactory perception reveals
  meaningful variation in histology and genomics}.{\BBCQ}
\newblock
\APACjournalVolNumPages{Proceedings of the National Academy of
  Sciences}{112}{28}{8750--8755}.
\PrintBackRefs{\CurrentBib}

\bibitem [\protect \citeauthoryear {%
Sethuraman%
}{%
Sethuraman%
}{%
{\protect \APACyear {1994}}%
}]{%
Sethuraman}
\APACinsertmetastar {%
Sethuraman}%
\begin{APACrefauthors}%
Sethuraman, J.%
\end{APACrefauthors}%
\unskip\
\newblock
\APACrefYearMonthDay{1994}{}{}.
\newblock
{\BBOQ}\APACrefatitle {A constructive definition of {D}irichlet Priors} {A
  constructive definition of {D}irichlet priors}.{\BBCQ}
\newblock
\APACjournalVolNumPages{Statistica Sinica}{4}{2}{639--650}.
\PrintBackRefs{\CurrentBib}

\bibitem [\protect \citeauthoryear {%
ten Hove%
, Jorgensen%
\BCBL {}\ \BBA {} van~der Ark%
}{%
ten Hove%
\ \protect \BOthers {.}}{%
{\protect \APACyear {2021}}%
}]{%
tenHove2021}
\APACinsertmetastar {%
tenHove2021}%
\begin{APACrefauthors}%
ten Hove, D.%
, Jorgensen, T\BPBI D.%
\BCBL {}\ \BBA {} van~der Ark, L\BPBI A.%
\end{APACrefauthors}%
\unskip\
\newblock
\APACrefYearMonthDay{2021}{}{}.
\newblock
{\BBOQ}\APACrefatitle {Interrater reliability for multilevel data: A
  generalizability theory approach} {Interrater reliability for multilevel
  data: A generalizability theory approach}.{\BBCQ}
\newblock
\APACjournalVolNumPages{Psychological Methods}{27}{4}{650-666}.
\PrintBackRefs{\CurrentBib}

\bibitem [\protect \citeauthoryear {%
Tversky%
\ \BBA {} Kahneman%
}{%
Tversky%
\ \BBA {} Kahneman%
}{%
{\protect \APACyear {1974}}%
}]{%
tversky1974judgment}
\APACinsertmetastar {%
tversky1974judgment}%
\begin{APACrefauthors}%
Tversky, A.%
\BCBT {}\ \BBA {} Kahneman, D.%
\end{APACrefauthors}%
\unskip\
\newblock
\APACrefYearMonthDay{1974}{}{}.
\newblock
{\BBOQ}\APACrefatitle {Judgment under uncertainty: Heuristics and biases}
  {Judgment under uncertainty: Heuristics and biases}.{\BBCQ}
\newblock
\APACjournalVolNumPages{Science}{185}{4157}{1124--1131}.
\PrintBackRefs{\CurrentBib}

\bibitem [\protect \citeauthoryear {%
Uang%
}{%
Uang%
}{%
{\protect \APACyear {2025}}%
}]{%
Huang2025rater}
\APACinsertmetastar {%
Huang2025rater}%
\begin{APACrefauthors}%
Uang, H\BPBI Y.%
\end{APACrefauthors}%
\unskip\
\newblock
\APACrefYearMonthDay{2025}{}{}.
\newblock
{\BBOQ}\APACrefatitle {Understanding Rater Cognition in Performance Assessment:
  A Mixed {IRTree} Approach} {Understanding rater cognition in performance
  assessment: A mixed {IRTree} approach}.{\BBCQ}
\newblock
\APACjournalVolNumPages{Applied Psychological Measurement}{49}{7}{}.
\PrintBackRefs{\CurrentBib}

\bibitem [\protect \citeauthoryear {%
Uto%
}{%
Uto%
}{%
{\protect \APACyear {2022}}%
}]{%
Uto_2022}
\APACinsertmetastar {%
Uto_2022}%
\begin{APACrefauthors}%
Uto, M.%
\end{APACrefauthors}%
\unskip\
\newblock
\APACrefYearMonthDay{2022}{}{}.
\newblock
{\BBOQ}\APACrefatitle {{A {B}ayesian many-facet {Rasch} model with Markov
  modeling for rater severity drift}} {{A {B}ayesian many-facet {Rasch} model
  with Markov modeling for rater severity drift}}.{\BBCQ}
\newblock
\APACjournalVolNumPages{Behavior Research Methods}{}{}{3910--3928}.
\PrintBackRefs{\CurrentBib}

\bibitem [\protect \citeauthoryear {%
Uto%
\ \BBA {} Okano%
}{%
Uto%
\ \BBA {} Okano%
}{%
{\protect \APACyear {2021}}%
}]{%
Uto_2021_AI}
\APACinsertmetastar {%
Uto_2021_AI}%
\begin{APACrefauthors}%
Uto, M.%
\BCBT {}\ \BBA {} Okano, M.%
\end{APACrefauthors}%
\unskip\
\newblock
\APACrefYearMonthDay{2021}{}{}.
\newblock
{\BBOQ}\APACrefatitle {Robust neural automated essay scoring using the item
  response theory} {Robust neural automated essay scoring using the item
  response theory}.{\BBCQ}
\newblock
\APACjournalVolNumPages{IEEE Transactions on Learning
  Technologies}{14}{3}{355--369}.
\PrintBackRefs{\CurrentBib}

\bibitem [\protect \citeauthoryear {%
Uto%
, Tsuruta%
, Araki%
\BCBL {}\ \BBA {} Ueno%
}{%
Uto%
\ \protect \BOthers {.}}{%
{\protect \APACyear {2024}}%
}]{%
Uto_24}
\APACinsertmetastar {%
Uto_24}%
\begin{APACrefauthors}%
Uto, M.%
, Tsuruta, J.%
, Araki, K.%
\BCBL {}\ \BBA {} Ueno, M.%
\end{APACrefauthors}%
\unskip\
\newblock
\APACrefYearMonthDay{2024}{}{}.
\newblock
{\BBOQ}\APACrefatitle {Item response theory model highlighting rating scale of
  a rubric and rater--rubric interaction in objective structured clinical
  examination} {Item response theory model highlighting rating scale of a
  rubric and rater--rubric interaction in objective structured clinical
  examination}.{\BBCQ}
\newblock
\APACjournalVolNumPages{PLOS ONE}{19}{9}{1-23}.
\PrintBackRefs{\CurrentBib}

\bibitem [\protect \citeauthoryear {%
Uto%
\ \BBA {} Ueno%
}{%
Uto%
\ \BBA {} Ueno%
}{%
{\protect \APACyear {2016}}%
}]{%
Uto_2016}
\APACinsertmetastar {%
Uto_2016}%
\begin{APACrefauthors}%
Uto, M.%
\BCBT {}\ \BBA {} Ueno, M.%
\end{APACrefauthors}%
\unskip\
\newblock
\APACrefYearMonthDay{2016}{}{}.
\newblock
{\BBOQ}\APACrefatitle {Item Response Theory for Peer Assessment} {Item response
  theory for peer assessment}.{\BBCQ}
\newblock
\APACjournalVolNumPages{IEEE Transactions on Learning
  Technologies}{9}{}{157-170}.
\PrintBackRefs{\CurrentBib}

\bibitem [\protect \citeauthoryear {%
Uto%
\ \BBA {} Ueno%
}{%
Uto%
\ \BBA {} Ueno%
}{%
{\protect \APACyear {2020}}%
}]{%
Uto_2020}
\APACinsertmetastar {%
Uto_2020}%
\begin{APACrefauthors}%
Uto, M.%
\BCBT {}\ \BBA {} Ueno, M.%
\end{APACrefauthors}%
\unskip\
\newblock
\APACrefYearMonthDay{2020}{}{}.
\newblock
{\BBOQ}\APACrefatitle {A generalized many-facet {Rasch} model and its
  {Bayesian} estimation using {Hamiltonian Monte Carlo}} {A generalized
  many-facet {Rasch} model and its {Bayesian} estimation using {Hamiltonian
  Monte Carlo}}.{\BBCQ}
\newblock
\APACjournalVolNumPages{Behaviormetrika}{47}{}{1-28}.
\PrintBackRefs{\CurrentBib}

\bibitem [\protect \citeauthoryear {%
van~der Vaart%
}{%
van~der Vaart%
}{%
{\protect \APACyear {1998}}%
}]{%
vanderVaart1998}
\APACinsertmetastar {%
vanderVaart1998}%
\begin{APACrefauthors}%
van~der Vaart, A\BPBI W.%
\end{APACrefauthors}%
\unskip\
\newblock
\APACrefYear{1998}.
\newblock
\APACrefbtitle {Asymptotic Statistics} {Asymptotic statistics}.
\newblock
\APACaddressPublisher{}{Cambridge University Press}.
\PrintBackRefs{\CurrentBib}

\bibitem [\protect \citeauthoryear {%
Verhelst%
\ \BBA {} Verstralen%
}{%
Verhelst%
\ \BBA {} Verstralen%
}{%
{\protect \APACyear {2001}}%
}]{%
Verhelst2001}
\APACinsertmetastar {%
Verhelst2001}%
\begin{APACrefauthors}%
Verhelst, N\BPBI D.%
\BCBT {}\ \BBA {} Verstralen, H\BPBI H\BPBI F\BPBI M.%
\end{APACrefauthors}%
\unskip\
\newblock
\APACrefYearMonthDay{2001}{}{}.
\newblock
{\BBOQ}\APACrefatitle {An IRT Model for Multiple Raters} {An irt model for
  multiple raters}.{\BBCQ}
\newblock
\BIn{} A.~Boomsma, M\BPBI A\BPBI J.~van Duijn\BCBL {}\ \BBA {} T\BPBI A\BPBI
  B.~Snijders\ (\BEDS), \APACrefbtitle {Essays on Item Response Theory} {Essays
  on item response theory}\ (\BPGS\ 89--108).
\newblock
\APACaddressPublisher{New York, NY}{Springer New York}.
\PrintBackRefs{\CurrentBib}

\bibitem [\protect \citeauthoryear {%
Wade%
\ \BBA {} Ghahramani%
}{%
Wade%
\ \BBA {} Ghahramani%
}{%
{\protect \APACyear {2018}}%
}]{%
Wade&Ghahramani_2018}
\APACinsertmetastar {%
Wade&Ghahramani_2018}%
\begin{APACrefauthors}%
Wade, S.%
\BCBT {}\ \BBA {} Ghahramani, Z.%
\end{APACrefauthors}%
\unskip\
\newblock
\APACrefYearMonthDay{2018}{}{}.
\newblock
{\BBOQ}\APACrefatitle {Bayesian Cluster Analysis: Point Estimation and Credible
  Balls (with Discussion)} {Bayesian cluster analysis: Point estimation and
  credible balls (with discussion)}.{\BBCQ}
\newblock
\APACjournalVolNumPages{Bayesian Analysis}{13}{2}{559 -- 626}.
\PrintBackRefs{\CurrentBib}

\bibitem [\protect \citeauthoryear {%
Wang%
\ \BBA {} Engelhard%
}{%
Wang%
\ \BBA {} Engelhard%
}{%
{\protect \APACyear {2019}}%
}]{%
wang2019}
\APACinsertmetastar {%
wang2019}%
\begin{APACrefauthors}%
Wang, J.%
\BCBT {}\ \BBA {} Engelhard, G.%
\end{APACrefauthors}%
\unskip\
\newblock
\APACrefYearMonthDay{2019}{}{}.
\newblock
{\BBOQ}\APACrefatitle {Conceptualizing rater judgments and rating processes for
  rater-mediated assessments} {Conceptualizing rater judgments and rating
  processes for rater-mediated assessments}.{\BBCQ}
\newblock
\APACjournalVolNumPages{Journal of Educational Measurement}{56}{3}{582--609}.
\PrintBackRefs{\CurrentBib}

\bibitem [\protect \citeauthoryear {%
Werts%
, Linn%
\BCBL {}\ \BBA {} J{\"o}reskog%
}{%
Werts%
\ \protect \BOthers {.}}{%
{\protect \APACyear {1974}}%
}]{%
Joreskog_74}
\APACinsertmetastar {%
Joreskog_74}%
\begin{APACrefauthors}%
Werts, C.%
, Linn, R.%
\BCBL {}\ \BBA {} J{\"o}reskog, K.%
\end{APACrefauthors}%
\unskip\
\newblock
\APACrefYearMonthDay{1974}{}{}.
\newblock
{\BBOQ}\APACrefatitle {Intraclass Reliability Estimates: Testing Structural
  Assumptions} {Intraclass reliability estimates: Testing structural
  assumptions}.{\BBCQ}
\newblock
\APACjournalVolNumPages{Educational and Psychological
  Measurement}{34}{1}{25-33}.
\PrintBackRefs{\CurrentBib}

\end{thebibliography}

\clearpage
\appendix
\setcounter{section}{0}
\setcounter{figure}{0}
\setcounter{table}{0}
\setcounter{equation}{0}
\begin{center}\Large\bfseries Appendix\end{center}
\makeatletter\def\@extra@b@citeb{@App}\makeatother

This appendix complements the main text. Appendix~\ref{Posterior_App} provides the details of posterior inference (prior elicitation, joint posterior, samplers, and post-processing). Appendix~\ref{sim} reports the simulation study on parameter recovery, and Appendix~\ref{application} presents the application to the OSCE clinical performance assessment. Appendix~\ref{A_P} provides the proof of Proposition~\ref{prop_1} and identifiability results; Appendix~\ref{sec:naive} compares naive and model-based latent trait estimates; Appendix~\ref{Additional_fig_tab} collects additional figures and tables from the OSCE and Olfactory Assessment applications; and Appendix~\ref{sec:extensions} describes further extensions of the model. Additional robustness checks, including MCMC diagnostics and sensitivity analyses to the choice of clustering prior and rating design, are reported separately in the Supplementary Materials.

\renewcommand{\thesection}{\Alph{section}}
\numberwithin{equation}{section}
\numberwithin{figure}{section}
\numberwithin{table}{section}
\label{Appendices}

\section{Posterior Inference}\label{Posterior_App}

Posterior inference is performed via Markov Chain Monte Carlo (MCMC). In this section, we first outline our prior specification (Section~\ref{sec:prior_elicitation}), followed by a detailed description of our posterior computation strategy  (Section~\ref{sec:posterior_computation}). We implemented the MCMC algorithm using the NIMBLE software package, making several specific modifications to the default sampler configuration to address strong posterior dependencies and improve mixing. 
Details on the MCMC samplers used for each parameter are reported in Table~\ref{tab:samplers} of the main text.

\subsection{Prior Elicitation}\label{sec:prior_elicitation}
To complete the model specification and enable posterior inference, we must assign either a fixed value or a prior distribution to the parameters of the measurement and structural models. We place a prior over the parameter vectors $\boldsymbol{S}=[\boldsymbol{\varphi}^2,\boldsymbol{\beta,\lambda,\gamma,\omega}^2,\boldsymbol{\xi,\tau,\rho,a}]^T$ and assign the latent trait $\theta$ a nonparametric mixture prior as introduced in Section~\ref{Structural:subjects} governed by a concentration parameter $\alpha$ and a base measure $F_0$. Several specifications may be considered for the base measure \citepApp{Ghosal_van_der_Vaart_2017, DeBlasi2015, Hjort_Holmes_Muller_Walker_2010}, usually assumed to be a suitable diffuse probability function. We let $F_0$ take a parametric form, with hyperparameters fixed in advance. We specify conjugate priors for most model parameters for analytical tractability. We let the respective hyperparameters induce a weakly informative prior distribution over the model parameters, preventing computational instability and overfitting while preserving flexibility over reasonable values. Throughout, Gamma distributions are parameterized by shape and rate, and inverse-Gamma distributions by shape and scale. 

\paragraph{Measurement Model.}
We specify the following priors over the measurement model parameters across items $i=1,\dots, I$ and raters $\forall r\in \mathcal{R}$:
    \begin{equation*}
        log(\varphi_i^2) \overset{\mathrm{iid}}{\sim} N(-0.5,2), \quad \quad \lambda_i \overset{\mathrm{iid}}{\sim} N(0,2),  \quad \quad \beta_{ir} \overset{\mathrm{iid}}{\sim} N(0,1), \quad \quad   \gamma_{ik} \overset{\mathrm{iid}}{\sim} N(0,1.5),
\end{equation*}
    \begin{equation*}
       log(\omega_r^2) \overset{\mathrm{iid}}{\sim} N(-0.5,2), \quad \quad   
      \xi_r \overset{\mathrm{iid}}{\sim} N(0,2), \quad \quad
        \rho_r \overset{\mathrm{iid}}{\sim} N(0,1), \quad \quad \tau_{rk}  \overset{\mathrm{iid}}{\sim} N(0,0.5),
\end{equation*}
    for $k=1,\dots,K$. 
The log-normal priors on $\varphi_i^2$ and $\omega_r^2$ favor values below one while leaving enough mass in the right tail for larger heterogeneity. 
As in \citeApp{Uto_24}, the prior variances of $\beta_{ir}$ and $\rho_r$ are fixed to one to set the scale of these parameters. For the category steps, we assign a larger prior variance to $\gamma_{ik}$ than to $\tau_{rk}$ since differences in category use are expected to be driven mainly by the semantic content of the items, with $\tau_{rk}$ capturing only residual rater-specific deviations. Finally, the priors on $\lambda_i$ and $\xi_r$ are weakly informative, centered on no effect, and allow discrimination parameters to vary over a wide range on the log scale.

We place a Gaussian Markov random field smoothness prior over the vector of basis coefficients $\boldsymbol{a}=\{a_d\}_{d=1}^D$: 
\begin{equation*}
    \boldsymbol{a} \sim N(\boldsymbol{0},\;(b\;\boldsymbol{\Omega})^{-1})
\end{equation*}
where $\boldsymbol{\Omega}=\boldsymbol{D}^T\boldsymbol{D}$ is the $D\times D$ second-order difference matrix for adjacent B-spline coefficients, and $b>0$ is the precision hyperparameter controlling the smoothness of the spline. 
We place a weakly-informative prior over $b$, $b \sim Ga(2, 0.05)$, favoring smoother splines by default while letting the data inform the actual level of smoothness. We note that this formulation implies a log-penalty $\boldsymbol{a}^t \boldsymbol{\Omega}\boldsymbol{a}$ that corresponds to the penalty term introduced by \citeApp{EilersMarx1996} in a frequentist penalized likelihood setting \citepApp{heinzl2012}.

\paragraph{Structural Model.}
Recall from Section~\ref{Structural:subjects} that subjects' latent traits are modeled as $\theta_j \mid G \overset{\mathrm{iid}}{\sim} G$, $j=1,\dots,N$, where $G \sim DP(\alpha, F_0)$ (or $PYP(d,\alpha,F_0)$), with component-specific parameters $\theta_c^{**}=(\mu_c,\sigma_c^2)$, $\mu_c \overset{\mathrm{iid}}{\sim} N(0,s^2_\mu)$ and $\sigma_c^2 \overset{\mathrm{iid}}{\sim} IGa(\nu_1,\nu_2)$, jointly defining the base measure $F_0$. The priors over the concentration parameter $\alpha$ and the hyperparameters of $F_0$ control the posterior distribution over clustering \citepApp{BDA3}: small values of the variance component of $F_0$ favor allocating subjects to different clusters, whereas larger values favor allocating subjects to the same cluster. We set $s^2_\mu=2$ and $(\nu_1,\nu_2)=(2.01,1.01)$, that is $F_0=N(0,2)\,IGa(2.01,1.01)$, and specify a prior $\alpha \sim Ga(2,4)$, following \citeApp{Escobar_95}. Although the discount parameter $d$ of the \gls{pyp} could also be estimated from the data, we fix it in advance, setting $d=0$ for the \gls{dp} and $d=0.3$ or $d=0.6$ for the \gls{pyp} implementations considered in this work.

\subsection{Posterior Computation}\label{sec:posterior_computation}

Posterior computation proceeds via MCMC, using standard full conditional or Metropolis--Hastings updates for each parameter block, with several modifications tailored to our model. Conjugacy is exploited for normal--normal and normal--inverse-Gamma blocks, while adaptive RWMH is used when conjugacy is unavailable (e.g., for P-spline coefficients and nonlinear effects); see Table~\ref{tab:samplers} in Section~\ref{posterior} for details. For the structural model, we use a finite approximation to the \gls{dp} or \gls{pyp} prior, detailed below. We obtain posterior samples via a custom MCMC algorithm implemented in NIMBLE \citepApp{nimble-software2026}, with modifications to the default sampler configuration to address strong dependencies and improve mixing for the \gls{degmfr} model, as highlighted below. The complete code is shared in the Supplementary Materials. 

\paragraph{Finite Stick-breaking Approximation.}
The recursive generation defined in Eq.~(\ref{Infinite_Mixture}) implies that the expected mixture weights generally decrease with the component index $c$. 
Given this property of the weights, in practical applications, the infinite sequences (\ref{Infinite_Mixture}) are truncated at sufficiently large values of $C \in \mathbb{N}$: 
\begin{equation}\label{Trunc1}
     G = \sum_{c=1}^{C} \pi_{c} N(\mu_c, \sigma^2_c). 
\end{equation}
We use this finite stick-breaking approximation proposed by \citetApp{Ishwaran2001}, setting $V_{C}=1$, and discarding terms $C+1,\dots,\infty$, for $G$. This conditional approach samples cluster assignments, weights, and component parameters directly via MCMC, rather than marginalizing over the mixture distribution as in alternative schemes. This makes it straightforward to estimate cluster-level quantities and posterior moments (Eq.~(\ref{moments})), which are used in the latent consensus calculation (Eq.~(\ref{ICC}) in Section~\ref{Agreement}).

\paragraph{Network-Informed Adaptive Metropolis-Hastings.}
In the \gls{degmfr}, the order in which ratings occur induces correlation among individuals connected through cross-lag links (Figure~\ref{Example_net}). For this reason, the latent abilities $\{\theta_j\}_{j=1}^J$ are updated using multivariate Barker proposals defined over \textit{network-informed} blocks, meaning we group and sample these individuals jointly, reducing local random-walk behavior in highly correlated regions of the posterior \citepApp{LivingstoneZanella2022,RobertsSahu1997}. To further enhance global exploration, we add a second layer of randomly generated blocks, which helps break residual dependencies not captured by the structural graph. Cluster indicators are sampled using categorical updates, while remaining parameters rely on adaptive random-walk or conjugate samplers when available.

\paragraph{Weighted Average Centered P-Spline.}

To resolve the location indeterminacy among $\lambda_i$, $\xi_r$, and $\nu^t$ in Eq.~(\ref{Measurement:Dynamic_discrimination1}), we adopt a centered reparameterization of the P-spline linear predictor. Defining the  observation-weighted mean $\bar{\nu} = \sum_{t=1}^T  n_t \nu^t / \sum_{t=1}^T n_t$, where $n_t$ is the number  of subjects at position $t$ across raters, we set  $\nu^{t*} = \nu^t - \bar{\nu}$. This identification strategy resembles the sum-to-zero constraint over the observed data commonly used for smooth terms in additive models \citeApp{wood2017}.  The constraint  $\sum_{t=1}^T n_t \nu^{t*} = 0$, combined with the other constraints enforced in Appendix~\ref{identifiability}, resolves the location  indeterminacy among $\nu^t$, $\lambda_i$, and $\xi_r$ in  Eq.~(\ref{Measurement:Dynamic_discrimination1}). For posterior computation, we sample the unconstrained $\nu^t$ from its P-spline prior and apply the centering transform deterministically at each iteration, while $\nu^{t*}$ enters the likelihood. Accordingly, 
Eq.~(\ref{Measurement:Dynamic_discrimination1}) becomes:
\begin{equation}\label{Measurement:Dynamic_discrimination1_star}
    \log(\phi_{ir}^{t}) = \lambda_i + \xi_{r} + \nu^{t*},
\end{equation}
so that $\exp(\nu^{t*})$ is the multiplicative effect on discrimination of position $t$ relative to the observation-weighted log-average across positions. Since only the centered effects are identified, convergence diagnostics are computed on $\nu^{t*}$ (see Supplementary Materials).

\subsection{Post-processing Procedures}

\paragraph{Posterior Subject Scoring and Inter-Rater Consensus.}
As with all model parameters, inference on subject-specific quantities is based on their posterior distributions. Point estimates of the latent trait $\{\theta_j\}_{j=1}^N$, such as the posterior mean (expected a posteriori, EAP), the posterior median, or the maximum a posteriori (MAP), correspond to Bayesian estimators under $L_2$, $L_1$, and $L_0$ loss, respectively, and may serve as subject-level evaluations (e.g., final grades). Uncertainty around these estimates is quantified through posterior credible sets, such as the 95\% equal-tailed credible interval or the 95\% highest posterior density interval (HPD) \citepApp{BDA3}. The latent inter-rater consensus index $LIC_{rr'}$ (Section~\ref{Agreement}) can likewise be evaluated at each MCMC iteration, yielding a full posterior distribution that supports principled uncertainty quantification and formal inference for pairwise latent inter-rater consensus.

\paragraph{Posterior Densities and Cluster Point Estimates.}
The latent trait density equipped with a nonparametric mixture prior can be monitored throughout the MCMC algorithm by evaluating it over a dense grid of equally spaced points. At each iteration, the density is computed at each grid point according to the mixture induced by the finite stick-breaking approximation, yielding a full pointwise posterior representation after the run.
Beyond density  estimation, the discrete nonparametric prior induces a 
posterior distribution over the space of possible 
partitions for subjects. For interpretation, we summarize this through a point estimate that minimizes the variation of information \citepApp[VI;][]{Meila_2007} loss function, using a decision-theoretic approach to probabilistic clustering  \citepApp{Wade&Ghahramani_2018}.

\paragraph{Computational Details.}
In the present work, for both the simulations (Appendix~\ref{sim}) and the real-data analyses, as in previous work (e.g., \citealtApp{mignemi2025BNP,Paganin2023,heinzl2012}), the number of iterations is fixed at 55,000 (thinning by a factor of 10 due to memory constraints), with the first 5000 discarded as burn-in. We fix the maximum number of clusters to be $C=25$ for the \gls{dp} and \gls{pyp} mixtures \citepApp{BDA3}.
The package \textit{salso} \citepApp{Dahl_salso} is used for the point estimate of the clustering structure based on the VI loss function. We graphically inspect plots for convergence and use the package \textit{coda} for model diagnostics \citepApp{CODA}. Section 4 of the Supplementary Materials details MCMC diagnostics for the generated and real data analyzed in this work. We also confirm convergence through multiple MCMC runs with different starting values. The average elapsed time for the \gls{gmfr}, \gls{egmfr}, and \gls{degmfr} models was, respectively, 1497, 2124, and 2559 seconds (approximately 25, 35, and 43 minutes) on a 12th Gen Intel(R) Core(TM) i9 12900H CPU.

\subsection{Joint Posterior Distribution}
Building on the model specification and prior elicitation detailed in Section~\ref{posterior}, we provide here the full joint posterior distribution and a complete overview of the sampling strategy.  We introduce the auxiliary variable $\textit{z}_j$ which indicates the subject cluster membership, such that $\textit{z}_j=c$, if and only if $\theta^*_j=\theta_c^{**}$. Under this stick-breaking approximation, the posterior distribution of the parameter vector $\boldsymbol{S}$ is: 
\begin{equation}\label{joint_posterior}
\begin{aligned}
p(\boldsymbol{S} \mid \mathbf{y}) \propto & \prod_{r=1}^{R} \prod_{i=1}^I  p\Big(y^1_{ir} \;\Big|\; \theta^{b_r(1)}, \, \epsilon_{i}^{b_r(1)}, \, \beta_{ir}, \, \boldsymbol{\tau_{r}}, \, \boldsymbol{\gamma}_{i}, \, \eta_r^1,  \, \lambda_{i}, \, \xi_{r}, \, \boldsymbol{a}\Big) \\
& \times\prod_{t=2}^{J_r} p\Big(y^t_{ir} \;\Big|\; \theta^{b_r(t)}, \, \theta^{b_r(t-1)}, \, \epsilon_{i}^{b_r(t)}, \, \beta_{ir}, \, \boldsymbol{\tau_{r}}, \, \boldsymbol{\gamma}_{i}, \, \eta_r^t,  \, \lambda_{i}, \, \xi_{r}, \, \boldsymbol{a}\Big) \\
& \times \prod_{j=1}^{J} \left[ p(\theta_j \mid \mu_{z_j}, \sigma^2_{z_j}) \cdot p(z_j \mid \mathbf{\pi}) \right] \times p(\mathbf{\pi} \mid \boldsymbol{v}) \times \left[ \prod_{c=1}^{C-1} p(v_c \mid \alpha, d) \right] \times p(\alpha) \\
& \times \prod_{c=1}^{C} \left[ p(\mu_c) \cdot p(\sigma^2_c) \right] \times p(\boldsymbol{a} \mid b) \cdot p(b) \\
& \times \prod_{i=1}^{I} \left[ p(\lambda_i) \cdot p(\varphi_i^2) \cdot \prod_{j=1}^{J} p(\epsilon_{ij} \mid \varphi_i^2) \right] \times \prod_{r=1}^{R} \left[ p(\xi_r) \cdot p(\omega_r^2) \cdot p(\rho_r) \right] \\
& \times \prod_{r=1}^{R} p\left(\eta_{r}^{b_r(1)} \mid {r} \right) \prod_{t=2}^{J_r} p\left(\eta_{r}^{b_r(t)} \mid {r},\rho_r, \theta^{b_r(t-1)}\right) \\
& \times \prod_{i=1}^{I}\prod_{r=1}^{R} p(\beta_{i,r}) \times \prod_{i=1}^{I} p(\boldsymbol{\gamma}_i) \times \prod_{r=1}^{R} p(\boldsymbol{\tau}_r),
\end{aligned}
\end{equation}

A complete overview of the sampling strategy for each block of the joint 
posterior in Eq.~(\ref{joint_posterior}) is provided in Table~\ref{tab:samplers}.

\section{Simulation Study}\label{sim}
We conduct a simulation study to compare the performance of the proposed models (\gls{egmfr} and \gls{degmfr}) with the \gls{gmfr} model proposed in \citeApp{Uto_24}. The simulation study is designed to reflect the small-$R$, large-$I$ setting introduced in Section~\ref{Structural_EG} as in the application of Appendix~\ref{application}. In Section~3 of the Supplementary Materials, we compare with a balanced (fully crossed) design, where each subject is scored independently by every rater. The complementary large-$R$, small-$I$ setting is addressed empirically in the olfactory application of Section~\ref{Odor} of the main text. We jointly control for raters' dynamic effects (i.e., anchoring bias and practice effect) and the multimodality of the subjects' latent trait distribution to investigate the performance of the proposed methods. In this section, we present the major results on the accuracy of our methods under unbalanced rating designs when the unknown latent trait density is modeled through a \gls{dp} mixture prior. Additional results from this study are presented in Section 1 of the Supplementary Materials, and results under the \gls{pyp} mixture prior on the same datasets in Section~2.

\subsection{Setting}
For each item, we generate subjects' scores on a four-category ordered scale, that is $Y_{jir} \in \{0,1,2,3\}$, for $i=1,\dots, I$. Each subject is scored by a random subset of three 
raters on $I = 30$ independent items. The sample sizes for subjects and raters are $N=30$ and $R=5$, respectively, and are fixed across scenarios. This implies an unbalanced design in which subjects are not scored by all raters, and the number of subjects each rater scores (i.e., the respective workload) might vary across raters. 

Data are generated under nested instances of the \gls{degmfr} model corresponding to increasing levels of complexity. We consider a $2 \times 2$ experimental design and generate $10$ independent data sets under each generative scenario, combining two different heterogeneity features: raters' dynamic effects (i.e., \textit{static/dynamic}) and subject latent trait multimodality (i.e., \textit{unimodal/bimodal}). 
Raters' dynamic effects are manipulated according to the following settings:
\begin{description}
    \item[Static:] Under this scenario, ratings are invariant under permutation, and the raters' dynamic effect is fixed to zero. As a consequence, data are generated from the \gls{egmfr} model detailed in Section \ref{E-GMFRM}, where $\rho_r=0$, for $r=1,\dots, R$, and $\nu^t=0$, for $t=1,\dots, T$. 
    \item[Dynamic:] Data are generated from the \gls{degmfr} model with rater-specific cross-lagged effects, namely $\rho_1=\rho_2=0$, $\rho_3=1$, $\rho_4=-0.8$, $\rho_5=0.5$. The presence of the practice effect is determined by the following values of the basis coefficients of the P-Spline, $\boldsymbol{a}=(0.1,0.3,0.4,0.4,0.4,0.3)$. These coefficients imply a smooth trend in 
    average log-discrimination over the rating sequence, with an initial warm-up phase followed by a stable plateau and a mild decline at the end.
\end{description}
Subject latent trait multimodality is manipulated as follows: in both cases, the marginal variance of  $\theta_j$ is fixed to match across scenarios to enable comparability:
\begin{description}
    \item[Unimodal:] Under this scenario we generate each latent trait i.i.d. from a normal distribution with $\mu_\theta=0$ and variance $\sigma_\theta^2=2.386^2$ across subjects, that is $\theta_j \overset{\mathrm{iid}}{\sim}N(0,2.386^2)$. 
    \item[Bimodal:] Under the bimodal scenarios, latent traits are generated i.i.d. from a mixture of two equiprobable normal distributions, with component-specific means, respectively, $\mu_{\theta_1}=-2$ and $\mu_{\theta_2}=2$ and variance $\sigma^2_{\theta_1}=2.5$ and $\sigma^2_{\theta_2}=0.8$, respectively, that is $\theta_j \overset{\mathrm{iid}}{\sim} 0.5\;N (-2,2.5) + 0.5\; N(2,0.8)$  chosen so that the marginal variance matches $\sigma^2_\theta=2.386^2$. 
\end{description}
The combination of these two different heterogeneity features (i.e., latent trait multimodality and rater dynamic effects) results in four different generative scenarios: \textit{Static-Unimodal} (S-U); \textit{Static-Bimodal} (S-B); \textit{Dynamic-Unimodal} (D-U); \textit{Dynamic-Bimodal} (D-B).  

The remaining measurement parameters are fixed or generated analogously across scenarios. We fix the item residual variance $\varphi_i^2=1$, and generate the local difficulty shift for each item as $\beta_{ir}\overset{\mathrm{iid}}{\sim} N(0,1)$. The step difficulty parameters for each item are generated as $\gamma_{ik} \overset{\mathrm{iid}}{\sim} N(0,3^2)$, for $k=1,\dots,K-1$, with $\gamma_{iK}=-\sum_{k=1}^{K-1}\gamma_{ik}$. Raters' log-variances of local biases 
are generated i.i.d. across raters, that is $log(\omega_r^2) \overset{\mathrm{iid}}{\sim} N(-0.5, 0.5^2)$, and the threshold step parameters for each rater as $\tau_{rk} \overset{\mathrm{iid}}{\sim} N(-0.5,2^2)$, for $k=1,\dots,K-1$ with $\tau_{r1}=-\sum_{k=2}^{K}\tau_{rk}$. We generate the item log-discrimination parameter as $\lambda_i \overset{\mathrm{iid}}{\sim}N(0, 0.2^2)$, $i=2,\dots,I$, with $\lambda_1=-\sum_{i=2}^I\lambda_i$. Finally,  the rater local log-discrimination shifts $\xi_r\overset{\mathrm{iid}}{\sim}N(0, 0.3^2)$, for $r=2,\dots,R$, with $\xi_1=-\sum_{r=2}^R\xi_r$.

The generative values of all parameters satisfy the identifiability constraints detailed in Appendix~\ref{identifiability}.

We use posterior means as point estimates for all parameters. Parameter recovery performance is evaluated using the Root Mean Square Error (RMSE) and the Average Bias (Bias). The RMSE is defined as the root mean squared difference between the posterior mean and the true parameter value across simulated data sets, whereas the Average Bias is computed as the average difference between the posterior mean and the true value across data sets. For parameters associated with a group of units (subjects, items,  or raters), both metrics are first averaged over the relevant units within each replicate and then across the $10$ independent replications.

\subsection{Results}
The main results on parameter recovery are summarized in Table~\ref{Tab_sim_theta} and Figure~\ref{RMSE_theta}. Table~\ref{Tab_sim_theta} reports RMSE and Bias for $\theta$, $\xi$, $\lambda$, and $\rho$; the remaining parameters are reported in Table~1 of the Supplementary Materials. Figure~\ref{RMSE_theta} displays boxplots of the RMSE and average Bias for $\theta$ across the $10$ independent replications. Overall, the \gls{egmfr} and \gls{degmfr} models achieve substantially lower and more consistent RMSE compared to the \gls{gmfr} across all scenarios.
As expected, the \gls{egmfr} and \gls{degmfr} models each perform best in estimating the subject latent trait under the generative scenario that matches their specification: the \gls{egmfr} provides more accurate estimates of the subject latent trait under static scenarios, while the \gls{degmfr} improves estimation under dynamic scenarios, when anchoring biases and practice effects are present (Table~\ref{Tab_sim_theta}, S-U and S-B vs D-U and D-B). This pattern is also observed for the rater-specific parameters $\xi$ (Table~\ref{Tab_sim_theta}) and $LIC$, and the item residual variance $\varphi^2$ (Table 1 of the Supplementary Materials). The estimates of the category step parameters $\tau$ and $\gamma$ are comparable across scenarios under both the EG- and \gls{degmfr} models (Table 1 of the Supplementary Materials).\\
The average estimated density of the latent trait $\theta$ under the \gls{egmfr} and \gls{degmfr} models is identical under the Static scenarios (Figure \ref{Densities_sim} and Figure 1 of the Supplementary Materials); the \gls{degmfr} model provides, on average, a better estimate of this density under the Dynamic-Bimodal generative scenario (Figure \ref{Densities_sim}). Given the small sample size ($N=30$ individuals, $I=30$ items, $R=5$ raters), some difference between estimated and true density is expected, especially in low-density regions; the credible intervals in Figure \ref{Densities_sim} capture this uncertainty. Figure 2 in the Supplementary Materials shows the posterior similarity matrix for subject allocation, which correctly identifies a single cluster in the Static-Unimodal scenario and two distinct subgroups in the Dynamic-Bimodal scenario, suggesting that the \gls{degmfr} model recovers both the clustering structure and the marginal density of $\theta$. The P-spline estimates under the \gls{degmfr} model show good performance in the interior region, whereas boundary effects increase estimation uncertainty, particularly near the end of the rating process (Figure \ref{P-Spline_sim}).  

\begin{table}
\centering
        \begin{tabular}{l l c c c c c c}
           \toprule
         &      & \multicolumn{2}{c}{\gls{gmfr}}  & \multicolumn{2}{c}{\gls{egmfr}} & \multicolumn{2}{c}{\gls{degmfr}} \\    \midrule  
  Parameter & Generative Scenario    &   RMSE          & Bias     &   RMSE          & Bias       &   RMSE          &  Bias      \\ \midrule 
  
$\theta$   &S-U   & 1.229 & 0.036 & \textbf{0.709} & 0.023 & 0.744          &  \textbf{0.011} \\
           & S-B  & 1.583 & 0.056 & \textbf{0.845} & 0.044 & 0.863          &  \textbf{0.033} \\
           & D-U  &  1.371 & 0.034 & 0.885          & 0.024 & \textbf{0.751} & \textbf{-0.016} \\
           & D-B  & 1.768 & 0.057 & 1.033          & 0.036 & \textbf{0.840} &  \textbf{0.027} \\ \midrule

  $\xi$  &S-U       & 0.222 & -0.031 & \textbf{0.172} & -0.011          & 0.174           & \textbf{-0.009} \\ 
         & S-B      &  0.678 & -0.201 & \textbf{0.593} & -0.173          & 0.604           & \textbf{-0.171} \\ 
         & D-U      &   0.271 & -0.021 & 0.190          & -0.010          & \textbf{0.189}  & -0.010 \\ 
         & D-B      & 0.744 & -0.182 & 0.603          & \textbf{-0.165} & \textbf{0.595}  & -0.167 \\ \midrule 

 $\lambda$  &S-U  & 0.630 &  0.142          & \textbf{0.376} & \textbf{0.068} & 0.377          & \textbf{0.068} \\
            & S-B & 0.563 &  0.138          & 0.413          & \textbf{0.079} & \textbf{0.412} & \textbf{0.079} \\
            & D-U &   0.457 & \textbf{-0.037} & 0.381          & 0.071          & \textbf{0.376} & 0.069 \\
            & D-B & \textbf{0.393} & -0.129 & 0.411          & \textbf{0.079}          & 0.405          & \textbf{0.079} \\ \midrule

$\rho$      &S-U  & - & - & - &- & 0.139 & 0.029\\ 
            & S-B & - & - & - &- &  0.091 & 0.015\\ 
            & D-U & - & - & - &- &  0.155 & 0.019\\ 
            & D-B & - & - & - &- & 0.101 & 0.035\\ 
              \bottomrule
        \end{tabular}
   
\caption[]{ \small Root Mean Square Error (RMSE) and Average Bias (Bias) across $10$ independent replications are reported for the \acrfull{gmfr} model, 
the \acrfull{egmfr}, 
and the \acrfull{degmfr}
. The \gls{egmfr} and \gls{degmfr} are fitted with a \gls{dp} mixture prior on the latent trait. Results are shown under four generative scenarios: \textit{Static-Unimodal}(S-U); \textit{Static-Bimodal}(S-B); \textit{Dynamic-Unimodal} (D-U); \textit{Dynamic-Bimodal} (D-B). We indicate the smallest RMSE and absolute Bias values across models for each scenario in bold; bold is omitted when all models achieve the same value.}
 \label{Tab_sim_theta}
    \end{table}

\begin{figure}
    \centering
    \includegraphics[width=\textwidth]{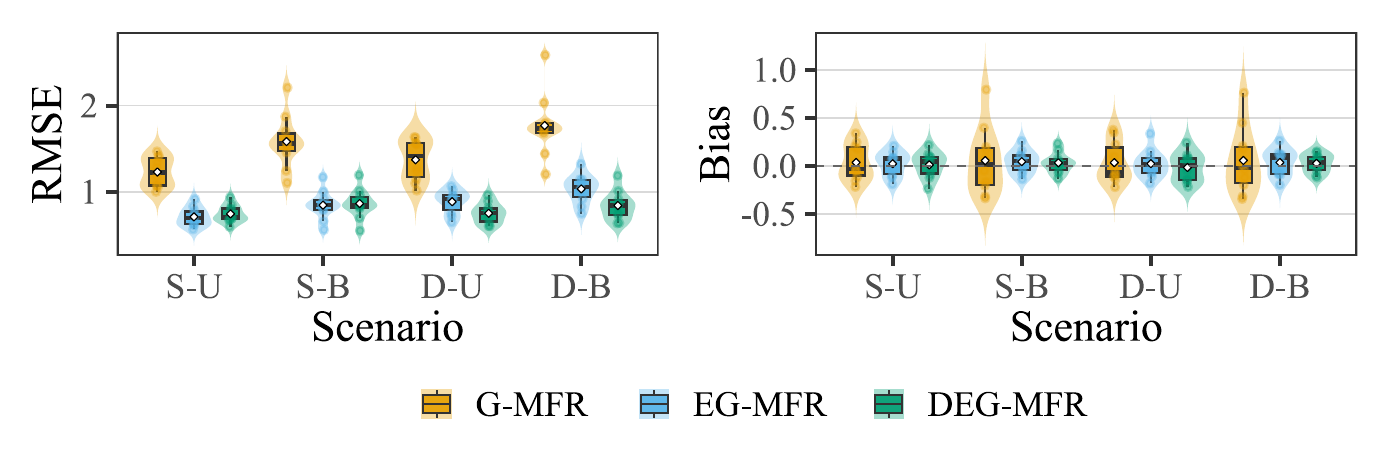}
    \caption{\small Boxplots of the Root Mean Square Error (RMSE) and the Average Bias of subject-specific latent trait $\theta_j$ computed across subjects, over 10 independent datasets. Results are shown under the \acrfull{gmfr}, the \acrfull{egmfr}, and the \acrfull{degmfr}
    with \gls{dp} Mixture prior, under four generative scenarios: \textit{Static-Unimodal}(S-U); \textit{Static-Bimodal}(S-B); \textit{Dynamic-Unimodal}(D-U); and \textit{Dynamic-Bimodal}(D-B).}
    \label{RMSE_theta}
\end{figure}

\begin{figure}
    \centering
    \includegraphics[scale=0.65]{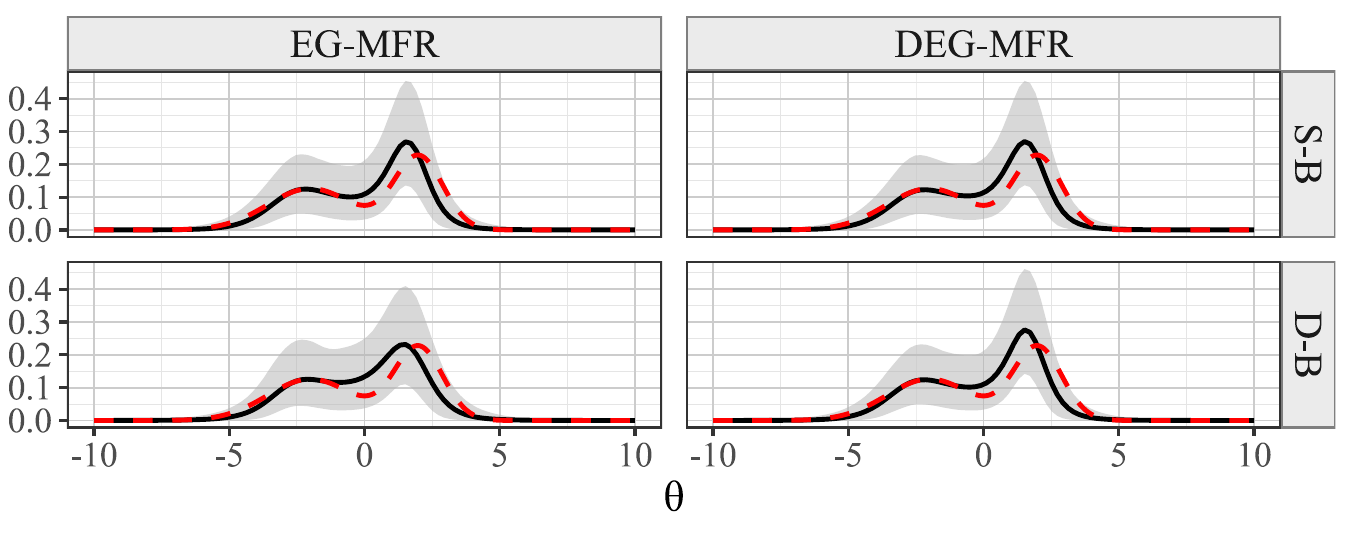}
    \caption{Average estimated density of the latent trait $\theta$ across 10 independent datasets under different scenarios. The top-left and top-right figures display the estimated density of the \acrfull{egmfr} 
    and the \acrfull{degmfr} with \gls{dp} Mixture prior. Results are shown under two generative scenarios: \textit{Static-Bimodal}(S-B) and \textit{Dynamic-Bimodal}(D-B).
    The dashed red lines indicate the true density; the black solid line and the shaded grey area indicate, respectively, the pointwise mean and $95\%$ quantile-based credible intervals.}
    \label{Densities_sim}
\end{figure}

\begin{figure}
    \centering
    \includegraphics[scale=0.75]{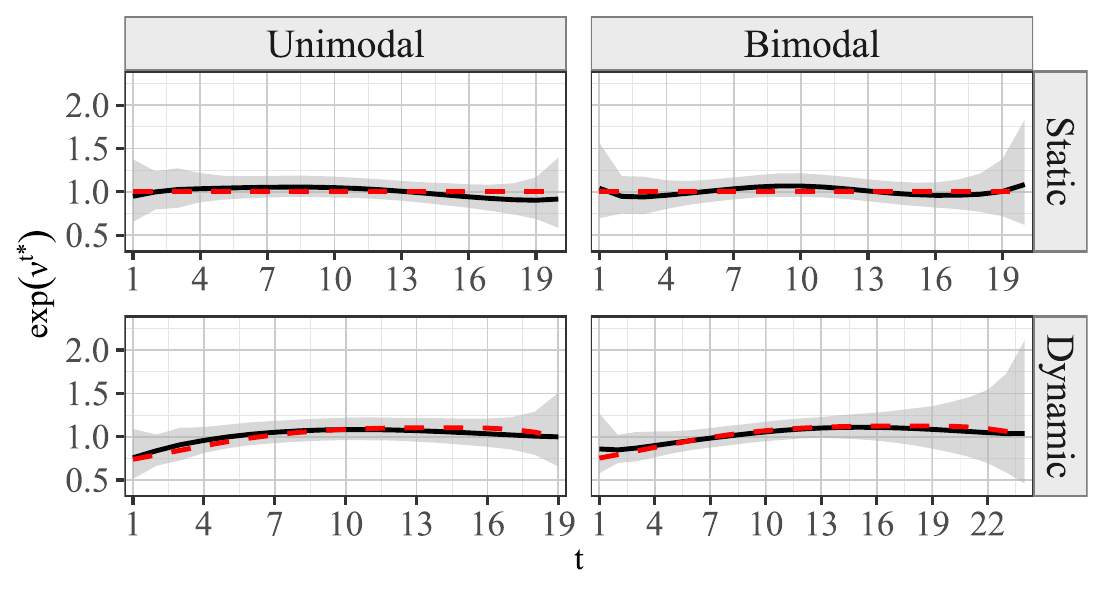}
    \caption{ Estimates of the posterior average rater practice effect $\exp(\nu^{t*})$ across the sequence of subjects ($t=1,\dots,T$), under the \gls{degmfr} Model for different generative scenarios. The dashed red lines indicate the true latent trend; the black solid line and the shaded grey area indicate, respectively, the pointwise posterior mean and $95\%$ quantile-based credible intervals.}
    \label{P-Spline_sim}
\end{figure}

\section{Health Care Professional Assessment}\label{application}
The Objective Structured Clinical Examinations (OSCEs) represent a widely adopted standardized approach for evaluating the clinical skills, knowledge, and professional attitudes of medical students \citepApp{Uto_24}. In this setting, a small panel of expert raters independently assesses a cohort of students on multiple structured items, making OSCEs a paradigmatic application for many-facet Rasch modeling. At the same time, the limited number of raters and the structured nature of the evaluation process make these data an informative benchmark for assessing the behavior of our framework under a small-$R$, large-$I$ regime. Prior analysis of similar data suggests that student ability distributions may be multimodal or skewed \citepApp{Uto_24}, motivating a nonparametric mixture prior over the latent trait.

The dataset, made available by \citeApp{Uto_24} at \url{https://datadryad.org/stash/dataset/doi:10.5061/dryad.tmpg4f56q}, 
consists of ratings collected from OSCEs at Tokyo Medical and Dental University in 2021. The evaluation targeted sixth-year dental students, whose performance during recorded medical interview videos was scored by a panel of five expert raters ($R = 5$) on $I = 30$ structured items using a 4-point ordinal scale ($K = 3$). Evaluation items in the OSCE scoring rubric are reported in Table~\ref{Tab_OSCE_items} of the Appendix~\ref{Additional_fig_tab}. The total sample comprises $J = 30$ subjects. Among the five raters, two assessed all 30 subjects, while the remaining three each evaluated a non-overlapping subset of 10 subjects. All raters worked independently, without communication or consultation with one another. This yields a sparse, unbalanced rater--subject design with a total of  $5 \times 30 = 150$ subject--rater pairs, of which 90 
are observed. As the evaluation order of subjects within each rater is not explicitly recorded, we treat the sequential index in the original dataset as a proxy for the order in which subjects were assessed. 
The results concerning dynamic effects should, however, be interpreted with this limitation in mind.

We compare six nested models using the Watanabe--Akaike information criterion (WAIC): the \gls{gmfr}, the \gls{gmfr} 
with a \gls{dp} mixture prior over the latent trait distribution (G-MFR-DP), the \gls{egmfr}, the \gls{degmfr}, and the DEG-MFR models with a \gls{pyp} mixture over the latent trait distribution with discount parameter equal, respectively, to $d=0.3$ and $d=0.6$. WAIC provides a fully Bayesian approach to estimating the out-of-sample expectation, relying on the computed log-pointwise posterior predictive density and a penalty term on the effective number of parameters to prevent overfitting \citepApp{Gelman2014}.

\begin{table}
\centering
        \begin{tabular}{l c c }
           \toprule
          Fitted Model & $WAIC$ & $\Delta WAIC$ \\
            \midrule  
         DEG-MFR PYP Model $(d=0.3)$   &    2262.126     &    -        \\
         DEG-MFR PYP  Model $(d=0.6)$  &    2268.360     &   -6.234          \\
         DEG-MFR DP  Model             &    2269.313     &   -7.187          \\
         EG-MFR  DP  Model             &    2293.549     &   -31.423      \\
         G-MFR   DP Model              &    3351.643     &   -1089.517                \\
         G-MFR    Model                &    3357.317     &   -1095.191    \\
       
 \bottomrule
        \end{tabular}
\caption[]{\small The Watanabe--Akaike information criterion (WAIC) is reported for each of the fitted models: Generalized Many-Facet Rasch model (G-MFR), the Generalized Many-Facet Rasch model with a \gls{dp} mixture prior over the latent trait distribution (G-MFRM DP), the Extended Generalized Many-Facet Rasch model (EG-MFR DP), the Dynamic Extended Generalized Many-Facet Rasch model (DEG-MFR DP), and the DEG-MFR models with a \gls{pyp} mixture over the latent trait distribution (DEG-MFR PYP) Model with discount parameter equal, respectively, to $d=0.3$ and $d=0.6$. We report the pairwise WAIC difference ($\Delta WAIC$) between the best-fitting model and the others.}
\label{Tab_WAIC_OSCE}
\end{table}

No convergence or mixing issues emerged from graphical inspections of the MCMC chains and associated diagnostics from the \texttt{coda} package \citepApp{CODA}; further details and examples of trace plots are given in Section 4 of the Supplementary Materials. Table \ref{Tab_WAIC_OSCE} shows the WAIC indices for each fitted model and indicates that the DEG-MFR PYP model with a discount parameter $d=0.3$ best fits the data and outperforms the others in predicting out-of-sample ratings. Based on the model comparison procedure, we focus on the results from the DEG-MFR PYP with $d=0.3$. \\
Table~\ref{Tab_posteriors_OSCE} in Appendix~\ref{Additional_fig_tab} reports the Expected A Posteriori (EAP) and 95\% credible intervals (95\% CI) for each subject latent trait $\theta_j$, the rater-specific discrimination parameter $\xi_r$, the variance of the local biases $r$, the rater's anchoring bias $\rho_r$, the rater-specific category step parameters $\{\tau_{rk}\}_{k=1}^{K}$, and item discrimination parameters $\lambda_i$. Section 5 of the Supplementary Materials reports the posterior summary of the other parameters (e.g., the rater local biases $\beta_{ir}$), and Figure \ref{Density_P-Spline_app} displays the posterior estimates of the subject latent trait density and the rater practice effect curve.
The 95\% credible intervals for all estimated anchoring effects $\rho_r$ include zero (see Figure ~\ref{Wright_map}), consistent with the practice effect curve in Figure~\ref{Density_P-Spline_app}. The estimated latent trait density is unimodal, but more concentrated around its mode than the standard normal, suggesting relatively limited heterogeneity in the latent clinical-performance construct measured by these items.

The posterior point estimate of the number of clusters in the sample suggested by the VI loss function is $\hat{M_N}=2$ (see Figure~\ref{PSM_real_data_set} in Appendix \ref{Additional_fig_tab}); it assigned subject 8 to a different cluster than the other subjects. This result is consistent with the results reported in \citeApp{Uto_24}, although our method provides a model-based detection of this systematic difference in subjects. 

Table~\ref{post_summary_LIC_OSCE} reports the posterior estimates of the pairwise LIC between rater 1 and the other raters. Latent consensus is moderate to high, and all 95\% credible intervals exclude zero. Consensus is largest between raters 1 and 5 ($LIC_{1,5}=0.910$ when both rate the subject first), the two most discriminating raters ($\xi_1=0.426$, $\xi_5=0.250$). In contrast, rater 4 shows the lowest consensus with rater 1 ($LIC_{1,4}=0.502$), consistent with having the lowest discrimination ($\xi_4=-0.339$) and the largest variance of local biases ($\omega_4^2=1.018$), nearly three times larger than any other rater. Because the anchoring effects are small and their credible intervals include zero, consensus changes little when raters share the preceding subject (e.g., from $0.910$ to $0.897$ for raters 1 and 5).

Figure~\ref{ICC_1} and Figure~\ref{ICC_1_bis} in Appendix \ref{Additional_fig_tab} show rater-specific global Item Characteristic Curves (ICC) across subjects. This is the ICC of a rater controlling for the subject-item interactions and the dynamic rater effects, which enables the comparison of raters on their average response styles across subjects.Figure~\ref{Wright_map} demonstrates that rater severity varies by rater for the same item. For item 17 (\textit{confirmation of necessary information for each disease}), Raters 1 and 5 clearly rate more strictly than the others, with larger local difficulty shifts for this item than the other raters. 
This pattern is also visible in Figure~\ref{ICC_1}: Raters 1 and 5 show much steeper, almost step-like category transitions for item 17, with little use of the second and fourth category, suggesting a more dichotomous response pattern 
for this item. This pattern is specific to item 17 and does not occur for Rater 5 on the other items in Figure~\ref{ICC_1_bis}. Figure~\ref{ICC_H} in Appendix \ref{Additional_fig_tab} shows the hierarchical changes in ICCs for item 30 (\textit{Overall rating}) after considering the local difficulty of subject 5 for this item and rater 2's effects.  


\begin{table}
\centering
\small
\begin{tabular}{l c c c c c c}
\toprule
 & Subject $j$ & $(t_r(j),\,t_{r'}(j))$ & Preceding subjects & $\rho_r\rho_{r'}$ & EAP & 95\% CI \\
\midrule
\multicolumn{7}{@{}l}{\textit{No subject precedes $j$ for either rater}} \\
  $LIC_{1,2}$ & 1 & (1, 1) & -- & 0.018 & 0.722 & (0.437, 0.934) \\
  $LIC_{1,5}$ & 1 & (1, 1) & -- & 0.123 & 0.910 & (0.775, 0.979) \\
\midrule
\multicolumn{7}{@{}l}{\textit{Both raters rated the same subject immediately before $j$}} \\
  $LIC_{1,2}$ & 2 & (2, 2) & (1, 1) & 0.018 & 0.645 & (0.313, 0.893) \\
  $LIC_{1,5}$ & 2 & (2, 2) & (1, 1) & 0.123 & 0.897 & (0.754, 0.975) \\
  $LIC_{1,3}$ & 12 & (12, 2) & (11, 11) & 0.119 & 0.663 & (0.378, 0.887) \\
  $LIC_{1,4}$ & 22 & (22, 2) & (21, 21) & 0.146 & 0.502 & (0.216, 0.778) \\
\bottomrule
\end{tabular}
\caption{\small Expected A Posteriori (EAP) and 95\% credible intervals of the pairwise latent inter-rater consensus $LIC_{rr'}$ (Eq.~(\ref{ICC})) between rater 1 and the other raters from OSCE data, under three rating configurations of a common subject $j$: no subject precedes $j$ for either rater ($\tilde\rho_r=\tilde\rho_{r'}=0$); both raters rated the same subject immediately before $j$ ($s_{rr'}(j)=1$); the raters rated different subjects immediately before $j$ ($s_{rr'}(j)=0$). Positions $(t_r(j),t_{r'}(j))$ and preceding subjects are reported for $r=1$ and $r'$, respectively; $\rho_r\rho_{r'}$ is the EAP of the product of the anchoring effects.}
\label{post_summary_LIC_OSCE}
\end{table}

\begin{figure}
    \centering
     \includegraphics[scale=0.70]{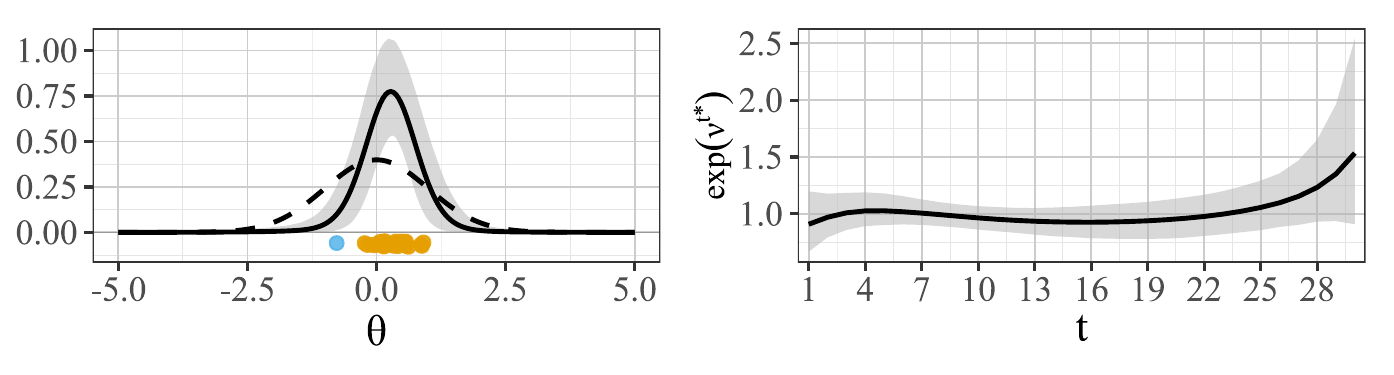}
    \caption{\small Left panel: graphical representation of the estimated latent trait density; the black solid line indicates the Expected A Posteriori, the shaded grey area indicates the 95\% quantile-based credible interval, and the black dashed line indicates the standard normal reference density. Right panel: graphical representation of the estimated rater practice effect $\text{exp}(\nu^{t*})$; the black solid line and the shaded grey area indicate, respectively, the  Expected A Posteriori and the 95\% quantile-based credible interval of the P-Spline. }
    \label{Density_P-Spline_app}
\end{figure}

\begin{figure}
    \centering
    \subfigure{\includegraphics[scale= 0.80]{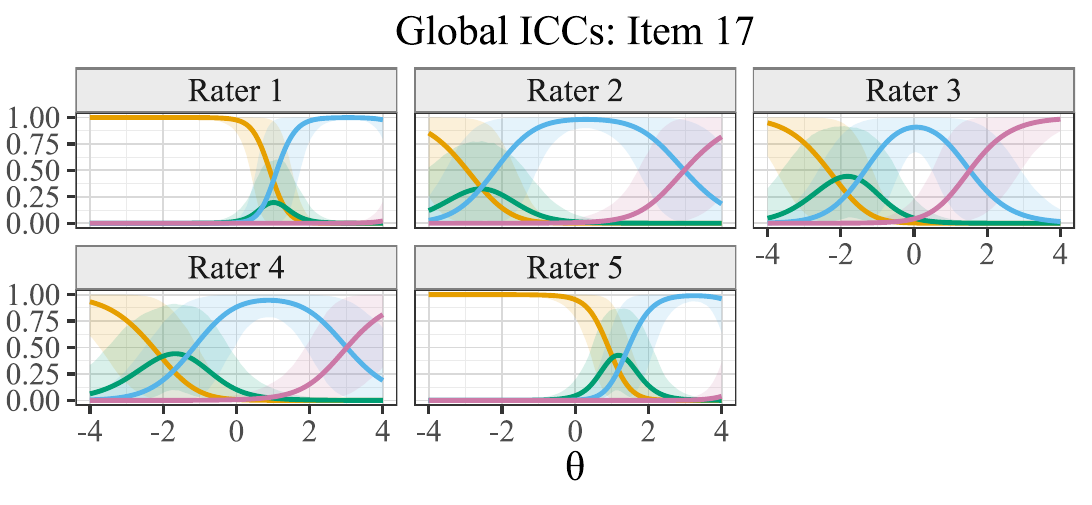}} 
    \caption{\small Global Item-Rater Characteristic Curves for item 17 (\textit{Confirmation of necessary information for each disease}) across subjects for the OSCE data, controlling for rater dynamic effects. The orange, green, aquamarine, and violet lines represent the four ordered response categories; the corresponding solid lines and shaded areas indicate, respectively, the Expected A Posteriori and the 95\% quantile-based credible interval for the category response curves.}
    \label{ICC_1}
\end{figure}

\begin{figure}
    \centering
    \includegraphics[scale=0.6]{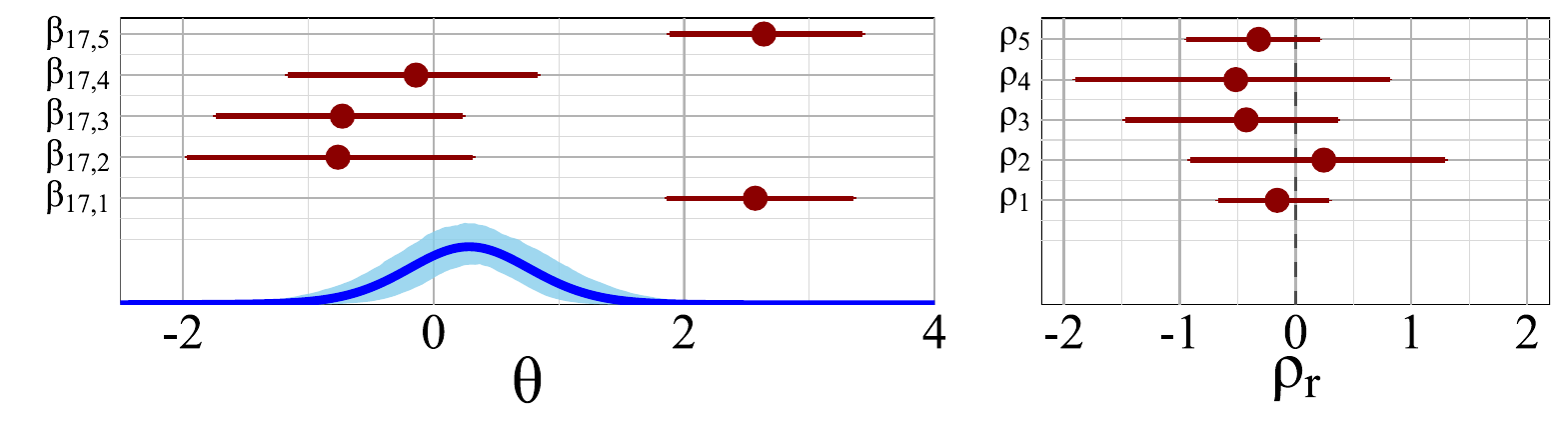}
    \caption{\small Left side: Wright map of raters' local difficulty shifts $\beta_{ir}$ for item 17 from OSCE data. Right panel: posterior anchoring effects $\rho_r$ for the same raters; the dashed vertical line marks $\rho_r=0$. For each parameter, the Expected A Posteriori value and the 95\% quantile-based credible interval are indicated by the red point and the bar, respectively. The Expected A Posteriori value and the 95\% quantile-based credible interval of the latent trait density are indicated, respectively, by the blue solid line and the blue shaded area.}
    \label{Wright_map}
\end{figure}

\section{Proofs}\label{A_P}
\subsection{Proof of Proposition 1}
\begin{proof} 
Let $\zeta_{jr}=\theta_j - \eta_{r}^{t_r(j)}$ and $\zeta_{jr'}=\theta_j - \eta_{r'}^{t_{r'}(j)}$, and assume mutual independence between the terms of the respective decompositions:
\[
Var(\zeta_{jr}|G, \omega^2_r,\rho_r) = \sigma^2_{\theta} + \omega^2_r + \tilde\rho^{\,2}_r \sigma^2_{\theta}, \quad \quad \quad Var(\zeta_{jr'}|G, \omega^2_{r'}, \rho_{r'}) = \sigma^2_{\theta} +  \omega^2_{r'} + \tilde\rho^{\,2}_{r'}\sigma^2_{\theta},
\]
after noting that $Var(\eta_r^{b_r(t)}|G, \omega^2_r,\rho_r)=E(Var(\eta_r^{b_r(t)}|G, \omega^2_r,\rho_r)) + Var(E(\eta_r^{b_r(t)}|G, \omega^2_r,\rho_r))=  \omega^2_r + \tilde\rho^{\,2}_r \sigma^2_{\theta}$. 

Moreover, since $\theta_j$ is independent of the preceding subjects' latent traits and the residuals of $\eta_{r}^{t_r(j)}$ and $\eta_{r'}^{t_{r'}(j)}$ are independent, the two local biases covary only through a shared preceding subject, i.e., $Cov(\eta_{r}^{t_r(j)},\eta_{r'}^{t_{r'}(j)})=\tilde\rho_r\tilde\rho_{r'}\sigma^2_\theta\, s_{rr'}(j)$:
\[
Cov(\zeta_{jr},\zeta_{jr'}|G, \boldsymbol{\rho, \omega}^2) = \sigma^2_\theta + Cov(\eta_{r}^{t_r(j)},\eta_{r'}^{t_{r'}(j)}) = \sigma^2_\theta\left(1 + \tilde\rho_r \, \tilde\rho_{r'}\, s_{rr'}(j)\right),
\]
Follows that: 
\[
Corr(\zeta_{jr},\zeta_{jr'}|G,\boldsymbol{\rho, \omega}^2) = \frac{Cov(\zeta_{jr},\zeta_{jr'}|G,\rho_r,\rho_{r'})}{\sqrt{Var(\zeta_{jr}|G,\rho_r)}\sqrt{Var(\zeta_{jr'}|G,\rho_{r'})}}= \frac{\sigma^2_\theta\left(1 + \tilde\rho_r \, \tilde\rho_{r'}\, s_{rr'}(j)\right)}{\sqrt{\sigma^2_\theta(1+\tilde\rho^{\,2}_r)+\omega_r^2}\sqrt{\sigma^2_\theta(1+\tilde\rho^{\,2}_{r'})+\omega_{r'}^2}}
\]

This completes the proof of Proposition \ref{prop_1}.
\\
Given the results on the posterior convergence of the density of $G$ \citepApp{Ghosal_van_der_Vaart_2017} and noting that, under our formulation, the posterior moments of $G$ are uniformly integrable, $\sigma_\theta^2$ is posterior consistent. We further note that also $\rho^2_r$ and $\omega_r^2$ are posterior consistent by the Bernstein--von Mises theorem. As a direct consequence, $LIC_{rr'}$ is also posterior consistent by the Continuous Mapping Theorem.

\end{proof}

\subsection{Model identifiability}\label{identifiability}

As for other \gls{irt}-type models \citepApp{Uto_24, Paganin2023, Uto_2020, Uto_2016}, we enforce identifiability by restricting the parameter space. For each rater and each item, we fix the first category step parameter to zero, that is $\tau_{r0}=0$, for $r\in \mathcal{R}$, and $\gamma_{i0}=0$, for $i=1,\dots,I$. Similarly to \citeApp{Uto_24}, we resolve the location indeterminacy of measurement parameters by applying a zero-sum constraint, respectively, to the rater- and item-specific category step parameters, that is $\sum_{k=1}^K \tau_{rk}=0$, for $r \in \mathcal{R}$, and $\sum_{k=1}^K \gamma_{ik}=0$, for $i=1,\dots,I$; and specifying a prior with fixed location parameter for all the rater local difficulty shifts $\beta_{ir}$, $i=1,\dots,I$ and $r \in \mathcal{R}$. Under the \gls{egmfr} model, all the item-specific residual terms $\epsilon_{ij}$, $i=1,\dots,I$, $j \in \mathcal{J}$, and all the raters' subject local biases $\eta_{rj}$, $r \in \mathcal{R}$, $j \in \mathcal{J}_r$, have zero mean by the model.

The \gls{degmfr} model requires additional considerations. For each rater, the mean of $\eta_{rj}$ is fixed to zero for $t=1$, and modelled through the dynamic latent regression specified in Eq.~(\ref{CL_path}) otherwise. For $t=2,\dots, J_r$, $\eta_{jr}$ is constrained by the effect of the latent trait of the subject $t-1$ for each rater. We resolve the scale indeterminacy of measurement parameters by constraint $\sum_{r=1}^R \xi_{r}=0$, $\sum_{i=1}^I \lambda_{i}=0$, and $\sum_{t=1}^T n_t\, \nu^{t*}=0$ (where $n_t$ is the number of subjects assigned to the $t$-th position in the evaluation process across raters), and fix the prior variance for all the rater local difficulty shifts $\beta_{ir}$, $i=1,\dots, I$  and anchoring bias $\rho_r$ for $r \in \mathcal{R}$.


The constraints on the measurement parameters detailed above fix the location and scale of the latent trait, allowing the nonparametric prior on $G$ to freely estimate the shape of the latent trait distribution. Posterior consistency of $G$  follows from standard results for \gls{dp}/\gls{pyp} mixtures with Gaussian kernels \citepApp{Ghosal1999}. Related but distinct notions of Bayesian identifiability in semiparametric \gls{irt} models are discussed in \citeApp{SanMartin2011}.

Under the \gls{degmfr} model $\prod_{j=1}^N (K+1)^{|\mathcal{R}_j|\times I}$ different response patterns might be observed, and $\prod_{j=1}^N K^{|\mathcal{R}_j|\times I}-c$ independent relations remain to identify $F$ with $c$ being the finite number of constraints over the measurement parameters; as $N\rightarrow\infty$ the model fully identify $F$, which is an infinite-dimensional parameter. In finite samples, a sufficiently large number of subjects $N$, items $I$, and raters per subject $|\mathcal{R}_j|$ increases the information available about $G$, making it possible to approximate the latent trait distribution reasonably well.

\section{Comparison of Naive and Model-based Latent Trait Estimates}\label{sec:naive}
In rater-mediated assessment, the most immediate approach to scoring a subject is the sample mean or the sum of the ratings across multiple items assigned to that subject across independent raters, namely 
\[
\hat\theta_{\text{Naive}, j} = \frac{1}{|\mathcal{R}_j|\, \cdot I} \sum_{r \in \mathcal{R}_j}
\sum_{i=1}^{I} Y_{jri}.
\]
This estimator implicitly treats raters as exchangeable and interval-scaled, ignoring rater severity, item difficulty, and any dynamic (e.g., time- or sequence-dependent) sources of measurement error. In the presence of pronounced rater effects, this scoring method might fail to recover the subject's latent trait and provide accurate estimates \citepApp{Mignemi_Chen_Moustaki_2025, Uto_24}. We provide some examples of the cost of this simplification and compare $\hat\theta_{\text{Naive}}$ against the model-based posterior mean $\hat\theta_{\text{DEG}}$ obtained under the \gls{degmfr} model
, which explicitly accounts for rater and item facets together with a time-varying multiplicative effect. In the example from the simulation study, both estimators were benchmarked against the generating latent trait
$\theta$; for the two real-data applications, in the absence of a ground-truth $\theta$, we assessed agreement between $\hat\theta_{\text{Naive}}$ and $\hat\theta_{\text{DEG}}$ directly. Because such scores are routinely used to inform high-stakes and selection decisions, where what matters is the ordinal position of a subject relative to others rather than the metric value of the score itself, we quantified agreement via the Spearman rank correlation coefficient $\rho$, which is invariant to monotonic transformations and therefore directly targets rank preservation rather than linear concordance. We note that only in this Section do we refer to $\rho$ as the rank correlation coefficient rather than the rater-specific anchoring effect; for ease of communication, we wanted to preserve the usual Greek letter $\rho$ to denote the Spearman correlation. \\
Figure \ref{ScatterPlot_1} compares the latent score recovery performance of the two methods in two datasets generated under the Static-Unimodal and Dynamic-Bimodal scenarios, respectively. Considering the dataset in which rater dynamic effects are not present and the subject latent trait distribution is unimodal (i.e., \textit{Static-Unimodal Scenario}), it emerges that, although the scale of the estimates provided by the Naive method is rather different from the true one, the rank of subject scores provided by the two methods is comparable, namely $\rho=0.98$.  On the contrary, considering the  \textit{Dynamic-Bimodal Scenario} which is a far more complex setting, our proposal yields a more accurate ranking of subject scores ($\rho=0.95$) than the naive method ($\rho=0.87$). \\
Figure~\ref{ScatterPlot_2} shows the Spearman rank correlation between estimates from the naive and the \gls{degmfr} estimators for the OSCE and the Odor data sets analyzed in Appendix~\ref{application} and Section~\ref{Odor} of the main text, respectively. 

The subject rankings recovered by the two estimators, although comparable ($\rho = 0.90$), are different across the two real data sets, as shown by Figure~\ref{rank_shift}. 

\begin{figure}
    \centering
    \includegraphics[scale=0.68]{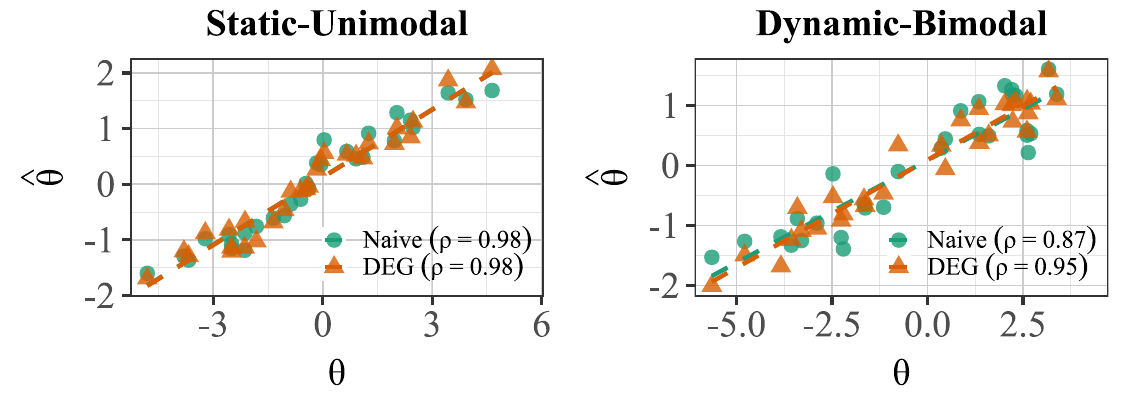}
    \caption{Comparison between Naive average score estimates and DEG model-based estimates of subject latent trait $\theta$ in two datasets from different scenarios; and respective Spearman rank correlation between the true values and the standardized estimates (with zero mean and unit variance) of the two methods for comparison reasons.}
    \label{ScatterPlot_1}
\end{figure}

\begin{figure}
    \centering
    \includegraphics[scale=0.7]{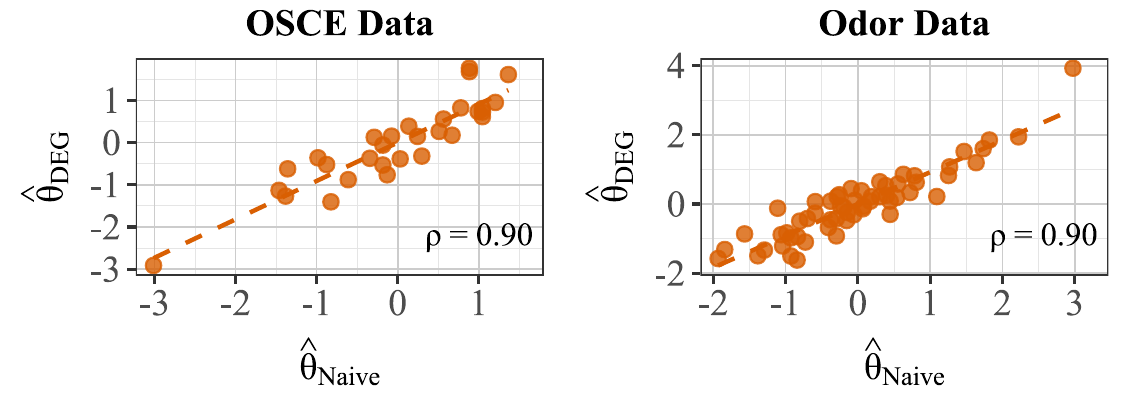}
    \caption{Spearman rank correlation between the Naive average score estimates and DEG model-based estimates of subject latent trait $\theta$ in the OSCE and Olfactory Assessment data sets.}
    \label{ScatterPlot_2}
\end{figure}

\begin{figure}
    \centering
    \includegraphics[scale=0.88]{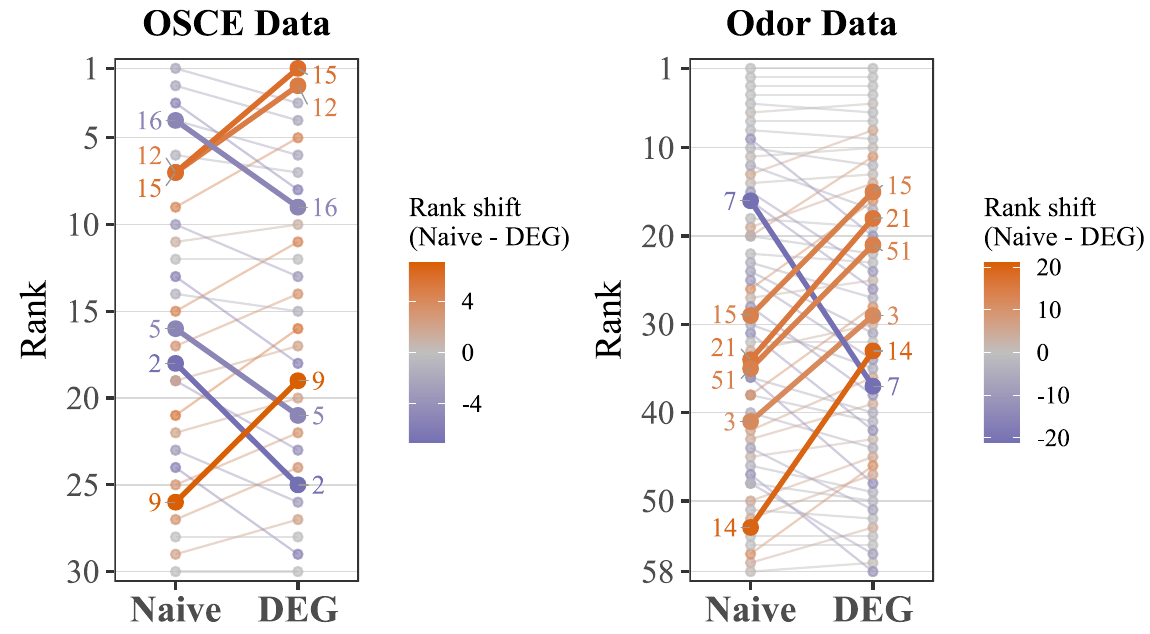}
    \caption{Different ranks implied, respectively, by the Naive average score estimates and DEG model-based estimates of subject latent trait $\theta$ in the OSCE and Olfactory Assessment data sets.}
    \label{rank_shift}
\end{figure}

\section{Additional Figures and Tables}\label{Additional_fig_tab}
We report in this Section additional Figures and Tables related to the \gls{degmfr} model formulation and the real data applications. 

\begin{figure}
    \centering
    \includegraphics[scale=0.45]{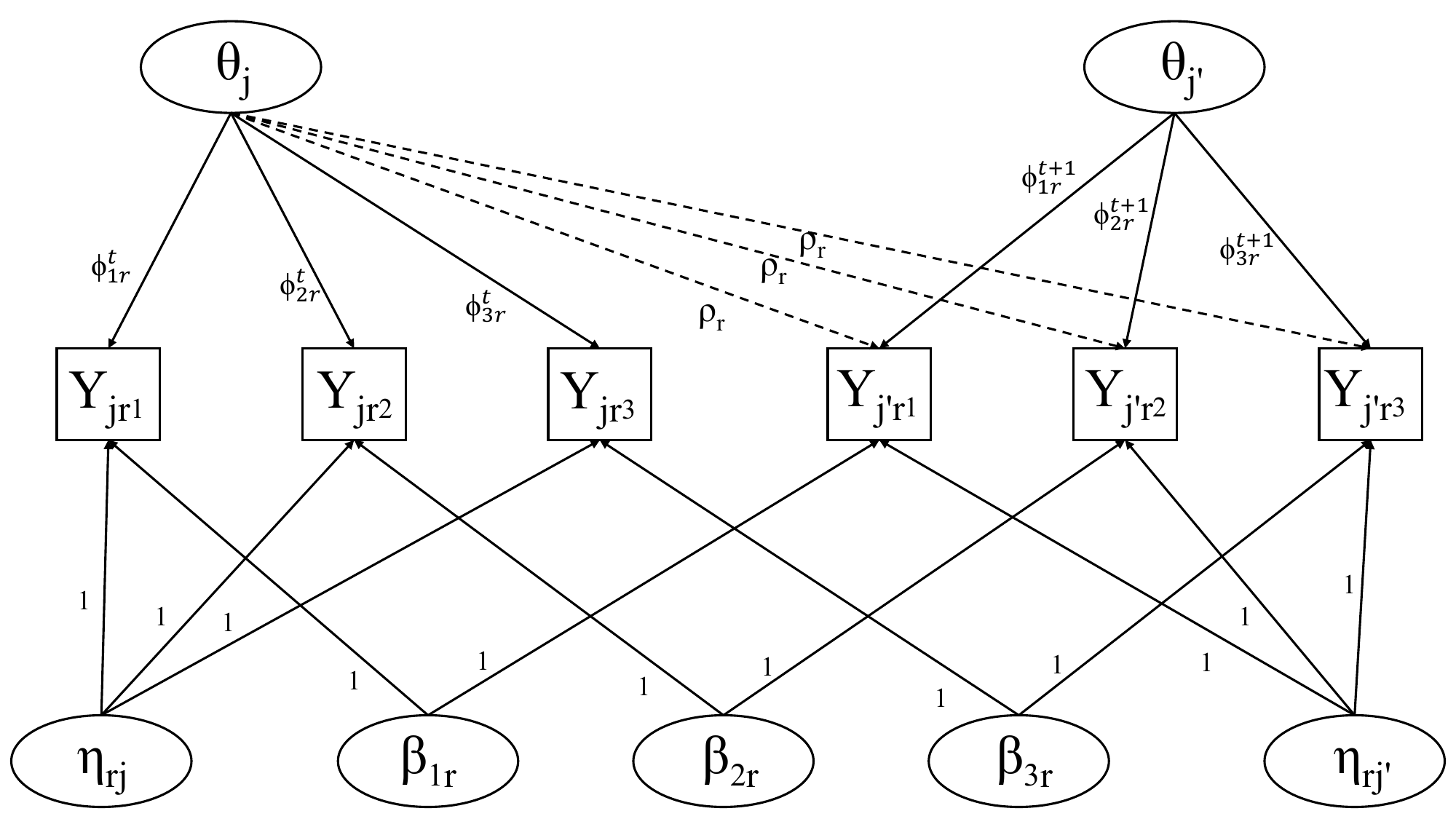}
    \caption{\small  Graphical representation of \gls{degmfr} model for two consecutive subjects $j,j'\in\mathcal{J}_r$ scored on three items by rater $r \in \mathcal{R}$. Here $log(\phi_{1r}^t)=\lambda_1+\xi_r+\nu^t$ and $log(\phi_{1r}^{t+1})=\lambda_1+\xi_r+\nu^{t+1}$; the dotted lines indicate the rater-specific cross-lagged effect $\rho_r$. Note that the solid and dotted lines do not imply linear effects. The other item-specific parameters and the category step parameters are omitted for graphical reasons.}

    \label{Example_path}
\end{figure}

\newcommand{\cmark}{\ding{51}}   
\newcommand{\xmark}{\ding{55}}   
\newcommand{\pmark}{$\circ$}     

\begin{table}[p]
\centering
\begin{adjustbox}{max width=\textwidth, max totalheight=0.95\textheight, keepaspectratio}
\begin{minipage}{\textwidth}
\singlespacing
\footnotesize
\begin{threeparttable}
\caption{Subject, item, and rater effects accounted for by the DEG-MFR model and by existing multi-rater IRT models.}
\label{TAB_comp}
\begin{tabularx}{\textwidth}{@{}>{\raggedright\arraybackslash}X
  >{\centering\arraybackslash}p{2.3cm}
  >{\centering\arraybackslash}p{2.3cm}
  >{\centering\arraybackslash}p{2.3cm}
  >{\centering\arraybackslash}p{2.3cm}@{}}
\toprule
Effect & DEG-MFR & HRTM\tnote{a} & MMR\tnote{b} & G-MFR\tnote{c} \\
 & (this paper) & Molenaar et al.\ (2021) & Verhelst \& Verstralen (2001) & Uto \& Ueno (2020) \\
\midrule
\multicolumn{5}{@{}l}{\textit{Subject}} \\
Latent trait
  & \cmark\ ($\theta_j$) & \cmark & \cmark & \cmark \\
Non-normal latent trait distribution
  & \cmark\ (DP/PYP mixture) & \xmark & \xmark & \xmark \\
Model-based subject clustering
  & \cmark & \xmark & \xmark & \xmark \\
\addlinespace
\multicolumn{5}{@{}l}{\textit{Subject $\times$ item}} \\
Subject--item interaction (quality of response)
  & \cmark\ ($\epsilon_{ij}$) & \cmark\ (continuous ideal score) & \cmark\ (continuous quality of response) & \xmark \\
Item-specific variance of subject--item effects
  & \cmark\ ($\varphi^2_i$) & \pmark\tnote{d} & \xmark\ (common variance) & \xmark \\
\addlinespace
\multicolumn{5}{@{}l}{\textit{Item}} \\
Item difficulty / category steps
  & \cmark\ ($\gamma_{ik}$) & \cmark\tnote{e} & \cmark & \pmark\ (difficulty only)\tnote{f} \\
Item discrimination
  & \cmark\ ($\lambda_i$) & \cmark & \pmark\tnote{g} & \cmark\ ($\alpha_i$) \\
\addlinespace
\multicolumn{5}{@{}l}{\textit{Rater}} \\
Rater severity
  & \cmark\tnote{h} & \cmark & \cmark & \cmark\ ($\beta_r$) \\
Rater $\times$ item severity
  & \cmark\ ($\beta_{ir}$) & \cmark & \xmark & \xmark \\
Rater $\times$ subject bias
  & \cmark\ ($\eta_{rj}$) & \xmark & \xmark & \xmark \\
Rater variability (variance of rater-specific errors)
  & \cmark\ ($\omega^2_r$) & \cmark & \xmark & \xmark \\
Rater discrimination / consistency (slope)
  & \cmark\ ($\xi_r$) & \pmark\tnote{i} & \xmark & \cmark\ ($\alpha_r$) \\
Joint item $\times$ rater discrimination
  & \cmark\ ($\log\phi_{ir}=\lambda_i+\xi_r$) & \pmark\tnote{i} & \xmark & \cmark\ ($\alpha_i\alpha_r$) \\
Rater-specific category thresholds (response style)
  & \cmark\ ($\tau_{rk}$) & \cmark\tnote{e} & \xmark & \cmark\ ($d_{rm}$) \\
\addlinespace
\multicolumn{5}{@{}l}{\textit{Dependence and rating dynamics}} \\
Dependence among multiple ratings of the same response\tnote{j}
  & \cmark & \cmark & \cmark & \xmark \\
Sequential anchoring on the previously rated subject
  & \cmark\ ($\rho_r$) & \xmark & \xmark & \xmark \\
Practice / fatigue effect on discrimination
  & \cmark\ ($\nu^t$, P-spline) & \xmark & \xmark & \xmark \\
\addlinespace
\multicolumn{5}{@{}l}{\textit{Inter-rater agreement}} \\
Model-based latent inter-rater consensus
  & \cmark\ ($\mathrm{LIC}_{rr'}$) & \xmark & \xmark & \xmark \\
\bottomrule
\end{tabularx}
\begin{tablenotes}[flushleft]
\footnotesize
\item \textit{Note.} \cmark\ = included; \pmark\ = partially included or available as an optional extension; \xmark\ = not included.
\item[a] Hierarchical Rater Thresholds Model, used here as the representative hierarchical rater model with continuous ideal scores. The original HRM \citepApp{Patz2002} uses discrete ideal scores and a signal-detection rater level with rater bias $\phi_r$ and variability $\psi_r$, without rater-specific thresholds.
\item[b] IRT Model for Multiple Raters. The original formulation is for dichotomous ratings; the polytomous version follows \citeApp{Mariano2007covariates}.
\item[c] Generalized Many-Facet Rasch model. The later G-MFR of \citeApp{Uto_24} additionally includes rater $\times$ item severity and item-specific category steps, and is nested in the EG-MFR when $\epsilon_{ij}=0$ and $\eta_{rj}=0$.
\item[d] Residual variance of the continuous ideal scores in the second-order factor formulation; whether it is item-specific depends on the identification constraints.
\item[e] Thresholds $\tau_{irc}$ are item-, rater-, and category-specific, so item step and rater threshold effects are not separated.
\item[f] Task difficulty $\beta_i$ only; category steps are rater-specific ($d_{rm}$), not item-specific.
\item[g] Not included in the original model; an item discrimination parameter may be added \citeApp{Mariano2007covariates}.
\item[h] No separate global severity parameter; the severity of rater $r$ on item $i$ is given by the local difficulty shift $\beta_{ir}$.
\item[i] Available as an extension with item-specific rater discrimination $\lambda_{ir}$ (or rater-specific $\lambda_r$).
\item[j] Ratings are conditionally independent given the subject--item quality of response rather than given $\theta$, addressing the information-accumulation problem of standard MFR models \citeApp{mariano2002critique}.
\end{tablenotes}
\end{threeparttable}
\end{minipage}
\end{adjustbox}
\end{table}

\begin{table}[ht]
\centering
\small
\begin{tabular}{@{}c p{12cm}@{}}
\hline
Item & Evaluation Item \\
\hline
1  & Greeting \\
2  & Self-introduction \\
3  & Patient identification \\
4  & Medical interview explanation and consent \\
5  & Confirmation of chief complaint \\
6  & Confirmation of affected site(s) \\
7  & Confirmation of when the complaint started \\
8  & Confirmation of current symptoms (spontaneous pain, swelling, redness, agitation, purulent discharge) \\
9  & Confirmation of evoked pain (cold pain, warm pain, occlusion pain) \\
10 & Confirmation of the nature and severity of the evoked pain \\
11 & Confirmation of history of hospital visits \\
12 & Confirmation of whether there is any medication for the chief complaint \\
13 & Confirmation of consultation behavior \\
14 & Confirmation of symptom history up to the present day \\
15 & Confirmation of history of anesthesia and tooth extraction and any abnormalities during those procedures \\
16 & Confirmation of whether any systemic disease is present \\
17 & Confirmation of necessary information for each disease (whether potential risks for dental treatment could be identified) \\
18 & Confirmation of allergies (food, drugs) \\
19 & Summarization and confirmation of the chief complaint \\
20 & Confirmation of whether there is anything the examinee has forgotten to say \\
21 & Cleanliness during examination \\
22 & Efforts to understand the patient's situation (whether the examinee has confirmed the patient's wishes, status of hospital visits, financial situation, etc.) \\
23 & Smooth flow of conversation \\
24 & Appropriateness of the main points of the conversation (whether the judgment on the main points of the conversation was appropriate according to the patient's situation) \\
25 & Eye contact, pauses, listening \\
26 & Appropriateness of open questions \\
27 & Respectful language, attention to terminology \\
28 & Interaction with the patient \\
29 & Listening to the patient's concerns \\
30 & Overall rating \\
\hline
\end{tabular}
\caption{Evaluation items in the OSCE scoring rubric, reproduced from \protect\citeApp{Uto_24}}
\label{Tab_OSCE_items}
\end{table}

\begin{table}
\centering
\begin{tabular}{l c c | l c c | l c c}
\hline
 & EAP & 95\% CI & & EAP & 95\% CI & & EAP & 95\% CI \\
\hline
  $\theta_{1}$ & -0.040 & (-0.528,\;0.479) & $\xi_{1}$ & 0.426 & (0.201,\;0.667) & $\lambda_{1}$ & 0.024 & (-0.803,\;0.926) \\
  $\theta_{2}$ & 0.001 & (-0.488,\;0.505) & $\xi_{2}$ & -0.228 & (-0.487,\;0.030) & $\lambda_{2}$ & 0.521 & (-0.385,\;1.568) \\
  $\theta_{3}$ & 0.330 & (-0.178,\;0.855) & $\xi_{3}$ & -0.108 & (-0.383,\;0.165) & $\lambda_{3}$ & 0.659 & (-0.123,\;1.469) \\
  $\theta_{4}$ & 0.571 & (0.022,\;1.161) & $\xi_{4}$ & -0.339 & (-0.586,\;-0.098) & $\lambda_{4}$ & -0.028 & (-0.724,\;0.814) \\
  $\theta_{5}$ & 0.137 & (-0.443,\;0.721) & $\xi_{5}$ & 0.250 & (0.047,\;0.470) & $\lambda_{5}$ & 0.748 & (0.023,\;1.528) \\
  $\theta_{6}$ & 0.089 & (-0.403,\;0.588) & $\omega^2_{1}$ & 0.030 & (0.004,\;0.084) & $\lambda_{6}$ & -0.125 & (-0.795,\;0.583) \\
  $\theta_{7}$ & 0.476 & (-0.128,\;1.059) & $\omega^2_{2}$ & 0.286 & (0.036,\;0.897) & $\lambda_{7}$ & 0.348 & (-0.455,\;1.294) \\
  $\theta_{8}$ & -0.771 & (-1.417,\;-0.221) & $\omega^2_{3}$ & 0.379 & (0.069,\;1.116) & $\lambda_{8}$ & 0.308 & (-0.289,\;0.933) \\
  $\theta_{9}$ & 0.146 & (-0.423,\;0.746) & $\omega^2_{4}$ & 1.018 & (0.223,\;2.945) & $\lambda_{9}$ & -0.559 & (-1.378,\;0.194) \\
  $\theta_{10}$ & 0.142 & (-0.434,\;0.719) & $\omega^2_{5}$ & 0.033 & (0.005,\;0.096) & $\lambda_{10}$ & -0.172 & (-0.732,\;0.393) \\
  $\theta_{11}$ & 0.328 & (-0.169,\;0.840) & $\rho_{1}$ & -0.193 & (-0.732,\;0.276) & $\lambda_{11}$ & -0.233 & (-0.739,\;0.336) \\
  $\theta_{12}$ & 0.880 & (0.312,\;1.488) & $\rho_{2}$ & 0.215 & (-0.946,\;1.331) & $\lambda_{12}$ & 0.180 & (-0.517,\;1.022) \\
  $\theta_{13}$ & 0.541 & (-0.027,\;1.159) & $\rho_{3}$ & -0.429 & (-1.452,\;0.418) & $\lambda_{13}$ & 0.294 & (-0.438,\;1.171) \\
  $\theta_{14}$ & 0.855 & (0.296,\;1.451) & $\rho_{4}$ & -0.560 & (-2.061,\;0.840) & $\lambda_{14}$ & -0.314 & (-0.852,\;0.196) \\
  $\theta_{15}$ & 0.910 & (0.317,\;1.569) & $\rho_{5}$ & -0.348 & (-0.967,\;0.185) & $\lambda_{15}$ & -0.190 & (-0.867,\;0.502) \\
  $\theta_{16}$ & 0.497 & (-0.126,\;1.172) & $\tau_{1,1}$ & -1.127 & (-1.593,\;-0.647) & $\lambda_{16}$ & -0.610 & (-1.205,\;-0.052) \\
  $\theta_{17}$ & 0.085 & (-0.454,\;0.615) & $\tau_{2,1}$ & -1.758 & (-2.367,\;-1.170) & $\lambda_{17}$ & 0.562 & (-0.142,\;1.488) \\
  $\theta_{18}$ & 0.161 & (-0.371,\;0.729) & $\tau_{3,1}$ & -1.232 & (-1.809,\;-0.663) & $\lambda_{18}$ & -0.296 & (-0.925,\;0.374) \\
  $\theta_{19}$ & -0.131 & (-0.652,\;0.380) & $\tau_{4,1}$ & -1.712 & (-2.386,\;-1.068) & $\lambda_{19}$ & -0.374 & (-1.009,\;0.295) \\
  $\theta_{20}$ & 0.052 & (-0.462,\;0.574) & $\tau_{5,1}$ & -1.322 & (-1.797,\;-0.839) & $\lambda_{20}$ & -0.201 & (-0.839,\;0.519) \\
  $\theta_{21}$ & 0.371 & (-0.172,\;0.921) & $\tau_{1,2}$ & -1.332 & (-1.768,\;-0.940) & $\lambda_{21}$ & -0.050 & (-0.783,\;0.931) \\
  $\theta_{22}$ & 0.565 & (-0.047,\;1.191) & $\tau_{2,2}$ & -1.319 & (-1.826,\;-0.821) & $\lambda_{22}$ & -0.852 & (-1.369,\;-0.343) \\
  $\theta_{23}$ & -0.229 & (-0.792,\;0.333) & $\tau_{3,2}$ & -0.389 & (-0.910,\;0.102) & $\lambda_{23}$ & 0.884 & (0.201,\;1.643) \\
  $\theta_{24}$ & 0.415 & (-0.161,\;0.983) & $\tau_{4,2}$ & -0.832 & (-1.347,\;-0.323) & $\lambda_{24}$ & 0.177 & (-0.609,\;0.909) \\
  $\theta_{25}$ & 0.339 & (-0.223,\;0.920) & $\tau_{5,2}$ & -0.927 & (-1.338,\;-0.533) & $\lambda_{25}$ & 0.037 & (-0.765,\;0.791) \\
  $\theta_{26}$ & 0.254 & (-0.326,\;0.807) & $\tau_{1,3}$ & 2.459 & (1.870,\;3.070) & $\lambda_{26}$ & 0.316 & (-0.483,\;1.347) \\
  $\theta_{27}$ & 0.618 & (0.033,\;1.249) & $\tau_{2,3}$ & 3.077 & (2.347,\;3.811) & $\lambda_{27}$ & -0.509 & (-1.101,\;0.153) \\
  $\theta_{28}$ & 0.321 & (-0.240,\;0.914) & $\tau_{3,3}$ & 1.621 & (0.985,\;2.282) & $\lambda_{28}$ & -0.065 & (-0.719,\;0.581) \\
  $\theta_{29}$ & 0.536 & (-0.056,\;1.143) & $\tau_{4,3}$ & 2.545 & (1.874,\;3.247) & $\lambda_{29}$ & -0.453 & (-1.072,\;0.198) \\
  $\theta_{30}$ & -0.180 & (-0.768,\;0.385) & $\tau_{5,3}$ & 2.249 & (1.673,\;2.859) & $\lambda_{30}$ & -0.029 & (-0.823,\;0.803) \\
\hline
\end{tabular}
\caption{\small Expected A Posteriori (EAP) and 95\% credible interval of selected model parameters (DEG-MFR PYP - $d=0.3$ model) from OSCE data. }
\label{Tab_posteriors_OSCE}
\end{table}

\begin{figure}
    \centering
    \includegraphics[scale= 0.68]{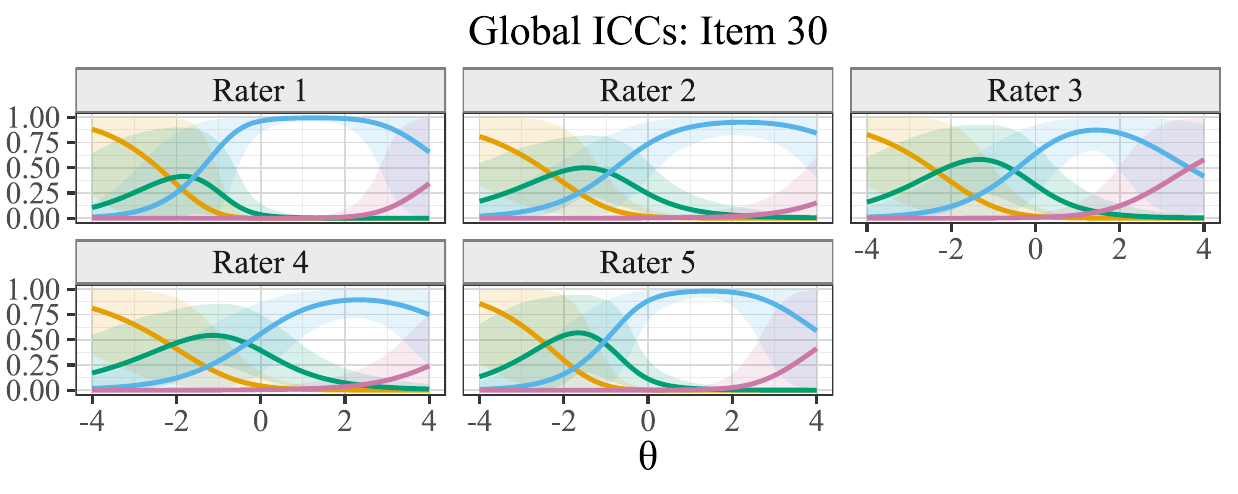}
    \caption{ \small Global Item-Rater Characteristic Curves for item 30 (\textit{Overall rating}) across subjects for the OSCE data, controlling for rater dynamic effects. The orange, green, aquamarine, and violet lines represent the four ordered response categories; the corresponding solid lines and shaded areas indicate, respectively, the Expected A Posteriori and the 95\% quantile-based credible interval for the category response curves.}
    \label{ICC_1_bis}
\end{figure}

\begin{figure}
    \centering
\includegraphics[scale= 0.73]{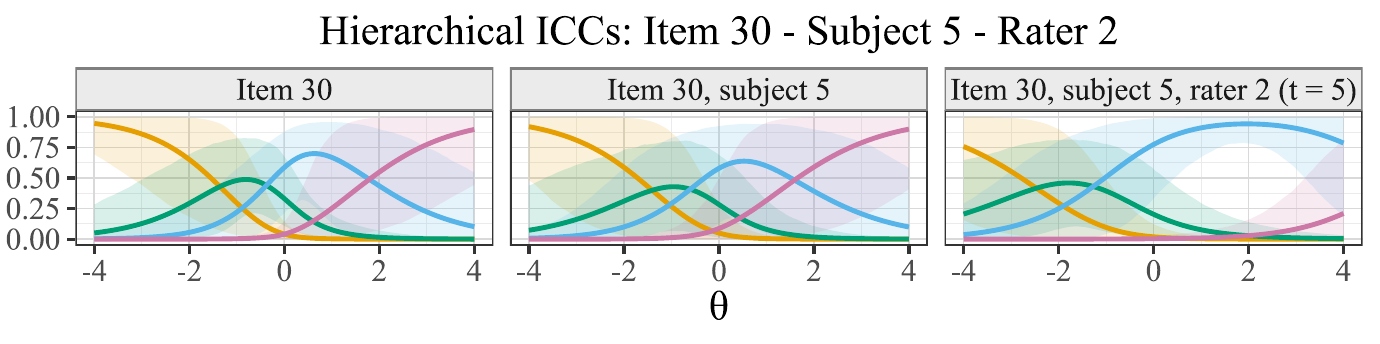}

   \caption{\small Hierarchical Item-Rater Characteristic Curves for item 30 (\textit{Overall rating}) from OSCE data. Left panel: item-level curves, depending only on the item discrimination $\lambda_i$ and the item category step parameters. Center: curves for subject 5, adding the subject-specific item effect $\epsilon_{ij}$. Right panel: curves for subject 5 as rated by rater 2 at position $t$ of the rating sequence, including all rater-specific and dynamic components, i.e., the local difficulty shift $\beta_{ir}$, the rater category step parameters, the local bias $\eta^t_{rj}$ (with the anchoring effect $\rho_r\theta^{t-1}$), and the dynamic joint discrimination $\phi^t_{ir}$. The orange, green, aquamarine, and violet lines represent the four ordered response categories; the corresponding solid lines and shaded areas indicate, respectively, the Expected A Posteriori and the 95\% quantile-based credible interval for the category response curves.}
    \label{ICC_H}
\end{figure}

\begin{table}[ht]
\centering
\begin{tabular}{l c c | l c c | l c c}
\hline
 & EAP & 95\% CI & & EAP & 95\% CI & & EAP & 95\% CI \\
\hline
  $\theta_{1}$ & -0.374 & (-0.816,\;0.012) & $\rho_{3}$ & 0.472 & (-1.085,\;1.909) & $\omega_{1}$ & 0.427 & (0.036,\;1.702) \\
  $\theta_{2}$ & -0.522 & (-0.982,\;-0.097) & $\rho_{6}$ & -0.013 & (-1.184,\;1.151) & $\omega_{2}$ & 0.532 & (0.033,\;2.193) \\
  $\theta_{3}$ & -0.186 & (-0.746,\;0.338) & $\rho_{64}$ & 0.702 & (-0.988,\;2.294) & $\omega_{3}$ & 0.609 & (0.107,\;1.930) \\
  $\theta_{4}$ & -0.435 & (-0.904,\;0.045) & $\rho_{74}$ & 0.488 & (-1.227,\;2.120) & $\omega_{4}$ & 0.710 & (0.127,\;2.096) \\
  $\theta_{5}$ & -0.647 & (-1.212,\;-0.116) & $\rho_{77}$ & -0.449 & (-1.883,\;0.984) & $\omega_{5}$ & 0.281 & (0.023,\;1.161) \\
  $\theta_{6}$ & -0.297 & (-0.815,\;0.192) & $\rho_{88}$ & -0.305 & (-2.004,\;1.464) & $\omega_{6}$ & 0.273 & (0.036,\;0.946) \\
  $\theta_{7}$ & -0.309 & (-0.783,\;0.235) & $\lambda_{1}$ & 0.977 & (0.844,\;1.130) & $\omega_{7}$ & 0.266 & (0.026,\;0.994) \\
  $\theta_{8}$ & 0.004 & (-0.390,\;0.378) & $\lambda_{2}$ & 0.824 & (0.705,\;0.962) & $\omega_{8}$ & 0.430 & (0.047,\;1.469) \\
  $\theta_{9}$ & -0.733 & (-1.253,\;-0.244) & $\lambda_{3}$ & 0.846 & (0.727,\;0.980) & $\omega_{9}$ & 0.264 & (0.023,\;0.999) \\
  $\theta_{10}$ & -0.082 & (-0.501,\;0.324) & $\lambda_{4}$ & 1.491 & (1.236,\;1.839) & $\omega_{10}$ & 0.748 & (0.093,\;2.477) \\
  $\theta_{11}$ & -0.361 & (-0.842,\;0.094) & $\xi_{1}$ & 0.645 & (0.222,\;1.397) & $\omega_{11}$ & 1.006 & (0.109,\;4.092) \\
  $\theta_{12}$ & 0.065 & (-0.414,\;0.563) & $\xi_{2}$ & 0.354 & (0.161,\;0.653) & $\omega_{12}$ & 0.584 & (0.078,\;1.912) \\
  $\theta_{13}$ & 0.436 & (-0.061,\;0.946) & $\xi_{3}$ & 1.336 & (0.542,\;2.712) & $\omega_{13}$ & 0.275 & (0.019,\;1.170) \\
  $\theta_{14}$ & -0.250 & (-0.762,\;0.236) & $\xi_{4}$ & 1.410 & (0.609,\;2.757) & $\omega_{14}$ & 0.431 & (0.033,\;1.690) \\
  $\theta_{15}$ & -0.061 & (-0.358,\;0.202) & $\xi_{5}$ & 0.948 & (0.361,\;1.931) & $\omega_{15}$ & 0.319 & (0.035,\;1.174) \\
  $\theta_{16}$ & -0.486 & (-0.968,\;-0.035) & $\xi_{6}$ & 1.822 & (0.612,\;3.974) & $\omega_{16}$ & 0.844 & (0.088,\;3.034) \\
  $\theta_{17}$ & 0.295 & (-0.183,\;0.785) & $\xi_{7}$ & 0.726 & (0.350,\;1.341) & $\omega_{17}$ & 1.031 & (0.156,\;3.584) \\
  $\theta_{18}$ & 0.076 & (-0.353,\;0.480) & $\xi_{8}$ & 0.893 & (0.373,\;1.783) & $\omega_{18}$ & 0.215 & (0.017,\;0.940) \\
  $\theta_{19}$ & -0.139 & (-0.550,\;0.285) & $\tau_{1,1}$ & -0.207 & (-1.200,\;0.747) & $\omega_{19}$ & 1.698 & (0.276,\;5.733) \\
  $\theta_{20}$ & -0.230 & (-0.713,\;0.276) & $\tau_{2,1}$ & 0.663 & (-0.484,\;1.942) & $\omega_{20}$ & 1.643 & (0.321,\;5.061) \\
  $\theta_{21}$ & -0.121 & (-0.601,\;0.373) & $\tau_{3,1}$ & -0.517 & (-1.222,\;0.164) & $\omega_{21}$ & 0.371 & (0.025,\;1.474) \\
  $\theta_{22}$ & -0.099 & (-0.365,\;0.243) & $\tau_{1,2}$ & -0.244 & (-1.305,\;0.727) & $\omega_{22}$ & 0.556 & (0.058,\;1.919) \\
  $\theta_{23}$ & -0.349 & (-0.812,\;0.201) & $\tau_{2,2}$ & 0.174 & (-1.101,\;1.508) & $\omega_{23}$ & 0.357 & (0.034,\;1.315) \\
  $\theta_{24}$ & 0.062 & (-0.390,\;0.521) & $\tau_{3,2}$ & -0.498 & (-1.236,\;0.178) & $\omega_{24}$ & 3.129 & (0.506,\;10.988) \\
  $\theta_{25}$ & -0.747 & (-1.222,\;-0.247) & $\tau_{1,3}$ & 0.114 & (-0.926,\;1.014) & $\omega_{25}$ & 2.021 & (0.232,\;7.611) \\
  $\theta_{26}$ & 0.322 & (-0.183,\;0.834) & $\tau_{2,3}$ & 0.070 & (-1.329,\;1.490) & $\omega_{26}$ & 2.759 & (0.456,\;8.997) \\
  $\theta_{27}$ & -0.497 & (-1.127,\;0.049) & $\tau_{3,3}$ & -0.414 & (-1.169,\;0.183) & $\omega_{27}$ & 0.972 & (0.064,\;4.015) \\
  $\theta_{28}$ & -0.613 & (-1.192,\;-0.081) & $\tau_{1,4}$ & 0.337 & (-0.699,\;1.617) & $\omega_{28}$ & 1.327 & (0.276,\;4.054) \\
  $\theta_{29}$ & 0.147 & (-0.285,\;0.640) & $\tau_{2,4}$ & -0.907 & (-2.666,\;0.581) & $\omega_{29}$ & 0.958 & (0.186,\;2.965) \\
  $\theta_{30}$ & -0.706 & (-1.341,\;-0.112) & $\tau_{3,4}$ & 1.429 & (0.391,\;2.707) & $\omega_{30}$ & 1.274 & (0.154,\;4.635) \\
\hline
\end{tabular}
\caption{\small Expected A Posteriori (EAP) and 95\% credible interval of selected model parameters (DEG-MFR DP model) from Odors Assessment data.}
\label{Tab_posteriors_Odor}
\end{table}

\begin{table}
\centering
\small
\begin{tabular}{l c c c c c c}
\toprule
 & Odor $j$ & $(t_r(j),\,t_{r'}(j))$ & Preceding odors & $\rho_r\rho_{r'}$ & EAP & 95\% CI \\
\midrule
\multicolumn{7}{@{}l}{\textit{Both raters rate odor $j$ first}} \\
  $LIC_{1,3}$ & 15 & (1, 1) & -- & -0.095 & 0.370 & (0.137, 0.673) \\
  $LIC_{1,4}$ & 15 & (1, 1) & -- & 0.099 & 0.352 & (0.126, 0.657) \\
  $LIC_{1,2}$ & 15 & (1, 1) & -- & 0.044 & 0.410 & (0.135, 0.760) \\
\midrule
\multicolumn{7}{@{}l}{\textit{Both raters rated the same odor immediately before $j$}} \\
  $LIC_{1,3}$ & 22 & (2, 2) & (15, 15) & -0.095 & 0.280 & (-0.119, 0.656) \\
  $LIC_{1,4}$ & 22 & (2, 2) & (15, 15) & 0.099 & 0.326 & (0.001, 0.696) \\
  $LIC_{1,2}$ & 22 & (2, 2) & (15, 15) & 0.044 & 0.334 & (-0.105, 0.756) \\
\midrule
\multicolumn{7}{@{}l}{\textit{The raters rated different odors immediately before $j$}} \\
  $LIC_{1,3}$ & 13 & (7, 7) & (56, 48) & -0.095 & 0.304 & (0.121, 0.556) \\
  $LIC_{1,4}$ & 13 & (7, 7) & (56, 48) & 0.099 & 0.301 & (0.114, 0.570) \\
  $LIC_{1,5}$ & 22 & (2, 7) & (15, 56) & -0.035 & 0.386 & (0.144, 0.691) \\
\bottomrule
\end{tabular}
\caption{\small Expected A Posteriori (EAP) and 95\% credible intervals of the pairwise latent inter-rater consensus $LIC_{rr'}$ (Eq.~(\ref{ICC})) between rater 1 and a sample of other raters from Odor Assessment data, under three rating configurations of a common odor $j$: both raters rate $j$ first ($\tilde\rho_r=\tilde\rho_{r'}=0$); both raters rated the same odor immediately before $j$ ($s_{rr'}(j)=1$); the raters rated different odors immediately before $j$ ($s_{rr'}(j)=0$). Positions $(t_r(j),t_{r'}(j))$ and preceding odors are reported for $r=1$ and $r'$, respectively; $\rho_r\rho_{r'}$ is the EAP of the product of the anchoring effects.}
\label{post_summary_LIC_odors}
\end{table}

\begin{figure}
    \centering

  \includegraphics[scale= 0.75]{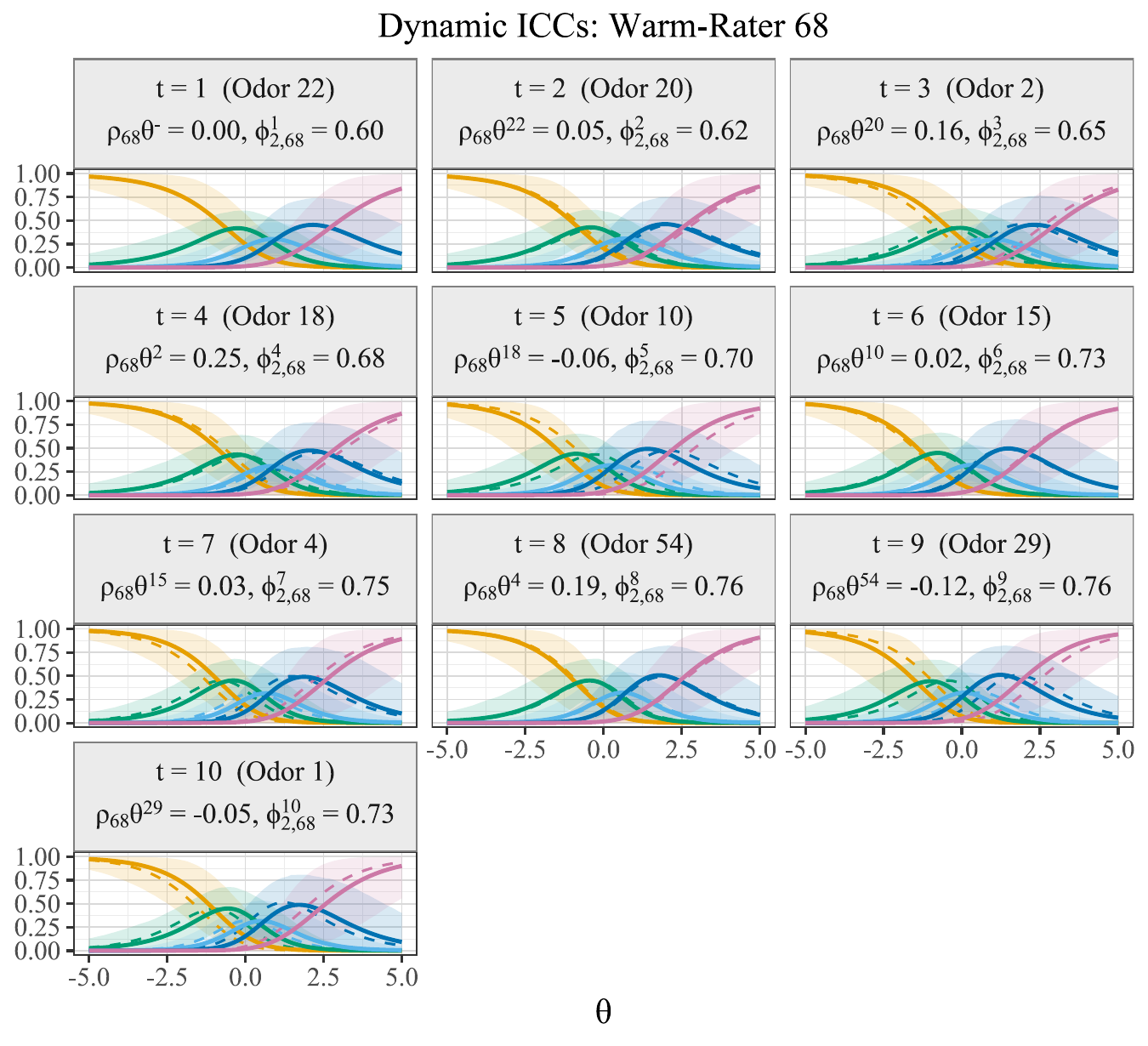}
 \caption{\small Dynamic Item-Rater Characteristic Curve for \textit{warm} and rater 68 across the sequence of odors assessed, from Olfactory Assessment data. The orange, green, aquamarine, blue, and violet lines represent the five ordered response categories; the corresponding solid lines and shaded areas indicate the Expected A Posteriori and the 95\% quantile-based credible interval for the category response curves, respectively. Dashed lines reproduce the Expected A Posteriori curves of the odor rated at the previous position $t-1$, so that each panel shows how the location and shape of the curves change from one odor to the next. Each panel title reports the posterior mean of the anchoring effect $\rho_r\theta^{t-1}$ and of the dynamic joint discrimination $\phi^t_{ir}$. Curves are computed controlling for the odor-specific item residual $\epsilon_{ij}$, so that differences across panels reflect only rater-related effects. }
    \label{ICC_Odor_Dynamic_2}
\end{figure}

\begin{figure}
    \centering
    \includegraphics[scale=0.85]{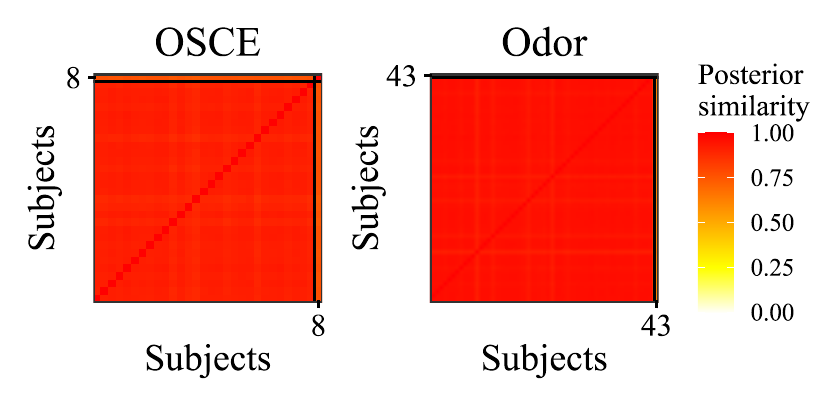}
    \caption{Posterior similarity matrix based on the selected models from OSCE (left panel) and Odor data (right panel).  }
    \label{PSM_real_data_set}
\end{figure}

\section{Further Extensions}\label{sec:extensions}
In this section, we shall present two straightforward extensions of the \gls{degmfr} model. The first extension allows the joint modeling of rater-specific characteristics and drawing population-level inferences. Through the second extension, contextual information might be incorporated within the model to inform the response probability. 
\subsection{Raters Structural Model}\label{A-A}
When the sample of raters is large enough, we might specify a joint distribution over the raters' characteristics and estimate the population-level parameters.
We might jointly model the dependency between their global parameters and let them be correlated \citepApp{Mignemi_Chen_Moustaki_2025}. We shall consider the \gls{degmfr} model in this extension, as the \gls{egmfr} is clearly a nested version. \\
The measurement model specification presented in Section \ref{measurement:dynamic} introduces $K+3$ global parameters for each rater $r \in \mathcal{R}$, namely, rater consistency $\omega_r^2$, rater discrimination $\xi_r$, the anchoring bias $\rho_r$, and the $K$ category step parameters $\tau_{rk}$, for $k=1,\dots,K$, of which $K-1$ are free under the zero-sum constraint. We jointly model the rater's global parameters and assume that $(log(\omega_r^2),\xi_r, \rho_r,\tau_{r1},\dots,\tau_{rK-1})$, $r \in \mathcal{R}$, are i.i.d. from a multivariate normal distribution with mean vector $\boldsymbol{\mu}$ and covariance matrix $\boldsymbol{\Sigma}$: 
\begin{equation}\label{raters_features}
    (log(\omega_r^2),\xi_r, \rho_r,\tau_{r},\dots,\tau_{r, K-1})\;|\;\boldsymbol{\mu}, \boldsymbol{\Sigma} \; \overset{\mathrm{iid}}{\sim} \; N(\boldsymbol{\mu}, \boldsymbol{\Sigma}), \quad r=1,\dots, R.
\end{equation}
Here $\boldsymbol{\mu}=(\mu_1,\dots,\mu_{K+2})$ represents the expected values of the global parameters of the raters in their population; the positive definite matrix $\boldsymbol{\Sigma}=(\sigma_{mn})_{{(K+2)} \times (K+2)}$ captures the variability of these global parameters and their relationships in the population of raters. For instance, the parameter $\mu_1$ indicates the average log-variance of the local biases ($\log\omega_r^2$) across raters, $\mu_2$ is their average discrimination ($\xi_r$), and $\mu_3$ is the average anchoring tendency ($\rho_r$). The parameter $\sigma_{11}^2$ is the variance of $\log\omega_r^2$across raters, and $\sigma_{22}^2$ represents the variance of $\xi_r$. The parameter $\sigma_{12}$ captures the covariance between $\log\omega_r^2$ and $\xi_r$; $\sigma_{13}$ represents the covariance between $\log\omega_r^2$ and $\rho_r$.

\paragraph{Posterior Computation and Results.}
We present in this section an application of the \gls{degmfr} Model with a joint modeling of the rater-specific parameters as detailed in Eq.~(\ref{raters_features}).  For illustrative purposes, we analyze the Olfactory data presented in Section \ref{Odor} only considering \textit{pleasantness}, \textit{familiarity}, and \textit{edibility} for computational tractability. Hence, we note that this is not exactly the same dataset analyzed in Section \ref{Odor}.

The rater structural model detailed in Appendix~\ref{A-A} yields the following posterior of the model parameters:

\begin{equation}\label{joint_posterior_MVR}
\begin{aligned}
p(\boldsymbol{S} \mid \mathbf{y}) \propto & \prod_{r=1}^{R} \prod_{i=1}^I  p\Big(y^1_{ir} \;\Big|\; \theta^{b_r(1)}, \, \epsilon_{i}^{b_r(1)}, \, \beta_{ir}, \, \boldsymbol{\psi_r}, \, \boldsymbol{\gamma}_{i}, \, \eta_r^1,  \, \lambda_{i}, \, \boldsymbol{a}\Big) \\
& \times\prod_{t=2}^{J_r} p\Big(y^t_{ir} \;\Big|\; \theta^{b_r(t)}, \, \theta^{b_r(t-1)}, \, \epsilon_{i}^{b_r(t)}, \, \beta_{ir}, \, \boldsymbol{\psi_r}, \, \boldsymbol{\gamma}_{i}, \, \eta_r^t,  \, \lambda_{i}, \, \boldsymbol{a}\Big) \\
& \times \prod_{j=1}^{N} \left[ p(\theta_j \mid \mu_{z_j}, \sigma^2_{z_j}) \cdot p(z_j \mid \mathbf{\pi}) \right] \times p(\mathbf{\pi} \mid \boldsymbol{v}) \times \left[ \prod_{c=1}^{C-1} p(v_c \mid \alpha, d) \right] \times p(\alpha) \\
& \times \prod_{c=1}^{C} \left[ p(\mu_c) \cdot p(\sigma^2_c) \right] \times p(\boldsymbol{a} \mid b) \cdot p(b) \\
& \times \prod_{i=1}^{I} \left[ p(\lambda_i) \cdot p(\varphi_i^2) \cdot \prod_{j=1}^{N} p(\epsilon_{ij} \mid \varphi_i^2) \right] \\  
& \times \prod_{r=1}^{R} p\left(\eta_{r}^{b_r(1)} \mid {r} \right) \prod_{t=2}^{J_r} p\left(\eta_{r}^{b_r(t)} \mid {r},\rho_r, \theta^{b_r(t-1)}\right) \\
& \times \prod_{i=1}^{I}\prod_{r=1}^{R} p(\beta_{i,r}) \times \prod_{i=1}^{I} p(\boldsymbol{\gamma}_i) \\ 
& \times \prod_{r=1}^{R} p\left(\boldsymbol{\psi}_r \mid \boldsymbol{\mu}_\psi, \boldsymbol{\Sigma}\right) \times p(\boldsymbol{\mu}) \times p\left(\boldsymbol{\Sigma}^{-1}\right), 
\end{aligned}
\end{equation}
Where $\boldsymbol{\psi}_r =(log(\omega_r^2),\xi_r, \rho_r,\tau_{r1},\dots,\tau_{rK-1})$ is the $q$-dimensional vector of rater-specific parameter, with $q=K+2$.\\
We place the following hyperpriors over the mean vector $\boldsymbol{\mu}$ and the precision matrix $\boldsymbol{\Sigma}^{-1}$ of the multivariate normal:
\[
\boldsymbol{\mu} \sim N\!\left(\mathbf{0},\, 100\,\mathbf{I}_q\right), 
\qquad 
\boldsymbol{\Sigma}^{-1} \sim \mathcal{W}\!\left(\mathbf{I}_q,\, q+2\right).
\]
The prior on $\boldsymbol{\mu}$ is highly diffuse, with variance 100 
on each component. For the Wishart, setting the second parameter to $q+2$ is the minimum degrees of freedom, ensuring a proper distribution, yielding a weakly informative prior on $\boldsymbol{\Sigma}$ with prior mean $\mathbb{E}[\boldsymbol{\Sigma}^{-1}] = (q+2)\mathbf{I}_q$.\\
The rater-specific vector $\boldsymbol{\psi}_r$ is updated jointly using 
an adaptive block random-walk Metropolis--Hastings sampler, with full adaptation of both the scale and the proposal covariance matrix. The hyperparameters $\boldsymbol{\mu}_\psi$ and $\boldsymbol{\Sigma}^{-1}$ admit conjugate updates---multivariate normal and Wishart, respectively---which are exploited for exact sampling.

 We report in this paragraph the most interesting results on odor posterior density and rater practice effect P-Spline in Figure \ref{Density_P-Spline_MVN_Odor}, the posterior similarity matrix in Figure \ref{PSM_MVN} and the posterior of the rater features covariance matrix in Table \ref{Table_Cor_Mat_MVN_Odor} with the respective implied correlation matrix given in Figure \ref{Cor_Mat_MVN_Odor}. 
 
 The posterior mean correlation matrix of the rater-specific parameter vector $(\log(\omega_r^2), \xi_r, \rho_r, \tau_{r1}, \tau_{r2}, \tau_{r3})^\top$, based on the covariance matrix $\boldsymbol{\Sigma}$, is displayed in 
Figure~\ref{Cor_Mat_MVN_Odor}, and the covariance values are reported numerically in 
Table~\ref{Table_Cor_Mat_MVN_Odor}. Figure \ref{Cor_Mat_MVN_Odor} displays a negative moderate posterior mean correlation between the anchoring effect $\rho_r$ and the rater-specific category thresholds $\tau_{r1}$ and $\tau_{r2}$, and a small positive posterior mean correlation with the third category threshold $\tau_{r3}$. 
A rather strong positive posterior mean correlation is shown between $\tau_{r1}$ and $\tau_{r2}$, whose credible interval excludes zero,  while the weaker negative correlations of both with $\tau_{r3}$ are less certain. 

Figure \ref{Density_P-Spline_MVN_Odor} shows a positively skewed latent trait posterior density, with six odors clustered into different singletons.

\begin{table}[ht]
\centering
\begin{tabular}{l c c}
\hline
 & EAP & 95\% CI \\
\hline
  $\sigma^2(\rho_r)$ & 0.161 & (0.073,\;0.317) \\
  $\sigma(\rho_r,\,\log\omega_r^2)$ & -0.011 & (-0.112,\;0.073) \\
  $\sigma(\rho_r,\,\xi_r)$ & 0.007 & (-0.054,\;0.077) \\
  $\sigma(\rho_r,\,\tau_{r1})$ & -0.062 & (-0.189,\;0.053) \\
  $\sigma(\rho_r,\,\tau_{r2})$ & -0.039 & (-0.116,\;0.027) \\
  $\sigma(\rho_r,\,\tau_{r3})$ & 0.016 & (-0.054,\;0.090) \\
  $\sigma^2(\log\omega_r^2)$ & 0.161 & (0.067,\;0.331) \\
  $\sigma(\log\omega_r^2,\,\xi_r)$ & -0.031 & (-0.120,\;0.031) \\
  $\sigma(\log\omega_r^2,\,\tau_{r1})$ & 0.014 & (-0.151,\;0.151) \\
  $\sigma(\log\omega_r^2,\,\tau_{r2})$ & -0.005 & (-0.093,\;0.071) \\
  $\sigma(\log\omega_r^2,\,\tau_{r3})$ & -0.014 & (-0.096,\;0.064) \\
  $\sigma^2(\xi_r)$ & 0.115 & (0.057,\;0.212) \\
  $\sigma(\xi_r,\,\tau_{r1})$ & 0.023 & (-0.068,\;0.118) \\
  $\sigma(\xi_r,\,\tau_{r2})$ & 0.011 & (-0.040,\;0.066) \\
  $\sigma(\xi_r,\,\tau_{r3})$ & 0.007 & (-0.047,\;0.065) \\
  $\sigma^2(\tau_{r1})$ & 0.418 & (0.249,\;0.657) \\
  $\sigma(\tau_{r1},\,\tau_{r2})$ & 0.145 & (0.054,\;0.266) \\
  $\sigma(\tau_{r1},\,\tau_{r3})$ & -0.075 & (-0.193,\;0.018) \\
  $\sigma^2(\tau_{r2})$ & 0.146 & (0.079,\;0.246) \\
  $\sigma(\tau_{r2},\,\tau_{r3})$ & -0.036 & (-0.114,\;0.020) \\
  $\sigma^2(\tau_{r3})$ & 0.134 & (0.069,\;0.229) \\
\hline
\end{tabular}
\caption{\small Expected A Posteriori (EAP) and 95\% credible intervals of the variances and covariances of the rater-specific parameters $\psi_r$ (DEG-MFR DP model with multivariate normal rater prior) from Odors Assessment data.}
\label{Table_Cor_Mat_MVN_Odor}
\end{table}

\begin{figure}
    \centering
    \subfigure{\includegraphics[scale= 0.77]{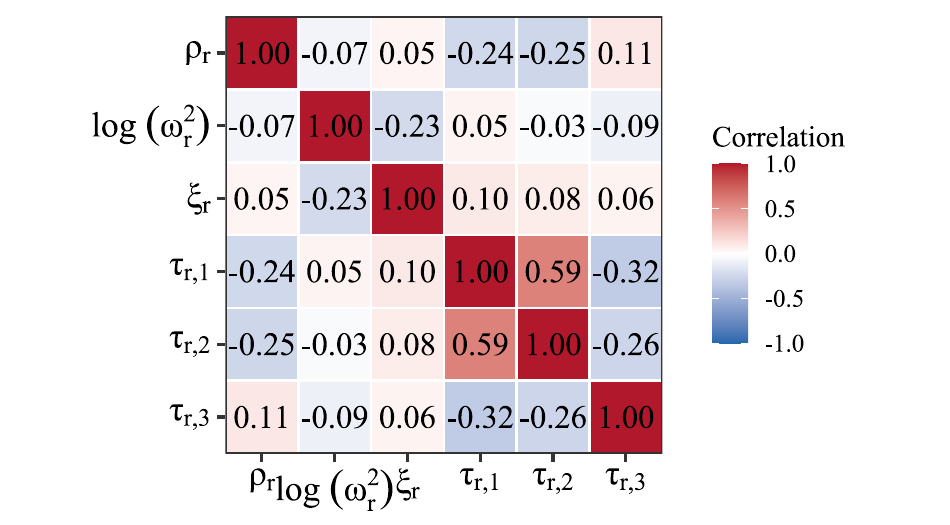}} 
    \caption{\small Posterior mean correlation matrix of the rater-specific parameter vector $(\log(\omega_r^2), \xi_r, \rho_r, \tau_{r1}, \tau_{r2}, \tau_{r3})$, derived from the estimated covariance matrix $\boldsymbol{\Sigma}$.}
    \label{Cor_Mat_MVN_Odor}
\end{figure}

\begin{figure}
    \centering
    \includegraphics[scale=0.75]{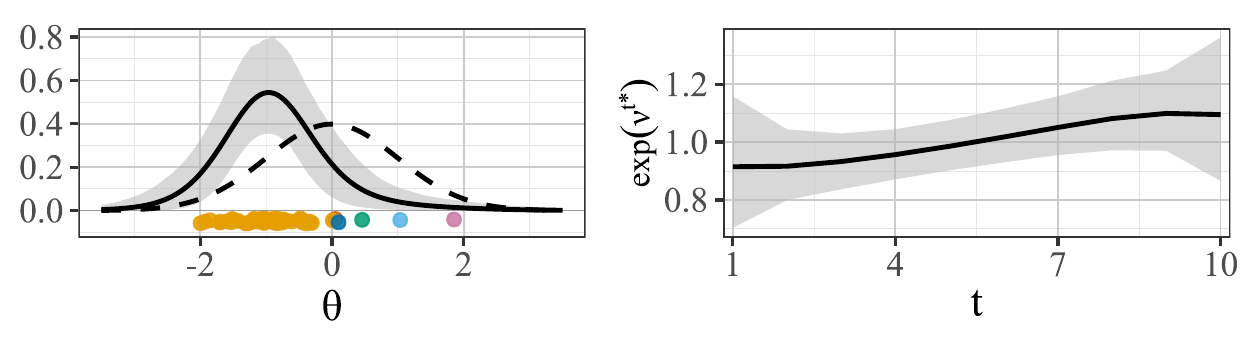}
    \caption{\small Left panel: graphical representation of the estimated latent trait density; the black solid line, the shaded grey and the black dashed line indicate, respectively, the  Expected A Posteriori and the 95\% quantile-based credible interval of the subject latent trait density, and the standard normal density. Right panel: graphical representation of the estimated rater practice effect $\exp{\nu^{t*}}$; the blue solid line and the shaded grey indicate, respectively, the  Expected A Posteriori and the 95\% quantile-based credible interval of the P-Spline. }
    \label{Density_P-Spline_MVN_Odor}
\end{figure}

\begin{figure}
    \centering
    \includegraphics[scale=0.77]{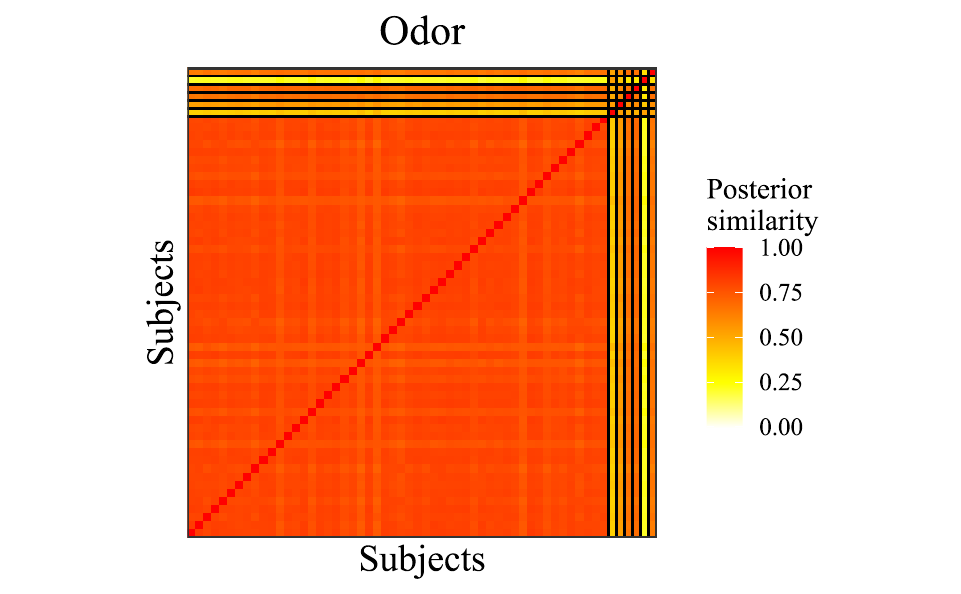}
    \caption{Posterior similarity matrix based on the MVN DEG model from the second Odor data set analyzed in which only \textit{pleasantness}, \textit{familiar}, and \textit{edible} are considered.  }
    \label{PSM_MVN}
\end{figure}

\subsection{Explanatory Response Extension}\label{sec:explanatory}
In this Section, we present an extension of the \gls{degmfr} model that accounts for the effect of item-, rater-, or person-properties or covariates on the probability of the responses. Since this instance of the model relates to the \textit{explanatory approach} introduced by \citeApp{deboeck2004explanatory}, we refer to this extension as the \textit{Explanatory \gls{degmfr} model}. This extension offers a natural modeling solution for performance assessments conducted on onscreen or internet-based platforms, where auxiliary information such as rating times is routinely recorded \citeApp{Eckes_2024}.

\paragraph{Item-Rater Explanatory Model.}
When extra information is available for the set of items, we can estimate its effect on the items' measurement error. We can generalize Eq.~(\ref{item-shift}) to account for the effects of this set of predictors on item-specific residual terms: 
\begin{equation}\label{item-shift_exp}
   \epsilon_{ij}|\varphi_i^2 \overset{\mathrm{iid}}{\sim} N\left(
   \sum_{s=1}^S w_sZ_{is}, \varphi_i^2 \right), \quad j \in \mathcal{J}, \;\; i=1,\dots, I.
\end{equation}
where $w_s$ is the effect of the $s$-th predictor $Z_{is}$ of item $i$. \\
We might similarly generalize Eq.~\eqref{Measurement:difficulty2} to model the effects of a set of rater-related predictors on rater's subject-local biases:
\begin{equation}\label{Measurement:difficulty2_exp}
    \eta_{rj}|\omega^2_r \overset{\mathrm{iid}}{\sim} N\left( \sum_{q=1}^Q u_qM_{rq},\omega^2_r \right), \quad j \in \mathcal{J}_r, \;\; r \in \mathcal{R}.
\end{equation}
where $u_q$ is the effect of the $q$-th predictor $M_{rq}$ of rater $r$. 
\paragraph{Person Explanatory Model.}
We might include subject-level covariates in the model to model the respective effects on the response probability. For each subject $j \in \mathcal{J}$, we let $\theta_j^{'}$ be the subject-related explanatory portion of the model:
\begin{equation}
    \theta_j^{'}= \theta_j + \sum_{p=1}^{P} \kappa_pX_{jp}, 
\end{equation}
where $\kappa_p$ is the effect of the $p$-th predictor $X_{jp}$ of the subject $r$. We assume the subject latent trait $\theta_j$ is distributed as detailed in Eq.~(\ref{Structural_1}). We note that, under this explanatory specification, $\theta_j^{'}$ replaces $\theta_j$ in the measurement model defined in Eq.~(\ref{eq:marg_D}). \\
Computational strategies for posterior sampling and identifiability constraints for the explanatory models presented in this Section are left for future investigation.

\bibliographystyleApp{apacite}
\makeatletter
\begingroup
\def\@extra@b@citeb{}%
\let\NAT@orig@lbibitem\@lbibitem
\def\@lbibitem[#1]#2{\NAT@orig@lbibitem[#1]{#2@App}}%
\makeatother
\bibliographyApp{reference}
\endgroup

\end{document}